\newcommand{\toppref}{\tau_{\text{pref}}}
\newcommand{\metricpref}{d_{\text{pref}}}
\newcommand{\topdisc}{\lambda}
\newcommand{\complemento}{^\cp}
\newcommand{\preftree}{\subseteq}
\newcommand{\strictpreftree}{\subset}
\newcommand{\palt}[1]{\mathbb{P}_{#1}}
\theoremstyle{plain}
\newtheorem{claim}[thm]{Claim}
\theoremstyle{thmC}
\newtheorem{factC}[thm]{Fact}
\newcommand{\mathnm}[1]{#1}
\newcommand{\ident}[1]{\ensuremath{\texorpdfstring{\mathrm{\mathnm{#1}}}{#1}}\xspace}
\newcommand{\domain}[2]{\newcommand{#1}{\ensuremath{\mathbb {#2}}}\xspace}
\newcommand{\wadge}{<_{\mathrm{\mathnm{W}}}}
\newcommand{\wadgeq}{\leq_{\mathrm{\mathnm{W}}}}
\newcommand{\wadgee}{\equiv_{\mathrm{\mathnm{W}}}}
\newcommand{\concat}{\text{\^{}}}
\newcommand{\mysc}[1]{{\sc {\texorpdfstring{\mathnm{#1}}{#1}}}\xspace}
\newcommand{\mso}{\mysc{mso}}
\newcommand{\msou}{\mysc{mso+u}}
\newcommand{\EXPTIME}{\mysc{EXPTime}}
\domain{\A}{A}
\domain{\B}{B}
\domain{\C}{C}
\domain{\D}{D}
\domain{\E}{E}
\domain{\N}{N}
\domain{\Z}{Z}
\domain{\Q}{Q}
\domain{\R}{R}
\domain{\Pri}{P}
\newcommand{\boldclass}[3]{\ensuremath{\mathbf{#1}^{#2}_{#3}}}
\newcommand{\borel}{\ensuremath{\mathcal B}\xspace}
\newcommand{\bsigma}[1]{\boldclass{\Sigma}{0}{#1}}
\newcommand{\bpi}[1]{\boldclass{\Pi}{0}{#1}}
\newcommand{\bdelta}[1]{\boldclass{\Delta}{0}{#1}}
\newcommand{\asigma}[1]{\boldclass{\Sigma}{1}{#1}}
\newcommand{\bc}[1]{\mathcal{BC}({#1})} 
\newcommand{\eqdef}{\stackrel{\text{def}}=}
\renewcommand{\mod}[1]{\allowbreak\mkern10mu({\operator@font mod}\,\,#1)} 
\newcommand{\mathcalsym}[1]{\ensuremath{\mathcal{#1}}\xspace}
\newcommand{\w}{\omega}
\newcommand{\Aa}{\mathcalsym{A}}
\newcommand{\Cc}{\mathcalsym{C}}
\newcommand{\Dd}{\mathcalsym{D}}
\newcommand{\Gg}{\mathcalsym{G}}
\newcommand{\Ll}{\mathcalsym{L}}
\newcommand{\comment}[1]{}
\newcommand{\fun}[3]{\ensuremath{#1\colon #2 \to #3}}
\newcommand{\parfun}[3]{\ensuremath{#1\colon #2 \rightharpoonup #3}}
\newcommand{\dom}{\ident{dom}}
\tikzstyle{obrace} = [draw, thick, decoration={brace, raise=0.0cm}, decorate,
\tikzstyle{lbrace} = [draw, thick, decoration={brace, raise=0.0cm}, decorate,
\tikzstyle{ubrace} = [draw, thick, decoration={brace, mirror, raise=0.0cm}, decorate,
\tikzstyle{rbrace} = [draw, thick, decoration={brace, mirror, raise=0.0cm}, decorate,
\tikzstyle{toL} = [anchor=mid east, flow]
\tikzstyle{toR} = [anchor=mid west, flow]
\tikzstyle{toC} = [align=center, anchor=mid, flow]
\tikzstyle{coord} = [draw,->,very thick]
\tikzstyle{dot} = [draw,shape=circle,fill, minimum size=1mm, inner sep=0pt,outer sep=0pt]
\tikzstyle{flow}  = [inner sep=0cm, node distance=0cm and 0cm]
\tikzstyle{edge}=[draw, thick,->]
\tikzstyle{trnode}=[draw, circle, minimum size=15pt, inner sep=0pt, outer sep=0pt]
\tikzstyle{trport}=[draw, rectangle, minimum size=12pt, inner sep=0pt, outer sep=0pt]
\tikzstyle{tredge}=[draw, -{Latex}]
\tikzstyle{trdots}=[draw, thick, loosely dotted]
\tikzstyle{trlole}=[out=-100, in=-135, loop, looseness=14, -{Latex}]
\tikzstyle{trlore}=[out=-80, in=-45, loop, looseness=14, -{Latex}]
\newcommand{\tikzEvalInt}[2]{\pgfmathparse{int(#2)}{\global\edef#1{\pgfmathresult}}} 
\newcommand{\tikzEvalFloat}[2]{\pgfmathparse{#2}{\global\edef#1{\pgfmathresult}}}
\newcommand{\dL}{\mathtt{{\mathnm{\scriptstyle L}}}} 
\newcommand{\dR}{\mathtt{{\mathnm{\scriptstyle R}}}} 
\newcommand{\hole}{\Box}
\newcommand{\cp}{\mathrm{\mathnm{c}}}
\newcommand{\lang}{\ident{L}}
\newcommand{\val}{\mathsf{val}\xspace}
\newcommand{\dist}{\mathsf{dist}\xspace}
\newcommand{\gameHinfty}{\gameH^\infty}
\newcommand{\one}{1}
\newcommand{\smat}{\ident{matrix}}
\newcommand{\restr}{{\upharpoonright}}
\newcommand{\new}[1]{#1}
\newlength{\hatchspread}
\newlength{\hatchthickness}
\newcommand{\hatchcolor}{}
\tikzset{hatchspread/.code={\setlength{\hatchspread}{#1}},
         hatchthickness/.code={\setlength{\hatchthickness}{#1}},
        hatchcolor/.code={\renewcommand{\hatchcolor}{#1}}}
\tikzset{hatchspread=3pt,
         hatchthickness=0.4pt,
         hatchcolor=black}
\pgfqpoint{\hatchspread}{\hatchspread}}
\newcommand{\trees}{\ident{Tr}}
\newcommand{\when}{\mathsf{when}}
\newcommand{\init}{\ident{\,I}}
\newcommand{\ndiff}[2]{\Dd_{#1}\big(\bsigma{{#2}}\big)}
\newcommand{\gameH}{\mathcal{H}}
\newcommand{\gameHs}{\gameH^{\in, \notin}}
\newcommand{\gameV}{\mathcal{V}}
\newcommand{\boolcomb}[1]{\mathsf{BC}(\bsigma{#1})} 
\newcommand{\monoid}{\sharp}
\newcommand{\Context}{\mathsf{Ct}_A}
\newcommand{\lunghezza}{lt}
\newcommand{\newdef}{type-labelled tree\xspace}
\newcommand{\change}{\mathsf{change}}
\newcommand{\Teq}{\overset{{\scriptscriptstyle\mathrm{Tr}}}{\approx}}
\newcommand{\Ceq}{\overset{{\scriptscriptstyle\mathrm{Ct}}}{\approx}}
\title[Regular tree languages in low levels of the Wadge hierarchy]{Regular tree languages in low levels\texorpdfstring{\\}{} of the Wadge hierarchy}
\thanks{The first, second, and fourth authors were supported by Polish National Science Centre grant no. 2016/21/D/ST6/00491.}
\author[M.~Boja{\'n}czyk]{Miko{\l}aj Boja{\'n}czyk\rsuper{a}}
\address{\lsuper{a}Institute of Informatics, University of Warsaw}
\email{\{bojan,mskrzypczak\}@mimuw.edu.pl}
\author[F.~Cavallari]{Filippo Cavallari\rsuper{b}}
\address{\lsuper{b}University of Turin and University of Lausanne}
\email{filippo.cavallari@unito.it}
\author[T.~Place]{\texorpdfstring{\\}{}Thomas Place\rsuper{c}}
\address{\lsuper{c}LaBRI, Bordeaux University}
\email{thomas.place@labri.fr}
\author[M.~Skrzypczak]{Micha{\l} Skrzypczak\rsuper{a}}
\keywords{regular tree languages, topology, algebraic characterisation}
\begin{document}

\begin{abstract}
In this article we provide effective characterisations of regular languages of infinite trees that belong to the low levels of the Wadge hierarchy. More precisely we prove decidability for each of the finite levels of the hierarchy; for the class of the Boolean combinations of open sets $\boolcomb{1}$ (i.e.~the union of the first $\omega$ levels); and for the Borel class~$\bdelta{2}$ (i.e.~for the union of the first $\omega_1$ levels).
\end{abstract}

\maketitle

\section{Introduction}%
\label{sec:introduction}

The space of all infinite trees over a~finite alphabet is homeomorphic to the Cantor space. Therefore, it makes sense to ask if a~language of infinite trees --- in our setting, we are interested in regular ones --- is for instance open, Borel, or of specific descriptive set theoretical complexity. As witnessed by a~number of conjectures and results~\cite{skurczynski_borel_infinite,murlak_wadge,murlak_final_game,walukiewicz_buchi,cavallari_gdelta}, topologically defined classes often have natural automata counterparts. For instance, in the case of $\omega$\=/words, the structure of parity \emph{deterministic} automata (defined in terms of Wagner hierarchy) is strictly connected to the Wadge hierarchy, see~\cite{wagner_hierarchy}. In the case of regular tree languages that are Borel, there is a~strong connection between the \emph{Borel rank} and priorities used by \emph{weak alternating automata} (see~\cite{duparc_murlak_operations} and~\cite{cavallari_gdelta}).

Algorithms that determine if a~regular language belongs to a~subclass~$\Ll$ of regular languages are known as \emph{effective characterisations}. Typically, an~effective characterisation comes with a~structural description of automata (or algebras) that recognise languages from~$\Ll$. The seminal example is Sch\"utzenberger's Theorem~\cite{schutzenberger_fo}, which says that a~regular language is star\=/free if and only if it is recognised by an~aperiodic semigroup. For other examples about finite words, see the survey~\cite{Place:2015ke}, which discusses effective characterisations for the low levels of the quantifier alternation hierarchy. For examples on $\omega$\=/words, including topologically motivated ones, see~\cite{perrin_pin_words}. For examples on finite trees, see e.g.~\cite{bojanczyk_forest} or a~survey~\cite{bojanczyk_tree_algebra}.

Most of the known effective characterisations speak about languages of words (finite or infinite) or finite trees. The case of regular languages of infinite trees seems to be much more difficult, mainly because of the inherent non\=/determinism needed to recognise these languages~\cite{blumensath_recognisability,bilkowski_unambiguity}. Thus, the known examples of effective characterisations are usually limited either to simple classes of sets (e.g.~open sets~\cite{KusWil02,igwwollic}) or to restricted classes of languages given as the input (e.g.~recognised by deterministic automata~\cite{murlak_phd}).

In this work we focus on the very low levels of the Borel hierarchy: the~class~$\boolcomb{1}$ of Boolean combinations of open sets; and the self\=/dual class~$\bdelta{2}$ at the second level of the hierarchy. We use algebraic methods for infinite trees, i.e.~our characterisations are defined by equations which must be satisfied by the syntactic algebra of a~language.


This paper continues a~line of work aimed at understanding the algebraic theory of regular languages of infinite trees~\cite{blumensath_recognisability,bilkowski_unambiguity,bojanczyk_ef_trees}. The obtained results show that even simple algebras (i.e.~not strong enough to distinguish all regular languages or not \emph{complete}, see Subsection~\ref{ssec:limitations}) can be adequate for characterising classes of languages that are sufficiently simple. This opens the possibility that a~bit more complex algebraic structure (but a~priori not complete) might be enough for the successive levels of the Borel hierarchy, like~$\bdelta{3}$.

\paragraph*{\bf Contribution of this article}

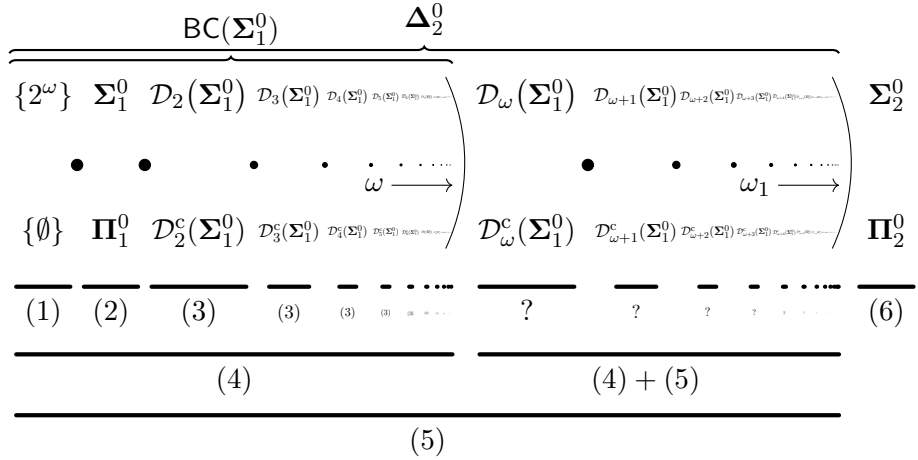
\begin{figure}
\centering
\begin{tikzpicture}[scale=0.9]
\tikzstyle{decres}=[draw, ultra thick, cap=round]
\tikzstyle{declab}=[toC]

\tikzEvalFloat{\decshi}{0.8}

\tikzEvalFloat{\yS}{1.0}
\tikzEvalFloat{\yM}{0.0}
\tikzEvalFloat{\ss}{0.5}

\tikzEvalFloat{\x}{0}
\node[toC] at (\x,+\yS) {$\{2^\w\}$};
\node[toC] at (\x,-\yS) {$\{\emptyset\}$};

\draw[decres] (\x-0.4, -\yS-\decshi) -- ++(0.8, 0.0);
\node[declab] at (\x, -\yS-\decshi-0.4) {$(1)$};

\tikzEvalFloat{\x}{\x+\ss}
\node[dot, scale=1.5] at (\x,\yM) {};
\tikzEvalFloat{\x}{\x+\ss}
\node[toC] at (\x,+\yS) {$\bsigma{1}$};
\node[toC] at (\x,-\yS) {$\bpi{1}$};

\draw[decres] (\x-0.4, -\yS-\decshi) -- ++(0.8, 0.0);
\node[declab] at (\x, -\yS-\decshi-0.4) {$(2)$};

\tikzEvalFloat{\x}{\x+\ss}

\tikzEvalFloat{\scc}{1}

\foreach \i in {0,...,9} {
	\node[dot, scale=\scc*1.5] at (\x,\yM) {};
	\tikzEvalFloat{\x}{\x+\scc*\ss*1.6}
	\tikzEvalInt{\ind}{\i+2}
	\node[toC, scale=\scc] at (\x,+\yS) {$\ndiff{\ind}{1}$};
	\node[toC, scale=\scc] at (\x,-\yS) {${\mathcal{D}_{\ind}\complemento}(\bsigma{1})$};

	\draw[decres] (\x-0.7*\scc*\scc, -\yS-\decshi) -- ++(1.4*\scc*\scc, 0.0);
	\node[declab, scale=\scc] at (\x, -\yS-\decshi-0.4) {$(3)$};

	\tikzEvalFloat{\scc}{\scc*0.65}
	\tikzEvalFloat{\x}{\x+\scc*\ss*2.5}
}

\tikzEvalFloat{\afterBC}{\x}

\tikzEvalFloat{\scc}{1}

\tikzEvalFloat{\x}{\x+\ss*2.2}
\node[toC, scale=\scc] at (\x,+\yS) {$\ndiff{\omega}{1}$};
\node[toC, scale=\scc] at (\x,-\yS) {${\mathcal{D}_{\omega}\complemento}(\bsigma{1})$};

\draw[decres] (\x-0.7*\scc*\scc, -\yS-\decshi) -- ++(1.4*\scc*\scc, 0.0);
\node[declab, scale=\scc] at (\x, -\yS-\decshi-0.4) {$?$};

\tikzEvalFloat{\x}{\x+\ss*1.8}

\tikzEvalFloat{\scc}{\scc*0.65}

\foreach \i in {0,...,11} {
	\node[dot, scale=\scc*1.5/0.65] at (\x,\yM) {};
	\tikzEvalFloat{\x}{\x+\scc*\ss*2.2}
	\tikzEvalInt{\ind}{\i+1}
	\node[toC, scale=\scc] at (\x,+\yS) {$\ndiff{\omega+\ind}{1}$};
	\node[toC, scale=\scc] at (\x,-\yS) {${\mathcal{D}_{\omega+\ind}\complemento}(\bsigma{1})$};

	\draw[decres] (\x-0.7*\scc*\scc, -\yS-\decshi) -- ++(1.4*\scc*\scc, 0.0);
	\node[declab, scale=\scc] at (\x, -\yS-\decshi-0.4) {$?$};
	\tikzEvalFloat{\scc}{\scc*0.65}
	\tikzEvalFloat{\x}{\x+\scc*\ss*2.8}
}

\tikzEvalFloat{\afterDL}{\x}

\tikzEvalFloat{\x}{\x+\ss+0.2}
\node[toC] at (\x,+\yS) {$\bsigma{2}$};
\node[toC] at (\x,-\yS) {$\bpi{2}$};

\draw[decres] (\x-0.4, -\yS-\decshi) -- ++(0.8, 0.0);
\node[declab] at (\x, -\yS-\decshi-0.4) {$(6)$};


\draw[decres] (-0.4, -\yS-\decshi-1.0) -- (\afterBC, -\yS-\decshi-1.0);
\node[declab] at (-0.2+\afterBC*0.5, -\yS-\decshi-1.4) {$(4)$};

\draw[decres] (\afterBC+0.4, -\yS-\decshi-1.0) -- (\afterDL, -\yS-\decshi-1.0);
\node[declab] at (\afterBC*0.5+\afterDL*0.5, -\yS-\decshi-1.4) {$(4)+(5)$};

\draw[decres] (-0.4, -\yS-\decshi-1.9) -- (\afterDL, -\yS-\decshi-1.9);
\node[declab] at (-0.2+\afterDL*0.5, -\yS-\decshi-2.3) {$(5)$};

\draw (-0.5, \yS+0.5) edge[obrace] node {$\boolcomb{1}$} (\afterBC, \yS+0.5);
\draw (-0.5, \yS+0.65) edge[obrace] node {$\bdelta{2}$} (\afterDL, \yS+0.65);

\draw (\afterBC-0.1, +\yS+0.3) arc (25:-25:\yS*3);
\draw (\afterDL-0.1, +\yS+0.3) arc (25:-25:\yS*3);

\node[toL] at (\afterBC-1.0, -0.3) {$\omega$};
\draw[->] (\afterBC-0.9, -0.3) -- ++(0.9, 0.0);

\node[toL] at (\afterDL-1.0, -0.3) {$\omega_1$};
\draw[->] (\afterDL-0.9, -0.3) -- ++(0.9, 0.0);

\end{tikzpicture}
\caption{The first $\omega_1+1$ levels of the Wadge hierarchy for infinite trees; together with their decidability results. The self\=/dual classes are depicted by $\bullet$, they are formed by the intersection of the two consecutive non self\=/dual classes.}%
\label{fig:wadge-precise}
\end{figure}

Consider a~class $\Gamma$ of languages (e.g.~the class of open sets~$\bsigma{1}$). Then, an~\emph{effective characterisation} of $\Gamma$ is an algorithm for the following decision problem:

\begin{prob}[Effective Characterisation of $\Gamma$]%
\label{decidabilityproblem}
Given a~representation of a~regular language~$L$, decide if $L \in \Gamma$.
\end{prob}

As explained above, there are multiple results providing effective characterisations for various classes of languages. In this article we focus on the classes of the Wadge hierarchy inside~$\bdelta{2}$, see Figure~\ref{fig:wadge-precise} (the relevant technical notions are introduced in Section~\ref{sec:prelim}). Notice that, since regular languages are effectively closed under complement, an effective characterisation of $\Gamma$ provides at the same time an effective characterisation of $\Gamma\complemento$ and $\Gamma\cap\Gamma\complemento$. Thus, we will focus on non self\=/dual classes on one \emph{side} of the hierarchy. Since $\trees_A$ is homeomorphic to $2^\omega$ (see Fact~\ref{ft:trees-are-Cantor}) all the results from Subsections~\ref{ssec:difference} and~\ref{ssec:wadge} apply. Therefore, the Wadge hierarchy over $\trees_A$ introduces the following classes of sets:
\begin{enumerate}
\item The class $\{\trees_A\}$. A~language $L$ belongs to $\{\trees_A\}$ if and only if $L=\trees_A$, thus solving the effective characterisation for that class boils down to checking universality of $L$, which reduces to non\=/emptiness of the complement of $L$~\cite{rabin_s2s}.
\item The class of open sets $\bsigma{1}$. That characterisation follows from~\cite{KusWil02,igwwollic}.
\item The classes of the difference hierarchy $\ndiff{n}{1}$ for $2\leq n<\omega$. These classes are characterised in this paper, see Theorem~\ref{thm:decidforsinglefinitediff}.
\item The class $\boolcomb{1}=\bigcup_{n\in\omega}\ndiff{n}{1}$ of Boolean combinations of open sets. This is the main contribution of this paper, see Theorem~\ref{thm:mainforBC1}.
\item The self\=/dual class $\bdelta{2}=\bigcup_{\xi<\omega_1}\ndiff{\xi}{1}$ of the Borel hierarchy. This result was claimed in~\cite{michalewski_delta}, however the arguments there contain a~flow, see discussion in Section~\ref{sec:delta-case}. In this paper we provide a~complete argument, see Theorem~\ref{thm:delta-char}.
\item The class $\bsigma{2}$ from the second level of the Borel hierarchy. This class seems to be out of reach of the algebras considered in this paper, see Subsection~\ref{ssec:limitations}. However, an effective characterisation for that class exists, see~\cite{cavallari_gdelta}.
\end{enumerate}

\noindent
What remains open is how to characterise the specific classes $\ndiff{\xi}{1}$ for $\omega\leq \xi<\omega_1$. Notice that there are only countably many regular languages and therefore there must exist $\xi_0<\omega_1$ such that no regular language belongs to $\ndiff{\xi}{1}$ for $\xi\geq \xi_0$. However, the value of $\xi_0$ is not known. Duparc and Murlak~\cite{duparc_murlak_operations} have proved that there exist regular languages in any Wadge degree with Wadge rank less than $\omega^\omega$ (i.e.~$\xi_0\geq\omega^\omega$). We do not know if $\xi_0=\omega^\omega$, even if this is a~quite reasonable conjecture.

\paragraph*{\bf Related work}

First, a~series of works~\cite{niwinski_deterministic,niwinski_gap,murlak_wadge,murlak_final_game} provide effective characterisations for almost all natural classes when the input is restricted to deterministic automata or their dualised variant --- the so\=/called game automata. These results are based on the \emph{pattern method} saying that the language recognised by a~deterministic automaton is complex if and only if the automaton itself contains a~\emph{complex pattern}. Unfortunately, there is no known method allowing to extend these techniques to languages involving non\=/determinism.

Recently, certain new techniques have been developed that show how to deal with the non\=/determinism of regular languages of infinite trees. The first result of this kind is the reduction of the general Rabin\=/Mostowski index problem to a~certain boundedness problem for cost automata~\cite{loding_index_to_bounds}. Unfortunately, the latter problem is not known to be decidable. However, the game approach used in the above reduction turned out to work for the lowest indices~\cite{colcombet_weak}. By adopting these techniques, the authors of~\cite{walukiewicz_buchi} provided a~characterisation of Borel sets among languages recognisable by B\"uchi automata. A~similar approach used in~\cite{cavallari_gdelta} provided an~effective characterisation of the Borel class $\bpi{2}$ among all regular tree languages. An effective (but not algebraic) characterisation of the class $\bdelta{2}$ follows directly from that result, however it does not solve the more difficult case of~$\boolcomb{1}$.

The paper is based on the conference papers~\cite{bojanczyk_boolean} and~\cite{michalewski_delta}, see Conclusions for a~discussion on relations between the new paper and the original ones.

\paragraph*{\bf Structure} The paper is structured as follows. In Section~\ref{sec:prelim} we recall some basic notions about words and trees and we set the notation used throughout the article, \new{with a~special emphasis on the topological hierarchies involved}. In Section~\ref{sec:finite-game} we describe a~topological game that will be used to obtain effective characterisations of the levels of the Wadge hierarchy up to $\omega$. \new{Subsection~\ref{ssec:finite-levels} provides an~easy application of the game to prove decidability of each of the first $\omega$ levels of the Wadge hierarchy (i.e.~all the finite levels). Subsection~\ref{ssec:inf-game} extends the game to an~infinite variant used later to characterise $\bdelta{2}$.} In Section~\ref{sec:algebra} we present the algebraic structure used in this paper to represent regular languages of infinite trees. In Section~\ref{sec:main-thm} we state Theorem~\ref{thm:mainforBC1} characterising the class~$\boolcomb{1}$ in terms of equations defined in the syntactic algebra. The proof of this theorem is spread across Subsections~\ref{ssec:three-to-four},~\ref{ssec:four-to-three},~\ref{ssec:limit-bounded} and~\ref{ssec:limit-unbounded}. Finally, in Section~\ref{sec:delta-case} we state and prove Theorem~\ref{thm:delta-char} that uses the algebraic tools from Theorem~\ref{thm:mainforBC1} to characterise the Borel class~$\bdelta{2}$.

\section{Basic notions}%
\label{sec:prelim}

If $f$ is a~function, by $\dom(f)$ we denote the domain of $f$. We denote by $\omega$ the first infinite ordinal and by $\omega_1$ the first uncountable ordinal.

\subsection{Words and trees}

Consider a~non\=/empty set $A$. We call $A$ an~\emph{alphabet} if $A$ is finite. Let $A^n$ be the space of the functions of the form $\fun{s}{\{0,\ldots,n{-}1\}}{A}$. Such a~function can be represented as a~word
$s=(s(0),\ldots, s(n{-}1))=s_0\cdots s_{n-1}$ over $A$. If $s=s_0s_1\cdots s_{n-1}$ then we say that $n$ is the \emph{length} of $s$, and we denote it by $\lunghezza(s)$. The empty word is denoted by $\epsilon$, i.e.~$\lunghezza(\epsilon)=0$ and $A^0=\{\epsilon\}$. By $A^{\leq n}$ we denote the set of words over $A$ of length at most $n$, i.e.~$A^{\leq n} \eqdef A^0 \cup A^1 \cup \cdots \cup A^n$. We denote by $A^\ast$ the set of all the finite words over $A$: $A^\ast \eqdef \bigcup_{n \in \omega} A^n$. By $A^\omega$ we denote the set of infinite words over the alphabet $A$, formally the elements of this space are functions of the form $\fun{\alpha}{\omega}{A}$. Such a~function can be represented as an~infinite sequence
$(\alpha(0),\alpha(1),\alpha(2),\ldots)  = \alpha_0\alpha_1\alpha_2 \cdots$ Finally, we set $A^{\leq\w} \eqdef A^\ast \cup A^\omega$.

If $\alpha \in A^{\leq\omega}$ and $n \in \omega$, we define $\alpha\restr n\eqdef \alpha_0 \alpha_1\cdots \alpha_{n-1} \in A^n$ (if $\alpha$ is finite this definition makes sense only if $n \leq \lunghezza(\alpha)$). We say that $s \in A^\ast$ is a~\emph{prefix} of $\alpha \in A^{\leq\omega}$ if $s = \alpha\restr n$ for some $n$; in symbols $s \preceq \alpha$. We write $s\prec \alpha$ if $s \preceq \alpha$ but $s \neq \alpha$. The \emph{concatenation} of $s,t \in A^\ast$, where $s=s_0 \cdots s_{n-1}$ and $t=t_0 \cdots t_{m-1}$, is the word $s\concat t= st= s_0 \cdots s_{n-1}t_0 \cdots t_{m-1}$. We can also consider the concatenation $s\concat \alpha$ of a~finite word $s$ and an~infinite word $\alpha$ defined in the obvious way: $s\concat \alpha = s_0 s_1 s_2 \cdots s_{\lunghezza(s)-1}\alpha_0\alpha_1\cdots$

Now let us generalise these notions to trees. In this article, we focus on trees with binary branching, where the two \emph{directions} are \emph{left} $\dL$ and \emph{right} $\dR$. A~\emph{partial tree} over an~alphabet $A$ is a~partial function $\parfun{t}{{\{\dL,\dR\}}^\ast}{A}$ with a~non\=/empty prefix\=/closed domain $\dom(t)$ (i.e.~if $s \in \dom(t)$ and $s' \preceq s$ then $s' \in \dom(t)$). A~node $u\in\dom(t)$ is either an~\emph{internal node} (i.e.~both $u\concat\dL$ and $u\concat\dR$ belong to $\dom(t)$), a~\emph{unary node} (i.e.~exactly one of $u\concat\dL$ and $u\concat\dR$ belongs to $\dom(t)$), or a~\emph{leaf} (i.e.~none of $u\concat \dL$ and $u \concat \dR$ belongs to $\dom(t)$). For the sake of readability, we write $u\in t$ to denote that $u\in\dom(t)$ is a~node of $t$. The empty sequence~$\epsilon$ belongs to every partial tree and it is called the \emph{root} of a~tree. A~\emph{branch} of a~partial tree $t$ is a~word $\pi$ such that $\pi\restr n \in t $ for any $n \leq \lunghezza(\pi)$ if $\pi$ is finite (resp.\ for any $n \in \omega$ if $\pi$ is infinite). An~infinite branch of a~partial tree $t$ is called a~\emph{path} of $t$. A~node $u$ is \emph{on a~branch} (finite or infinite) $\pi$ if it is a~prefix of $\pi$, i.e.~if $u \preceq \pi$. A~partial tree $t$ is \emph{finite} if its domain $\dom(t)$ is finite. A~\emph{tree} is a~partial tree $t$ with $\dom(t) = {\{\dL, \dR\}}^\ast$ (i.e.~a~\emph{complete tree}). The set of all trees over an~alphabet $A$ is denoted $\trees_A$, that is $\trees_A \eqdef\{ t \mid \fun{t}{{\{ \dL, \dR\}}^\ast}{A}\}$.

If $p$ and $t$ are partial trees, we say that $p$ is a~\emph{prefix} of $t$, and we denote it by $p \preftree t$, if $\dom(p) \subseteq \dom(t)$ and $p(u)=t(u)$ for any $u \in \dom(p)$. We write $p \strictpreftree t$ if $p \preftree t$ but $p \neq t$. Finally, if $t$ is a~partial tree and $u$ is a~node of $t$, by $t.u$ we indicate the partial tree $t$ \emph{truncated} in $u$: for any $w$ such that $uw \in \dom(t)$, we have $t.u(w)=t(uw)$.

\new{
For $d>0$ a \emph{$d$\=/prefix} of a~tree $t$ is the prefix $p\eqdef t\restr{\{\dL,\dR\}}^{< d}$, i.e.~the prefix of $t$ containing all the nodes at depths smaller than $d$. For instance, the $1$\=/prefix of $t$ consists of the root of $t$ only.}

A~subset $L \subseteq \trees_A$ is a~\emph{tree language}. \emph{Regular} tree languages are the ones recognised by parity non\=/deterministic automata or, equivalently, definable in Monadic Second\=/order Logic (for this equivalence see for example~\cite{gradel_automata}).

\subsection{Polish spaces}%
\label{ssec:polish}
The subsequent subsections recall the topological notions coming from Descriptive Set Theory that we will use throughout the article. We do not aim for completeness, for more details we refer the reader to~\cite{kechris_descriptive}. In the first subsection we define Polish spaces, that are the main objects studied in Descriptive Set Theory. In the remaining section we introduce the main hierarchies usually considered for Polish spaces, i.e.~the Borel, difference, and Wadge hierarchies.

We denote a~topological space by $(X,\tau)$, where $X$ is a~non\=/empty set and $\tau$ is a~family of subsets of $X$ called \emph{open sets}. If $\tau$ is understood from the context we write just $X$ and suppress $\tau$ from the notation. We say that $X$ is a~\emph{Polish space} if $\tau$ is completely metrizable (i.e.~there exists a~complete metric on $X$ that generates $\tau$) and separable (i.e.~there exists a~countable dense subset of $X$).

If a~space $X$ is known from the context and $A\subseteq X$ then by $A\complemento\eqdef X\setminus A$ we denote the complement of $A$ in $X$. Similarly, if $\Gamma$ is a~family of subsets of $X$ then $\Gamma\complemento\eqdef\{A\complemento\mid A\in\Gamma\}$.

Consider a~non\=/empty at most countable set $B$. \new{We will now introduce the so\=/called \emph{prefix topologies} on the spaces $B^\w$ and $\trees_B$ by providing explicitly their bases. However, it is worth noticing that these topologies coincide with the \emph{Tychonoff topologies} when $B$ is considered as a~discrete topological space, see~\cite[Section~2.3]{engelking_topology}}.

\new{The \emph{prefix topology} on the space of infinite words $B^\omega$ over $B$ is} generated by the basic open sets of the form:
\[ N_s = \{ \alpha \in B^\w \mid s \prec \alpha\},\]
with $s \in B^\ast$. When $B= \{0,1\}$, we obtain the \emph{Cantor space}, denoted by $2^\w$. When $B= \omega$, we obtain the \emph{Baire space}, denoted by $\w^\w$. Every space of the form $B^\w$ with the prefix topology is Polish. It is easy to check that the prefix topology is completely metrizable: the metric  $d(\alpha, \beta) = 2^{-n}$, where $n$ is the minimum index such that $\alpha(n) \neq \beta(n)$ \new{(and $d(\alpha,\beta)=0$ if $\alpha=\beta$)}, is complete and it generates the prefix topology. Moreover, if we fix a~symbol $c \in B$, the set
\[ D= \{ s\concat ccc\cdots \mid s \in B^\ast\}\]
is countable (since $B^\ast$ is countable) and dense, so $B^\w$ is separable. In particular, the Cantor space and the Baire space are Polish spaces.

The prefix topology can easily be generalised to trees $\trees_A$, \new{with} basic open sets of the form:
\[ N_p = \{t \in \trees_A \mid p \strictpreftree t \},\]
where $p$ is a~finite partial tree. Every $N_p$ is actually a~clopen (i.e.~both open and closed). We denote this topology by $\toppref$. The topology~$\toppref$ is generated by several metrics. The usual metric considered to generate $\toppref$ is $\metricpref(t_1,t_2) = 2^{-n}$ where $n$ is the minimum length of a~node $u$ such that $t_1(u) \neq t_2(u)$ \new{(and $\metricpref(t_1,t_2)=0$ for $t_1=t_2$)}. Each open ball of $\metricpref$ is a~basic clopen set $N_p$ for a~certain finite partial tree~$p$.

Notice that the metric $\metricpref$ satisfies the following strengthening of the triangle inequality:
\[\metricpref(x,z)\leq \max \big(\metricpref(x,y),\metricpref(y,z)\big).\]
Such a~metric is called an~\emph{ultrametric}, see~\cite[Exercise~2.2] {kechris_descriptive}. This property makes $\metricpref$ too rigid for our way of choosing optimal witnesses (see Definition~\ref{def_optimalstrategies} of optimal strategy trees). Therefore, we will also consider a~different metric, denoted by $\topdisc$ and called the \emph{discounted distance}. This metric also generates $\toppref$ but has a~less intuitive family of open balls. Fix some enumeration $u_0, u_1, \ldots$ of all the nodes in ${\{\dL,\dR\}}^\ast$. Given two trees $t_1$ and $t_2$, for any node $u$, define $\dist(t_1(u),t_2(u))= 0$ if $t_1(u) = t_2(u)$, $1$ in the other case. Let
\[\topdisc(t_1,t_2) \eqdef \sum_{n \geq 0} \frac{1}{2^n}\cdot \dist\big(t_1(u_n),t_2(u_n)\big).\]

\begin{fact}
Regardless of the enumeration $(u_0, u_1,\ldots)$, the prefix and discounted distances yield the same topology.
\end{fact}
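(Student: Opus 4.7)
The plan is to show, by a~direct ball\=/inclusion argument, that the identity map is a~homeomorphism between the metric spaces $(\trees_A, \metricpref)$ and $(\trees_A, \topdisc)$. I will fix an~arbitrary tree $t \in \trees_A$ and the enumeration $(u_n)_{n \in \omega}$ from the definition of $\topdisc$, and verify that every $\metricpref$\=/ball around $t$ contains a~$\topdisc$\=/ball around $t$, and vice versa.

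For the inclusion $B_{\metricpref}(t,\delta) \subseteq B_{\topdisc}(t,\epsilon)$, given $\epsilon > 0$ I would pick $N$ with $2^{-(N-1)} < \epsilon$ and set $D \eqdef \max\{\lunghezza(u_n) : n < N\}$. If $\metricpref(t,t') \leq 2^{-(D+1)}$, then $t$ and $t'$ agree on every node of length at most $D$, hence on all of $u_0,\ldots,u_{N-1}$. The discounted distance between them is therefore bounded by the geometric tail $\sum_{n \geq N} 2^{-n} = 2^{-(N-1)} < \epsilon$, so $\delta \eqdef 2^{-(D+1)}$ suffices.

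For the reverse inclusion $B_{\topdisc}(t,\delta) \subseteq B_{\metricpref}(t,\epsilon)$, given $\epsilon > 0$ I would choose $M$ with $2^{-M} \leq \epsilon$. There are only finitely many nodes of length at most $M-1$, and each of them appears at some index in the enumeration; let $K$ be the largest such index and set $\delta \eqdef 2^{-K}$. For any $t'$ with $\topdisc(t,t') < \delta$, no summand $\tfrac{1}{2^n}\dist(t(u_n),t'(u_n))$ with $n \leq K$ can be nonzero: such a~summand would equal $2^{-n} \geq 2^{-K} = \delta$, contradicting the bound on $\topdisc(t,t')$. Hence $t$ and $t'$ agree on $u_0,\ldots,u_K$, which includes every node of length at most $M-1$, giving $\metricpref(t,t') \leq 2^{-M} \leq \epsilon$.

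The argument is a~routine combination of geometric\=/tail estimates with the observation that any fixed finite set of nodes occurs among the first finitely many indices of any enumeration. There is no real obstacle; the role of this observation is precisely what makes the resulting topology insensitive to the chosen enumeration, which is the content of the fact.
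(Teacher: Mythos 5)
Your proof is correct: both ball inclusions are verified with the right quantifier structure, the geometric-tail bound $\sum_{n\geq N}2^{-n}=2^{-(N-1)}$ is right, and the second direction correctly exploits that $\dist$ is $\{0,1\}$-valued, so any nonzero summand with index $n\leq K$ already contributes at least $2^{-K}$. The route is genuinely different from the paper's, though. The paper disposes of the fact in one line by observing that $\toppref$ is the Tychonoff (product) topology on $A^{{\{\dL,\dR\}}^{\ast}}$ with $A$ discrete, and that $\topdisc$ is precisely the standard product metric for that topology; the enumeration-independence is then absorbed into the general fact that any product metric induces the product topology. Your version trades that appeal to general topology for explicit estimates, which buys two things: it is self-contained (no need to know or cite the product-metric theorem), and it isolates exactly where the independence from the enumeration enters --- namely, that any fixed finite set of nodes occurs within some finite initial segment of whichever enumeration is chosen. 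The cost is length; the paper's argument also scales immediately to any countable product of discrete spaces, whereas yours is written for trees specifically (though it would adapt verbatim). Either proof is acceptable here.
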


\begin{proof}
It is enough to observe that $\toppref$ is exactly the product topology obtained by starting from the discrete topology and the discounted distance is exactly the product metric.
\end{proof}

\begin{fact}%
\label{ft:trees-are-Cantor}
$\trees_A$ with the topology $\toppref$ is a~Polish space homeomorphic to the Cantor space~$2^\omega$.
\end{fact}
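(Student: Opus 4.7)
The plan is to apply Brouwer's topological characterisation of the Cantor space: any non-empty, compact, metrizable, perfect, zero-dimensional space is homeomorphic to $2^\omega$. Once these four properties are established for $(\trees_A, \toppref)$, both conclusions of the fact follow simultaneously, since compact metric spaces are automatically separable and completely metrizable, hence Polish. Throughout, one should note that the claim is meaningful only when $|A|\geq 2$, the standing assumption of the paper (if $|A|=1$ then $\trees_A$ is a singleton).

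For metrizability, the metric $\metricpref$ (or equivalently $\topdisc$) already generates $\toppref$, as recalled in Subsection~\ref{ssec:polish}. For compactness, I would identify $\trees_A$ with the product $A^{\{\dL,\dR\}^\ast}$ carrying the product topology derived from the discrete topology on $A$; by the previous fact the prefix topology coincides with this product topology, and since $A$ is finite (hence compact discrete), Tychonoff's theorem makes the countable product compact. For zero-dimensionality, it suffices to observe that the basis $\{N_p\}_{p}$, ranging over finite partial trees $p$, consists of clopen sets: the complement of $N_p$ is the finite union of $N_{p'}$ where $p'$ ranges over the finitely many partial trees with $\dom(p')=\dom(p)$ and $p'\neq p$. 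Hence $\trees_A$ admits a basis of clopen sets. Finally, for perfectness, given any tree $t$ and any basic neighbourhood $N_p$ of $t$, I would pick a node $u\in\{\dL,\dR\}^\ast\setminus\dom(p)$ (such a node exists since $\dom(p)$ is finite) and change the value $t(u)$ to a different letter of $A$, which is possible because $|A|\geq 2$; the resulting tree lies in $N_p$ and differs from $t$, so $t$ is not isolated.

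Collecting these four properties and invoking Brouwer's theorem yields the desired homeomorphism with $2^\omega$. The argument is a direct application of standard tools from Descriptive Set Theory; no step poses a substantive obstacle. The only subtle point is that perfectness genuinely requires $|A|\geq 2$, but this is always the case of interest for the sequel.
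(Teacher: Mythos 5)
Your proposal is correct and takes essentially the same approach as the paper: the paper's own one-line proof explicitly offers ``a~characterisation of the Cantor space'' (Kechris, Theorem~7.4, i.e.~Brouwer's theorem) as its justification, and you have simply supplied the routine verifications of compactness, metrizability, zero-dimensionality and perfectness that this invocation requires. Your remark that perfectness needs $|A|\geq 2$ is a~legitimate caveat the paper leaves implicit, but it does not change the substance of the argument.
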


\begin{proof}
The proof is a~standard encoding of one compact product space into another. One can also use a~characterisation of the Cantor space, see~\cite[Theorem~7.4, page~35]{kechris_descriptive}.
\end{proof}

\subsection{The Borel hierarchy}

Let $(X,\tau)$ be a~topological space. Recall that $\omega_1$ is the first uncountable ordinal. We define, by a~transfinite recursion on $1 \leq \xi < \omega_1$, the following classes:
\begin{align*}
\bsigma{1}(X) &\eqdef \{A\subseteq X \mid A \text{ is open} \};\\
\bpi{\xi}(X) &\eqdef \{A\complemento\subseteq X \mid A\in \bsigma{\xi} (X) \}= \big(\bsigma{\xi}(X)\big)\complemento;\\
\bsigma{\xi}(X) &\eqdef \big\{ \bigcup_n A_n \mid A_n \in \bpi{\xi_n} (X), \ 1\leq  \xi_n < \xi, \ n \in \omega  \big\}.
\end{align*}
Moreover, for $1 \leq \xi < \omega_1$, we define the intersection of the two classes $\bdelta{\xi}(X)\eqdef \bsigma{\xi}(X) \cap \bpi{\xi}(X)$.

\begin{fact}
For each $1\leq\xi<\omega_1$, the classes $\bsigma{\xi}(X)$ and $\bpi{\xi}(X)$ are closed under finite unions and finite intersections. The class $\bdelta{\xi}(X)$ is also closed under complement and therefore forms a~Boolean algebra.
\end{fact}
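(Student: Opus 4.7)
The plan is to proceed by transfinite induction on $\xi$, first establishing that the hierarchy is cumulative, then reducing the $\bpi{\xi}$ closures to the $\bsigma{\xi}$ closures by De~Morgan, and finally deriving the $\bdelta{\xi}$ statement formally from the $\bsigma{\xi}$ and $\bpi{\xi}$ cases.

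For the base case $\xi=1$, closure of $\bsigma{1}(X)$ under finite unions and finite intersections is one of the topological axioms for $\tau$, and closure of $\bpi{1}(X)$ then follows from $\bpi{1}(X)=\bsigma{1}(X)\complemento$ together with the De~Morgan identities $(A\cup B)\complemento=A\complemento\cap B\complemento$ and $(A\cap B)\complemento=A\complemento\cup B\complemento$. Before doing the inductive step I would verify a~cumulativity lemma: for $1\leq \eta<\xi<\omega_1$ one has $\bpi{\eta}(X)\subseteq \bsigma{\xi}(X)$ (every $A\in\bpi{\eta}(X)$ can be written as the ``constant'' countable union $A=A\cup A\cup\cdots$ of sets in $\bpi{\eta}(X)$), which immediately gives $\bsigma{\eta}(X)\subseteq\bsigma{\xi}(X)$ and, by complementation, $\bpi{\eta}(X)\subseteq\bpi{\xi}(X)$.

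For the inductive step, fix $\xi\geq 2$ and assume the claim holds below $\xi$. Consider $A,B\in\bsigma{\xi}(X)$, written as $A=\bigcup_n A_n$ and $B=\bigcup_m B_m$ with $A_n\in\bpi{\xi_n}(X)$, $B_m\in\bpi{\eta_m}(X)$, and $\xi_n,\eta_m<\xi$. Closure under finite union is immediate: $A\cup B$ is the countable union of all the $A_n$ and $B_m$, which is again in $\bsigma{\xi}(X)$. For finite intersection I would distribute: $A\cap B=\bigcup_{n,m}(A_n\cap B_m)$; by cumulativity both $A_n$ and $B_m$ lie in $\bpi{\zeta_{n,m}}(X)$ where $\zeta_{n,m}=\max(\xi_n,\eta_m)<\xi$, and by the inductive hypothesis $\bpi{\zeta_{n,m}}(X)$ is closed under finite intersection, so $A_n\cap B_m\in\bpi{\zeta_{n,m}}(X)$. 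Hence $A\cap B$ is a~countable union of sets from levels strictly below $\xi$, so $A\cap B\in\bsigma{\xi}(X)$. The $\bpi{\xi}(X)$ closures follow immediately by De~Morgan applied to the $\bsigma{\xi}(X)$ closures.

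Finally, for $\bdelta{\xi}(X)=\bsigma{\xi}(X)\cap\bpi{\xi}(X)$, closure under complement is tautological from the definition, since $\bsigma{\xi}(X)\complemento=\bpi{\xi}(X)$ and vice versa. Closure under finite union and finite intersection is inherited from the corresponding closures of $\bsigma{\xi}(X)$ and $\bpi{\xi}(X)$ just proved, so $\bdelta{\xi}(X)$ is a~Boolean algebra. The only nonroutine step is the finite intersection case for $\bsigma{\xi}$, whose success hinges on having cumulativity available so that two families of witnesses drawn from possibly different levels below~$\xi$ can be pairwise intersected at a~common intermediate level; once this is in hand the rest is bookkeeping with De~Morgan.
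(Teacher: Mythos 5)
Your argument is correct, and it is worth noting that the paper itself states this Fact without any proof at all (it is treated as folklore, implicitly deferred to~\cite{kechris_descriptive}); your write-up is the standard textbook induction that fills this in. The decomposition into (i) cumulativity, (ii) closure of $\bsigma{\xi}$ via distributing intersections over countable unions, (iii) De~Morgan for $\bpi{\xi}$, and (iv) the formal derivation for $\bdelta{\xi}$ is exactly the canonical route. One caveat deserves flagging: your cumulativity step $\bsigma{\eta}(X)\subseteq\bsigma{\xi}(X)$ is genuinely ``immediate from the definition'' only for $\eta\geq 2$; for $\eta=1$ it amounts to the claim that every open set is a countable union of closed sets (equivalently, that $\bpi{1}\subseteq\bpi{2}$, i.e.\ closed sets are $G_\delta$), which holds in metrizable spaces but can fail in an arbitrary topological space. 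Since the paper works exclusively with Polish spaces (indeed with $\trees_A\cong 2^\omega$), and the paper itself later asserts the monotonicity $\bsigma{\xi}\subseteq\bsigma{\xi'}$ with the same elision, this does not invalidate your proof in context --- but you should either restrict the statement to metrizable spaces or insert the one-line observation that in a metric space every open set is $F_\sigma$, since your inductive step for finite intersections does invoke $\bpi{1}\subseteq\bpi{\zeta}$ for $\zeta\geq 2$ as soon as $\xi\geq 3$. Everything else (the reindexing for finite unions, the pairwise intersection at level $\max(\xi_n,\eta_m)$, and the tautological complement-closure of $\bdelta{\xi}$) is sound bookkeeping.
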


The smallest Boolean algebra containing all the sets from $\bsigma{\xi}(X)$ is denoted $\boolcomb{\xi}(X)$. The above fact implies that $\boolcomb{\xi}(X)\subseteq \bdelta{\xi+1}(X)$. For uncountable Polish spaces the inclusion is strict, this fact follows from~\cite[Exercise~22.26(iii)]{kechris_descriptive}.

The \emph{Borel sets} of $X$ are:
\[ \borel(X) = \bigcup_{\xi \in \omega_1} \bsigma{\xi}(X) = \bigcup_{\xi \in \omega_1} \bpi{\xi}(X)= \bigcup_{\xi \in \omega_1} \bdelta{\xi}(X).\]
When the space $X$ is clear from the context, we omit it and write just $\bsigma{\xi}$, $\bpi{\xi}$, etc\ldots

\new{Notice that if $\xi\leq \xi'<\omega_1$ then directly from the definition we know that $\bsigma{\xi}\subseteq\bsigma{\xi'}$. The following fact shows that the opposite containment does not hold in general.}

\begin{factC}[{\cite[Theorem~22.4]{kechris_descriptive}}]
Let $(X,\tau)$ be an~uncountable Polish space. Then the Borel hierarchy of $X$ does not collapse i.e.~every class $\bsigma{\xi}$ is properly contained in~$\bsigma{\xi +1}$.
\end{factC}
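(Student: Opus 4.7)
The plan is to apply the classical \emph{universal sets plus diagonalisation} technique from descriptive set theory. First, I would construct, by transfinite induction on $1 \leq \xi < \w_1$, a $\bsigma{\xi}$-universal set $U_\xi \in \bsigma{\xi}(2^\w \times X)$ with the property that every $A \in \bsigma{\xi}(X)$ is of the form $(U_\xi)_c \eqdef \{x \in X \mid (c,x) \in U_\xi\}$ for some code $c \in 2^\w$. The base case $\xi=1$ uses a~countable basis $(V_n)_{n \in \w}$ of the topology of $X$ and the set
\[
U_1 \eqdef \bigcup_{n \in \w}\bigl\{c \in 2^\w \mid c(n)=1\bigr\} \times V_n.
\]
For $\xi > 1$ one fixes an~enumeration $(\xi_n)_{n \in \w}$ of $\{\eta : 1 \leq \eta < \xi\}$ (which exists because $\xi$ is countable) together with a~homeomorphism $2^\w \cong (2^\w)^\w$, and then assembles $U_\xi$ as a~countable union of complements of the $U_{\xi_n}$'s evaluated at the $n$th coordinate of the decoded parameter.

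Next, I would produce an~embedded copy of the Cantor space inside $X$. Since $X$ is uncountable Polish, the Cantor--Bendixson theorem yields a~non\=/empty perfect subset of $X$, and any such subset contains a~homeomorphic copy $K \subseteq X$ of $2^\w$. Fix a~homeomorphism $\fun{\varphi}{2^\w}{K}$. A~diagonal argument then closes the proof: assume for contradiction that $\bsigma{\xi}(X) = \bpi{\xi}(X)$ for some~$\xi$, and consider
\[
D \eqdef \bigl\{c \in 2^\w \mid (c,\varphi(c)) \notin U_\xi\bigr\}.
\]
This is a~$\bpi{\xi}(2^\w)$ set; since $\varphi$ is a~homeomorphism and $K$ is closed in $X$, its image $\varphi(D)$ lies in $\bpi{\xi}(K) \subseteq \bpi{\xi}(X)$. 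By the collapse hypothesis, $\varphi(D) \in \bsigma{\xi}(X)$, hence $\varphi(D) = (U_\xi)_{c^\ast}$ for some $c^\ast \in 2^\w$. Evaluating at $c = c^\ast$ yields the chain
\[
c^\ast \in D \iff (c^\ast,\varphi(c^\ast)) \notin U_\xi \iff \varphi(c^\ast) \notin (U_\xi)_{c^\ast} \iff c^\ast \notin D,
\]
which is absurd. Hence $\bsigma{\xi}(X) \neq \bpi{\xi}(X)$, and since $\bpi{\xi}(X) \subseteq \bsigma{\xi+1}(X)$ by definition, the inclusion $\bsigma{\xi}(X) \subseteq \bsigma{\xi+1}(X)$ must be strict.

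The main obstacle I anticipate is the transfinite bookkeeping in the construction of $U_\xi$: at each countable limit level~$\xi$ one has to uniformly decode a~single parameter $c \in 2^\w$ into a~sequence $(c_n)_n \in (2^\w)^\w$ of parameters for strictly smaller levels, and verify that the resulting $U_\xi$ visibly lies in $\bsigma{\xi}(2^\w \times X)$. Once the universal sets are in place, the diagonalisation is routine and the rest (the Cantor--Bendixson step, the passage from $\bpi{\xi}(K)$ to $\bpi{\xi}(X)$ via closedness of $K$, and the final strict inclusion) is standard.
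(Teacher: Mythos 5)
Your proof is correct and follows exactly the standard argument from the reference the paper cites (Kechris, Theorem~22.4): construct $\bsigma{\xi}$\=/universal sets over $2^\w\times X$ by transfinite recursion, embed a~copy of the Cantor space into the uncountable Polish space via Cantor--Bendixson, and diagonalise to show $\bsigma{\xi}\neq\bpi{\xi}$, which forces $\bsigma{\xi}\subsetneq\bsigma{\xi+1}$. The paper does not give a~proof of this fact --- it is stated with only a~citation --- and your sketch is a~faithful reconstruction of that cited proof.
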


For the rest of the article we will focus on the first two levels of the Borel hierarchy, as depicted in Figure~\ref{fig:borel-hier}.

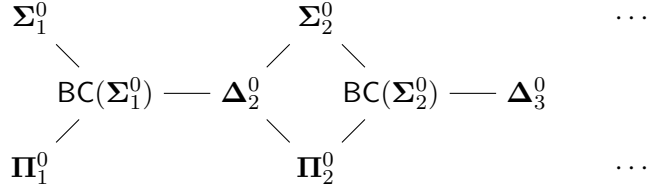
\begin{figure}
\begin{center}
\begin{tikzpicture}
  \newcommand{\wid}{3.8}
  \foreach \x in {1, 2} {
    \node (s-\x) at (\wid*\x-1,1) {$\bsigma{\x}$};
    \node (b-\x) at (\wid*\x,0) {$\boolcomb{\x}$};
    \node (p-\x) at (\wid*\x-1,-1) {$\bpi{\x}$};
  }

  \foreach \x in {2, 3} {
    \node (d-\x) at (\wid*\x-2,0) {$\bdelta{\x}$};
  }

  \node at (\wid*2.85,+1) {$\cdots$};
  \node at (\wid*2.85,-1) {$\cdots$};

  \foreach \from/\to in {1/2, 2/3} {
    \draw[-] (p-\from) -- (b-\from);
    \draw[-] (s-\from) -- (b-\from);

    \draw[-] (b-\from.east) -- (d-\to.west);
  }

  \draw[-] (d-2) -- (s-2);
  \draw[-] (d-2) -- (p-2);
\end{tikzpicture}
\end{center}
\caption{The first levels of the Borel hierarchy.}%
\label{fig:borel-hier}
\end{figure}

\subsection{The difference hierarchy}%
\label{ssec:difference}
The Borel hierarchy is refined by the so\=/called the \emph{difference hierarchy}, see~\cite[Section~22.E]{kechris_descriptive}. First notice that every ordinal $\theta$ can be uniquely written as $\lambda+n$, where $\lambda$ is $0$ or a~limit ordinal and $n \in \omega$. We say that the \emph{parity} of $\theta$ is \emph{even} (resp. \emph{odd}) if $n$ is even (resp.\ odd).

\begin{defi}%
\label{def:gerarchiadiff}
Let $X$ be a~topological space, $\Gamma$ a~family of subsets of $X$, and $\theta<\omega_1$ a~countable ordinal. A~set $A\subseteq X$ is called a~\emph{$\theta$\=/difference of $\Gamma$ sets} if and only if there exists a~$\theta$\=/indexed sequence of sets ${(A_\eta)}_{\eta<\theta}\subseteq\Gamma$ that is non\=/decreasing (i.e.~$A_{\eta} \subseteq A_{\eta'}$ if $\eta\leq \eta'$) and:
\begin{align*}
x \in A \Longleftrightarrow & \text{ the minimum $\eta<\theta$ such that $x \in A_\eta$,}\\
&\text{ has parity opposite to that of $\theta$.}
\end{align*}
The family of all $\theta$\=/differences of $\Gamma$ sets is denoted ${\mathcal{D}_{\theta}}(\Gamma)$. In particular, for each $\xi<\omega_1$ the class $\ndiff{\theta}{\xi}(X)$ is the family of all $\theta$\=/differences of sets from the Borel class $\bsigma{\xi}(X)$.
\end{defi}

\new{
The above definition requires the sequence ${(A_\eta)}_{\eta<\theta}$ to be non\=/decreasing and we will focus on $\Gamma=\bsigma{\xi}(X)$. These are important assumptions, because of the following remark.
}

\begin{rem}
\new{For each $\xi<\omega_1$ we have $\mathcal{D}_\omega\big(\bpi{\xi})=\bsigma{\xi+1}$.}
\end{rem}

\begin{proof}
\new{
The inclusion $\mathcal{D}_\omega\big(\bpi{\xi})\subseteq\bsigma{\xi+1}$ follows directly from the definition. For the opposite direction, consider a~set $A\in\bsigma{\xi+1}$. It is easy to check that $A$ can be written as $\bigcup_{n\in\w} A_n$ with ${(A_n)}_{n\in\omega}$ non\=/decreasing and each $A_n$ in $\bpi{\xi}$. Then take $A'_n\eqdef A_{\lfloor n\rfloor}$ and notice that this sequence is also non\=/decreasing and contains only $\bpi{\xi}$ sets. However, $x\in A$ iff $\exists n.\ x\in A_n$ iff the minimum $n$ such that $x\in A_n'$ is even. Therefore, $A$ is an~$\omega$\=/difference of $\bpi{\xi}$ sets and $A\in \mathcal{D}_\omega\big(\bpi{\xi})$.
}
\end{proof}

Notice that for a~natural number $n$ we have $A \in  \ndiff{n}{\xi}$ if and only if it can be written as follows (see Figure~\ref{fig:differences}):
\begin{align}
A &= A_0 \cup (A_2 \setminus A_1) \cup \cdots \cup (A_{n-1} \setminus A_{n-2})&\text{if $n$ is odd,}
\label{eq:form-diff-n-odd}\\
A &=  (A_1 \setminus A_0) \cup \cdots \cup (A_{n-1} \setminus A_{n-2})&\text{if $n$ is even,}
\label{eq:form-diff-n-even}
\end{align}
with $A_0, \ldots, A_{n-1}$ belonging to $\bsigma{\xi}$.

\new{
As one can easily check from the definition, the classes $\ndiff{\eta}{\xi}$ are monotone both in $\eta$ and in $\xi$.
}

\begin{fact}%
\label{ft:diff-monotone}
\new{For each $\eta\leq \eta'$ and $\xi\leq \xi'$ we have
\[\ndiff{\eta}{\xi}\subseteq \ndiff{\eta'}{\xi'}.\]}
\end{fact}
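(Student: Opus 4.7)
The plan is to decouple the two monotonicity directions. Monotonicity in $\xi$ is immediate: since $\bsigma{\xi} \subseteq \bsigma{\xi'}$ whenever $\xi \leq \xi'$, the very same witnessing sequence $(A_\alpha)_{\alpha < \eta}$ certifies both $A \in \ndiff{\eta}{\xi}$ and $A \in \ndiff{\eta}{\xi'}$. Hence it suffices to show $\ndiff{\eta}{\xi} \subseteq \ndiff{\eta'}{\xi}$ for all $\eta \leq \eta' < \omega_1$.

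For this, I fix $A \in \ndiff{\eta}{\xi}$ with witnessing sequence $(A_\alpha)_{\alpha < \eta}$ and split on whether $\eta$ and $\eta'$ share the same parity. In the same\=/parity case, I pad on the right: set $B_\alpha := A_\alpha$ for $\alpha < \eta$ and $B_\alpha := \bigcup_{\gamma < \eta} A_\gamma$ for $\eta \leq \alpha < \eta'$. The resulting sequence is non\=/decreasing, stays in $\bsigma{\xi}$ (countable unions of $\bsigma{\xi}$ sets belong to $\bsigma{\xi}$, and every index involved is countable), and for every $x$ the minimum new index coincides with the old one whenever it exists; since $\eta$ and $\eta'$ have the same parity, the parity condition that defines $A$ transfers verbatim.

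In the opposite\=/parity case, I first promote $\eta$ to $\eta+1$ by a ``shift by one'' construction and then invoke the same\=/parity case (valid because $\eta+1 \leq \eta'$ and the parities now match). For the shift, define $(B_\beta)_{\beta < \eta+1}$ by $B_0 := \emptyset$, $B_{\gamma+1} := A_\gamma$ for $\gamma < \eta$, and $B_\beta := \bigcup_{\delta < \beta} A_\delta$ at every limit ordinal $\beta \leq \eta$. A direct verification shows that this family is non\=/decreasing, consists of $\bsigma{\xi}$ sets, and satisfies the following: whenever $\gamma_0$ is the old minimum witnessing $x \in A$, the new minimum is exactly $\gamma_0 + 1$. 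Since adding one flips the parity of the minimum and $\eta+1$ has parity opposite to that of $\eta$, both flips cancel, so the conditions ``old minimum has parity opposite to $\eta$'' and ``new minimum has parity opposite to $\eta+1$'' are equivalent.

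The main subtlety, and the reason one cannot simply reindex uniformly, is that for a~limit ordinal $\eta$ the set $\{0\} \cup \{\gamma+1 : \gamma < \eta\}$ misses every limit ordinal below $\eta$; filling these gaps via $B_\beta := \bigcup_{\delta < \beta} A_\delta$ is precisely what makes the shift construction work uniformly across finite, successor, and limit values of $\eta$.
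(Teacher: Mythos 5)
The paper does not actually prove this fact; it only states, just above, that ``as one can easily check from the definition, the classes $\ndiff{\eta}{\xi}$ are monotone both in $\eta$ and in $\xi$,'' and then records the fact with no argument. So there is nothing to compare against, and I am just assessing correctness.

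Your proof is correct. Decoupling the two monotonicities via $\ndiff{\eta}{\xi}\subseteq\ndiff{\eta}{\xi'}\subseteq\ndiff{\eta'}{\xi'}$ is fine, and monotonicity in $\xi$ is indeed immediate because the same witnessing sequence works verbatim. For monotonicity in $\eta$, the same\-/parity padding by $\bigcup_{\gamma<\eta}A_\gamma$ is valid: this union is a countable union of $\bsigma{\xi}$ sets (as $\eta<\omega_1$), it makes the sequence non\-/decreasing of the right length, and it introduces no new witnessing minima, so the parity condition transfers unchanged. The opposite\-/parity case via the shift $B_0=\emptyset$, $B_{\gamma+1}=A_\gamma$, $B_\beta=\bigcup_{\delta<\beta}A_\delta$ at limits is the nontrivial half, and your analysis is right: the resulting sequence is non\-/decreasing and consists of $\bsigma{\xi}$ sets, points with no old minimum still have no new minimum, and points with old minimum $\gamma_0$ acquire new minimum exactly $\gamma_0+1$ (your explicit treatment of limit $\beta$ is what makes this hold uniformly). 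Since both $\theta\mapsto\theta+1$ and $\gamma_0\mapsto\gamma_0+1$ flip parity, the two flips cancel and membership in $A$ is preserved; since $\eta<\eta'$ when the parities disagree, $\eta+1\leq\eta'$, so chaining with the same\-/parity case completes the argument.
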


The following theorem shows that the difference hierarchy over $\bsigma{\xi}$ saturates the successive class $\bdelta{\xi+1}$.

\begin{figure}
\centering
\begin{tikzpicture}[scale=0.8]
\tikzstyle{lll}=[scale=0.8, anchor=south west, inner sep=0]
\newcommand{\dotss}{0.45}
\newcommand{\dotsx}{0.45*0.4}
\newcommand{\dotsy}{0.45*0.9}

\coordinate (z) at (0,0);

\fill[black] (z) circle (\dotss*11);
\fill[white] (z) circle (\dotss*9);
\fill[black] (z) circle (\dotss*5);
\fill[white] (z) circle (\dotss*3);
\fill[black] (z) circle (\dotss*1);

\node[toC, scale=2.2] at (0, \dotss*6+0.15) {$\vdots$};

\node[lll, black] at (\dotsx*1.1,\dotsy*1.1) {$A_0$};
\node[lll, white] at (\dotsx*3.2,\dotsy*3.2) {$A_1$};
\node[lll, black] at (\dotsx*5.2,\dotsy*5.2) {$A_2$};
\node[lll, white] at (\dotsx*9.2,\dotsy*9.2) {$A_{n-2}$};
\node[lll, black] at (\dotsx*11.2 ,\dotsy*11.2 ) {$A_{n-1}$};
\end{tikzpicture}
\caption{A~set $A=A_0 \cup (A_2 \setminus A_1) \cup \cdots \cup(A_{n-1} \setminus A_{n-1})$ ($A$ is the union of the black parts) with $A_0 \subseteq A_1 \subseteq A_2 \subseteq \cdots \subseteq A_{n-2} \subseteq A_{n-1}$ and $A_i \in \bsigma{\xi}$ for every $i$.}%
\label{fig:differences}
\end{figure}
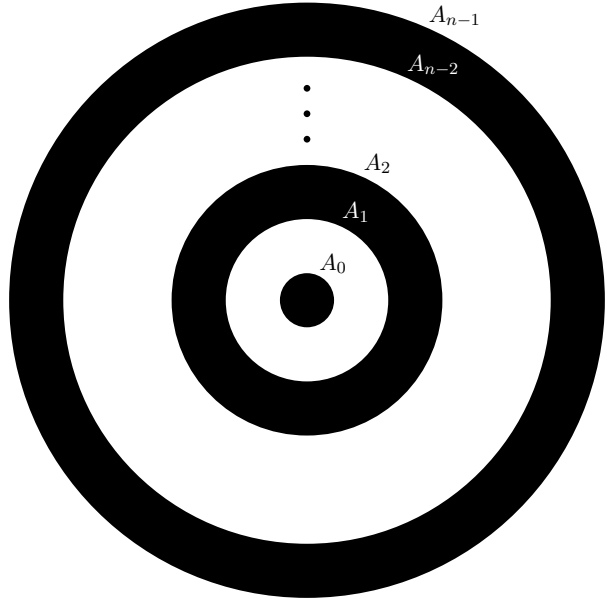

\begin{thm}[{Hausdorff, Kuratowski, see~\cite[Theorem~22.27, page~176]{kechris_descriptive}}]%
\label{thm:HausdorffKuratowski}
In Polish spaces and for any $1 \leq \xi < \omega_1$ we have that
\[\bdelta{\xi+1}= \bigcup_{1 \leq \theta < \omega_1} \ndiff{\theta}{\xi}.\]
\end{thm}

Similarly, the first $\omega$ levels of the hierarchy coincide with the class~$\boolcomb{\xi}$:

\begin{thmC}[{\cite[Exercise~22.30, page~177]{kechris_descriptive}}]
In Polish spaces and for any $1 \leq \xi < \omega_1$ we have that
\[\boolcomb{\xi}= \bigcup_{1 \leq \theta < \omega} \ndiff{\theta}{\xi}.\]
\end{thmC}

\subsection{The Wadge hierarchy}%
\label{ssec:wadge}

\new{We are now in the position} to define the Wadge hierarchy of a~general topological space. Later in the article we will focus on the specific case of the Wadge hierarchy of the Cantor space $2^\w$.

\begin{defi}
Let $X$, $Y$ be two topological spaces. We say that a~set $A \subseteq X$ \emph{continuously reduces} to $B\subseteq Y$ if there is a~continuous function $\fun{f}{X}{Y}$ such that the~pre\=/image $f^{-1}(B)=\{x\in X \mid f(x) \in B\}$ is equal to $A$ (i.e.~$x \in A \Leftrightarrow f(x) \in B$ for every $x \in X$).
\end{defi}

The following proposition shows that continuous reductions preserve the topological classes defined above.

\begin{prop}
Let $\Gamma$ be a~level of the Borel hierarchy or of the difference hierarchy. Then $\Gamma$ is closed under continuous preimages: if $B$ is a~subset of a~topological space $Y$ such that $B \in \Gamma(Y)$ and $\fun{f}{X}{Y}$ is a~continuous function from a~topological space $X$ to $Y$, then $f^{-1}(B) \in \Gamma(X)$.
\end{prop}

Now we are in place to define the Wadge order.

\begin{defi}
Let $X$ and $Y$ be two topological spaces and let $A \subseteq X$ and $B \subseteq Y$. We say that $A$ is \emph{Wadge reducible} to $B$, and we denote it by $A \wadgeq B$, if there exists a~continuous reduction of $A$ to $B$. We say that $A$ is \emph{Wadge equivalent} to $B$, in symbols $A \wadgee B$, if $A \wadgeq B$ and $B \wadgeq A$. Finally, we write $A \wadge B$ if $A \wadgeq B$ and $B \wadgeq A$ does not hold.
\end{defi}

\begin{fact}
${\wadgeq}$ is an~equivalence relation.
\end{fact}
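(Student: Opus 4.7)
The plan is to read the statement as asserting that $\wadgee$ is an~equivalence relation on subsets of topological spaces (the relation $\wadgeq$ itself, being continuous reducibility, is only a~preorder since it fails antisymmetry, so the statement as written appears to be a~typographical conflation of $\wadgeq$ with its symmetric closure $\wadgee$). Under this reading, the proof reduces to verifying reflexivity, symmetry, and transitivity directly from the definition of continuous reduction.

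First I~would establish that $\wadgeq$ itself is reflexive and transitive. Reflexivity: for any $A \subseteq X$, the identity map $\fun{\id_X}{X}{X}$ is continuous and satisfies $\id_X^{-1}(A) = A$, so $A \wadgeq A$. Transitivity: if $\fun{f}{X}{Y}$ witnesses $A \wadgeq B$ and $\fun{g}{Y}{Z}$ witnesses $B \wadgeq C$, then $g \circ f$ is continuous (composition of continuous maps) and
\[
(g \circ f)^{-1}(C) = f^{-1}\bigl(g^{-1}(C)\bigr) = f^{-1}(B) = A,
\]
so $A \wadgeq C$.

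From these two properties of $\wadgeq$ the three axioms for $\wadgee$ follow at once: reflexivity of $\wadgee$ is reflexivity of $\wadgeq$ applied in both directions; symmetry of $\wadgee$ is immediate from its definition, which is symmetric in $A$ and $B$; and transitivity of $\wadgee$ follows by applying transitivity of $\wadgeq$ to each of the two ``directions'' of the hypotheses $A \wadgee B$ and $B \wadgee C$.

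There is no substantive obstacle here: the only fact used beyond the definitions is that continuous maps are closed under composition, and that the identity is continuous. The main thing to be careful about is simply stating the argument in the right order, namely establishing the preorder properties of $\wadgeq$ first and then deriving the equivalence properties of $\wadgee$ as corollaries.
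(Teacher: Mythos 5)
Your argument is correct, and you have correctly diagnosed that the statement as printed contains a typographical slip: $\wadgeq$ is only a preorder (reflexive and transitive, but not antisymmetric), and the intended claim is that $\wadgee$ is an equivalence relation. The paper states this fact without proof, so there is no internal argument to compare against; your proof — reflexivity via the identity map, transitivity via composition of continuous maps, and then reading off the three equivalence axioms for $\wadgee$ — is the standard and complete argument, and it is what the authors would presumably intend.
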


The relation ${\wadgeq}$ induces a~partial order between the ${\wadgee}$\=/classes, called \emph{Wadge degrees}, of subsets of topological spaces. If we fix a~space $X$ and we restrict the ordering induced by $\wadgeq$ to the sets of $X$, we obtain the \emph{Wadge hierarchy} of $X$. If $A$ is a~subset of $X$, then by ${[A]}_{\mathrm{W}}$ we denote its Wadge degree:
\[ {[A]}_{\mathrm{W}} = \{B \subseteq X \mid B \wadgee A\}.\]

Even tough the Wadge hierarchy can be defined for any topological space, its shape for a~generic space can be very complicated (for example the Wadge hierarchy of many non zero\=/dimensional topological spaces, including the space of real numbers, is very complicated: see for instance~\cite{rosschlicht} and~\cite{rosselivanov}).
Also, the good properties of the hierarchy (like its width or well\=/foundedness) depend on the determinacy of related games. Therefore, in our work we will restrict our attention to the order ${\wadgeq}$ restricted to the first levels of the Borel hierarchy of the Cantor space $2^\omega$. We refer the reader to~\cite{andretta_camerlo_wadge} for a~description of the structure of the Wadge hierarchy for~$2^\omega$.


\begin{thm}[{Wadge's lemma, see~\cite[Theorem~21.14, page~156]{kechris_descriptive}}]%
\label{thm:WadgeLemma}
For any $A, B \in \borel(2^\w)$ it holds that
\[ A \wadgeq B \quad \text{or} \quad B\complemento \wadgeq A.\]
\end{thm}

\begin{thm}[{Wadge, Martin, Monk, see~\cite[Theorem~21.15, page~158]{kechris_descriptive}}]
The ordering ${\wadgeq}$ among the Borel sets of $2^\w$ is well\=/founded.
\end{thm}

\begin{defi}
A~set which is Wadge reducible to its complement is called \emph{self\=/dual}, otherwise it is called \emph{non self\=/dual}.
\end{defi}

\begin{exa}
It is easy to check that every non\=/trivial \new{(i.e.~different than $\emptyset$ and $2^\w$)} clopen subset of $2^\w$ is self\=/dual.
\end{exa}

Since the notion of self\=/duality is invariant under $\wadgee$ we can speak of self\=/dual and non self\=/dual Wadge degrees. If ${[A]}_{\mathrm{W}}$ is a~non self\=/dual Wadge degree then we say that the pair $\{{[A]}_{\mathrm{W}},{[A\complemento]}_{\mathrm{W}}\}$ is a~\emph{non self\=/dual pair}.

\begin{cor}
The anti\=/chains in the Wadge degrees have length at most $2$ and are of the form
\[ \big\{{[A]}_{\mathrm{W}}, {[A\complemento]}_{\mathrm{W}}\big\},\]
with $A$ non self\=/dual.
\end{cor}

Notice that technically every Wadge degree does not contain the elements contained in the previous degrees of the hierarchy. For example, the Wadge degree ${[C]}_{\mathrm{W}}$, where $C$ is a~clopen set different from $2^\w$ and $\emptyset$, contains all the clopen sets except the whole space $2^\w$ and the empty set $\emptyset$. This is obvious, since any Wadge degree is an~equivalence class of the relation ${\wadgee}$. We define the \emph{Wadge class} of a~set as the union of its Wadge degree with all its predecessors in the hierarchy. For example the Wadge class $\bdelta{1}$ is obtained by taking the union of the Wadge degree $\bdelta{1} \setminus \{ \emptyset, 2^\w\}$ with its predecessors $\{\emptyset\}$ and $\{ 2^\w\}$. It is clear that the two hierarchies, the one of Wadge degrees and the one of Wadge classes, are isomorphic as orders, so we can treat both the hierarchies in the same way. In the pictures of this section we show the hierarchy of the Wadge classes, because they are more intuitive and easier to describe.

\begin{thm}[See~\cite{andretta_camerlo_wadge}]
In the Cantor space, ${[2^\w]}_{\mathrm{W}}=\{2^\w\}$ and ${[\emptyset]}_{\mathrm{W}}=\{ \emptyset\}$ are the two minimal Wadge degrees and they clearly form a~non self\=/dual pair. Then we have the Wadge degree formed by any clopen set different from $2^\w$ and $\emptyset$  and this is a~self\=/dual Wadge degree. The hierarchy continues with a~constant alternation of a~non self\=/dual pair and one self\=/dual Wadge degree. All the limit levels of the Wadge hierarchy consist of a~non self\=/dual pair. Certain specific Wadge classes coincide with the levels of the Borel hierarchy.
\end{thm}

Hence, the Wadge hierarchy of the Cantor space has the shape as depicted in Figure~\ref{fig:wadge-precise}.

Now we can assign an~ordinal to any level of the hierarchy. This ordinal is the \emph{Wadge rank} of a~Wadge degree (or of the corresponding Wadge class). The two bottom Wadge degrees $\{\emptyset\}$ and $ \{2^\w\}$ have Wadge rank \new{$1$}, the Wadge degree $\bdelta{1} \setminus \{\emptyset, 2^\w\}$ has Wadge rank \new{$2$} (so it has the Wadge class $\bdelta{1}$), and so on.

Among the non self\=/dual Wadge classes we find the classes $\bsigma{n}$ and $\bpi{n}$. The classes $\bsigma{1}$ and $\bpi{1}$ are immediately after the Wadge class $\bdelta{1}$, so their Wadge rank is \new{$3$}. Then, between the~non self\=/dual pair
\[\{\bsigma{n}, \bpi{n}\}\]
and the successive
\[\{\bsigma{n+1}, \bpi{n+1}\}\]
there are $\omega_1^{\omega_1^{\iddots_{n \text{ times} }^{\omega_1}}}$ Wadge classes. In particular, there are $\omega_1$ Wadge classes between the pair $\{\bsigma{1}, \bpi{1}\}$ and $\{\bsigma{2} , \bpi{2} \}$. Hence, the Wadge rank of the Wadge classes $\bsigma{2}$ and $\bpi{2}$ is~$\omega_1$, and the Borel class $\bdelta{2}$ contains $\omega_1$ different levels of the Wadge hierarchy.

\begin{figure}
\centering
\begin{tikzpicture}[scale=0.9]
\node[dot, scale=2.0] at (-10,+1) {};
\node[toC, inner sep=0, anchor=north] at (-10, +1.8) {$\{2^\w\}$};
\node[dot, scale=2.0] at (-10,-1) {};
\node[toC, inner sep=0, anchor=south] at (-10, -1.8) {$\{\emptyset\}$};
\node[dot, scale=2.0] at (-9,0) {};
\node[toC, inner sep=0, anchor=north] at (-9, +0.6) {$\bdelta{1}$};
\node[dot, scale=2.0] at (-8,+1) {};
\node[toC, inner sep=0, anchor=north] at (-8, +1.8) {$\bsigma{1}$};
\node[dot, scale=2.0] at (-8,-1) {};
\node[toC, inner sep=0, anchor=south] at (-8, -1.8) {$\bpi{1}$};
\node[scale=3.0] at (-6,0) {\ldots};
\node[toC, inner sep=0, anchor=north] at (-6, +0.6) {$\omega_1$ alternations};
\node[dot, scale=2.0] at (-4,+1) {};
\node[toC, inner sep=0, anchor=north] at (-4, +1.8) {$\bsigma{2}$};
\node[dot, scale=2.0] at (-4,-1) {};
\node[toC, inner sep=0, anchor=south] at (-4, -1.8) {$\bpi{2}$};
\node[dot, scale=2.0] at (0,+1) {};
\node[scale=3.0] at (-2,0) {\ldots};
\node[toC, inner sep=0, anchor=north] at (-2.1, +0.6) {$\omega_1^{\omega_1}$ alternations};
\node[toC, inner sep=0, anchor=north] at (0, +1.8) {$\bsigma{3}$};
\node[dot, scale=2.0] at (0,-1) {};
\node[toC, inner sep=0, anchor=south] at (0, -1.8) {$\bpi{3}$};
\node[scale=3] at (1,0) {\ldots};
\end{tikzpicture}
\caption{An~initial fragment of the Wadge hierarchy inside the Borel hierarchy of $2^\w$.}%
\label{fig:first-omega-wadge}
\end{figure}
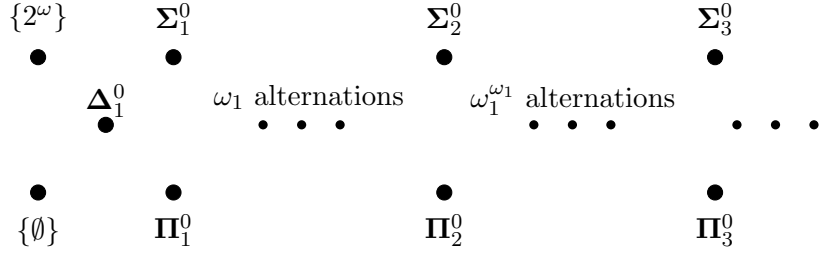

Now we focus on the segment that we will study in this article, that is the initial segment from the beginning of the hierarchy until the Wadge classes $\bsigma{2}$ and $\bpi{2}$. Using this notion we can express a~characterisation of the first $\omega_1$ levels of the Wadge hierarchy in $2^\omega$ in terms of the difference hierarchy.

\begin{thm}[See~\cite{andretta_camerlo_wadge}]
For every $m\geq 1$ every class $\ndiff{\theta}{m}$ corresponds to a~Wadge class of a~non self\=/dual pair. For $m=1$ these are essentially all Wadge classes: there is no non self\=/dual Wadge class between $\ndiff{\theta}{1}$ and $\ndiff{\theta+1}{1}$. Therefore, Figure~\ref{fig:wadge-precise} depicts the first $\omega_1$ levels of the Wadge hierarchy.
\end{thm}

Hence, in the Cantor space the difference hierarchy is an~important tool to understand the Wadge hierarchy, especially in the initial part up to $\bsigma{2}$. Beyond that level, the difference hierarchy becomes much coarser (i.e.~it skips a~lot of Wadge degrees).

\section{The game for Wadge ranks}%
\label{sec:finite-game}

In this section we define a~game that we will use in this article to obtain results of decidability of Wadge degrees with Wadge ranks up to $\omega$ (i.e.~the classes $\ndiff{n}{1}$ for $n\in\w$). This game is played by two players, named Alternator and Constrainer and it is a~finite duration game. In this article we work on the space $\trees_A$, but \emph{a~priori} this game can be defined in any topological space and the characterization that it gives holds in general. Yet, in the case of regular languages of trees, it is possible to decide which player wins the game. This fact will be crucial to state the results about decidability.

Let us describe the game. Let $X$ be a~topological space, $U_0\subseteq X$ open and non\=/empty, and let $X_1, \ldots,X_n$ be arbitrary subsets of $X$. We define the game
\[ \gameH_{U_0}(X_1, \ldots, X_n)\]
played by Constrainer (choosing open subsets of $X$) and Alternator (choosing points of $X$). The game will last for $n$ rounds, a~round $i$ for $1,\ldots, n$ of the game is played as follows:
\begin{enumerate}
\item Alternator chooses a~point $x_i \in U_{i{-}1}\cap X_i$.
If there is no such point $x_i$, the game is interrupted and Constrainer wins immediately.
\item Constrainer chooses an~open set $U_i\subseteq U_{i-1}$ that contains $x_i$ and the next round is played.
\end{enumerate}
If Alternator manages to survive $n$ rounds then he wins, otherwise Constrainer wins.

A~special variant of the game, when $U_0=X$ is the whole space, is denoted simply as $\gameH(X_1,\ldots,X_n)$.
Now we prove some properties of this game.

Let $(X, \tau)$ be a~topological space, $U\subseteq X$ open non\=/empty, and let $X_1, \ldots, X_n$ be subsets of $X$. Consider the game $\gameH_U(X_1, \ldots, X_n)$. In this framework we can represent a~position of a~play trough a~tuple
\[ \langle U_0, x_1, U_1, x_2, U_2, \ldots, x_i, U_i\rangle,\]
where $x_1, \ldots, x_i \in X$ and $U_0, U_1, \ldots, U_i \in \tau$ with $U_0=U$.
A~\emph{strategy for Constrainer} in the game $\gameH_U(X_1, \ldots, X_n)$ is a~function
\[ \fun{\sigma}{X^{\leq n}}{\tau}.\]
If $s$ is a~word belonging to $X^{\leq n}$ compatible with the game $\gameH$ and with $\sigma$, then $\sigma(s)$ is the open set played by Constrainer in the position
\[\langle s_0, \sigma(s_0), s_1, \sigma(s_0s_1),\ldots, s_{\lunghezza(s)-1}, \sigma(s)\rangle.\]
If $s$ does not represent a~position compatible with the game $\gameH$ and with $\sigma$ (for example because the second letter of $s$ is not an~element of $X_2$ or it is not an~element of $\sigma(s_0)$), then we put $\sigma(s)= \emptyset$ by convention. As usual, a~strategy $\sigma$ for Constrainer is \emph{winning} if Constrainer wins every play where he follows $\sigma$. In a~specular way we could define strategies for Alternator, but we will not use them in this article.

\new{
Since the duration of the game is finite, it is determined --- one of the players has a~winning strategy. The following remark is not used in this article but provides a~simplified form for the strategies of the players.
}

\begin{rem}
\new{The game $\gameH_U(X_1,\ldots,X_n)$ is positionally determined: the winner of the game does not change if we insist that the players' strategies depend only on the number of the round and the last move of the opponent. This means that we can freely assume that Alternators's point $x_i$ depends only on $i$ and the current open set $U_{i-1}$; while Constrainer's open set $U_i$ depends only on $i$, the current point $x_i$, and the previous open set $U_{i-1}$.
}
\end{rem}

The first property we prove is Refinement Lemma, that states that if the sets $X_1, \ldots, X_n$ are split into finitely many parts each and Alternator wins $\gameH(X_1, \ldots ,X_n)$ then he can win for some choice of parts of $X_1, \ldots, X_n$.

\begin{lem}[Refinement Lemma]%
\label{lem:refinement_lemma}
Let $X_1, \ldots, X_n$ be subsets of a~topological space $X$. For $i \in \{1,\ldots, n\}$, let $\mathcal{Y}_i$ a~finite family
of sets partitioning $X_i$. For any non\=/empty open $U\subseteq X$, if Alternator wins
\[ \gameH_U(X_1, \ldots ,X_n)\]
then there exist $Y_1 \in  \mathcal{Y}_1, \ldots, Y_n \in \mathcal{Y}_n$ such that Alternator wins
\[\gameH_U(Y_1, \ldots, Y_n).\]
\end{lem}

\begin{proof}
We prove the theorem by induction on $n$.
The induction base is immediate, because Alternator always wins when $n = 0$ and he wins when $n=1$ if and only if $X_1\neq\emptyset$. Now prove the induction step. Consider the first move by Alternator, and assume that he chooses the same point $x \in U$ \new{as chosen by his winning strategy in $\gameH_U(X_1, \ldots ,X_n)$}. This point necessarily belongs to some $Y_1 \in \mathcal{Y}_1$. For $i \in \omega$, let $U_i$ be the open ball around $x$ of radius $\frac{1}{i}$. By the definition of the game, we know that for every $i<\w$ Alternator wins
\[\gameH_{U_i} (X_1, \ldots , X_n).\]
Thus, he also wins $\gameH_{U_i}(X_2,\ldots,X_n)$ and by the inductive assumption, we know that for every $i$ there exist $Y^{(i)}_2 \in
\mathcal{Y}_2$, \ldots, $Y^{(i)}_n \in \mathcal{Y}_n$ such that Alternator wins
\[ \gameH_{U_i}(Y^{(i)}_2, \ldots, Y^{(i)}_n).\]
By Pigeon\=/hole Principle, 
there must be some $Y_2$, \ldots, $Y_n$ such that
\[ (Y_2, \ldots, Y_n) = (Y^{(i)}_2, \ldots, Y^{(i)}_n)\]
holds for infinitely many $i$. Since the game $\gameH_V(Y_2, \ldots, Y_n)$ grows more difficult for Alternator as the open set $V$ becomes smaller, and since every open set $V$ that contains $x$ contains some $U_i$, we conclude that Alternator wins
\[\gameH_V (Y_2, \ldots ,Y_n)\]
for every $V$ that contains $x$. By viewing $V$ as a~response of Constrainer to Alternator's move
$x \in Y_1$, we conclude that Alternator wins the game
\[\gameH_U(Y_1, \ldots ,Y_n). \qedhere\]
\end{proof}

Now let $Y$ be a~subset of our topological space $X$ and consider a~particular case of the game where the sets $X_1, \ldots, X_n$ alternate between $Y$ and its complement, i.e.~we consider $\gameH(X_1, \ldots, X_n)$ where $X_i$ is $Y$ if $i$ is odd, $Y\complemento$ otherwise. We denote by $\gameHs(Y,n)$ that game and by $\gameHs_U(Y,n)$ the variant relativised to a~non\=/empty open set $U\subseteq X$.

\begin{exa}
Consider the game where the topological space $X$ is the space of real numbers $\R$ and $Y=\Q$, i.e.~the rational numbers. Then for every $n$, Alternator wins the game $\gameHs(Y,n)$.
\end{exa}

\begin{rem}
Notice that if Alternator wants to survive in $\gameHs(Y,n)$ as long as possible, he has to avoid to play points in the interior\footnote{Recall that the interior of a~set $Y$ is the union of all open sets contained in $Y$.} of $Y$ and the interior of $Y\complemento$. For example, if at the first round Alternator plays $x$ belonging to the interior of $Y$ then Constrainer can play (a~subset of) the interior of $Y$ and Alternator loses because he cannot go outside $Y$ any more.
\end{rem}

\begin{exa}
In the real numbers $\R$, let $Y$ be the complement of $\{ \frac{1}{n}\in\R \mid n \in \omega\}$.
Alternator wins $\gameHs(Y,3)$. Indeed, in the first round Alternator can play $0 \in Y$. In the second round, Alternator plays $\frac{1}{n}\notin Y$ for some large $n$ depending on Constrainer's move. In the third round, Alternator plays $\frac{1}{n} + \epsilon \in Y$, for some small $\epsilon$ depending on Constrainer's move.

\new{
We now argue that Constrainer wins $\gameHs(Y,n)$ for $n \geq 4$. Notice that since $x_2$ must belong to $Y^\cp=\{ \frac{1}{n}\in\R \mid n \in \omega\}$, Constrainer can choose $U_2\subseteq U_1$ in such a~way to guarantee that $x_2\in U_2$ but $0\notin U_2$. Then Alternator chooses a~point $x_3\in Y\cap U_2$ that must be distinct from $0$. Thus, there exists an~open set $U_3$ such that $x_3\in U_3\subseteq U_2$ but $U_3\subseteq Y$. This guarantees that Alternator is not able to choose $x_4\in Y^\cp\cap U_3$, making him lose the game $\gameHs(Y,4)$.
}
\end{exa}

\begin{rem}%
\label{rem:strategies}
Let $\sigma_1$ and $\sigma_2$ be two strategies for Constrainer such that $\sigma_1 (s) \subseteq \sigma_2(s)$ for any finite word $s\in X^{\leq n}$. If $\sigma_2$ is winning for Constrainer then $\sigma_1$ is winning for Constrainer too.
\end{rem}

\begin{lem}%
\label{lem:wecanplaybasic}
Choose some basis $B$ for the topology of the topological space $X$. If Constrainer has a~winning strategy in $\gameHs(Y,n)$ then he has a~winning strategy which uses only basic open sets from $B$.
\end{lem}

\begin{proof}
\new{Consider} any~function $f$ that to every pair $(U,x)$, where $U$ is an~open set and $x \in U$, assigns a~basic open set $V\in B$ such that $x \in V$ and $V \subseteq U$. Let $\sigma$ be a~winning strategy for Constrainer. We can define another winning strategy $\bar{\sigma}$ that \new{always} takes sets from~$B$: given a~finite word $s$ of length $i$, we define $\bar{\sigma}(s)= f\big(\sigma(s),s_i\big)$. Since $\sigma$ was winning for Constrainer, by Remark~\ref{rem:strategies} also $\bar{\sigma}$ is winning.
\end{proof}

Now we are ready to give a~characterization of $\ndiff{n}{1}$ sets in terms of the game $\gameH$.

\begin{thm}%
\label{thm:decidforsinglefinitediff}
Let $X$ be a~topological space and let $Y \subseteq X$. The following conditions are equivalent:
\begin{enumerate}
\item $Y$ belongs to $\ndiff{n}{1}$.
\item Constrainer wins the game $\gameHs(Y,n+1)$.
\end{enumerate}
\end{thm}

\begin{proof}

We have to prove both directions separately.

\paragraph*{\bf Implication $1 \Rightarrow 2$.} Suppose that $n$ is odd and let
\begin{equation}
\label{eq:form-of-Y}
Y = A_0 \cup (A_2 \setminus A_1) \cup  \cdots \cup (A_{n{-}1} \setminus A_{n{-}2}),
\end{equation}
where $A_0 \subseteq A_1 \subseteq A_2 \subseteq \cdots \subseteq A_{n-2} \subseteq A_{n-1}$ and every $A_i$ is open for $0 \leq i \leq n-1$. Then, a~winning strategy for Constrainer in $\gameHs(Y,n)$ is to play $A_{n-1}$ as the first move, $A_{n-2}$ as the second move, and so on. Equation~\eqref{eq:form-of-Y} implies that this is a~valid strategy of Constrainer. The $n$th move will be $A_0$, and since $A_0 \subseteq Y$, at that point Alternator loses at the $(n{+}1)$th round. The case of $n$ even is completely dual, in that case $A_0 \subseteq Y\complemento$.

\paragraph*{\bf Implication $2 \Rightarrow 1$.}

Suppose that $n$ is odd, the opposite case can be solved by an entirely dual argument. Our aim is to present $Y$ as in~\eqref{eq:form-of-Y}. For $i=0,1,2,\ldots$ define sets
\begin{align*}
A_i &\eqdef \bigcup\big\{ U\mid \text{Constrainer wins $\gameHs_U(Y,i+1)$}\big\}&\text{for odd $i$}\\
A_i &\eqdef \bigcup\big\{ U\mid \text{Constrainer wins $\gameHs_U(Y\complemento,i+1)$}\big\}&\text{for even $i$}
\end{align*}
where the unions range over open sets $U\subseteq X$. Notice that if Constrainer wins $\gameHs_U(Y,i)$ then he also wins $\gameHs_U(Y\complemento,i+1)$ by the same strategy, just playing $U$ in the first round. Therefore, the family $(A_i)$ is increasing. By the definition, all the sets $A_i$ are open. Clearly, the assumption that Constrainer has a~winning strategy in $\gameHs(Y,n+1)$ implies that $A_{n}$ is the whole space. Thus, it is enough to inductively prove the following claim.

\begin{claim}%
\label{cl:Y-invariant}
For $i=0,1,\ldots$ the following holds
\begin{align*}
Y\cap A_i &= A_0\cup (A_2\setminus A_1) \cup\cdots\cup(A_{i-1}\setminus A_{i-2}) &\text{for odd $i$}\\
Y\complemento\cap A_i &= (A_1\setminus A_0)\cup (A_3\setminus A_2) \cup\cdots\cup(A_{i-1}\setminus A_{i-2}) &\text{for even $i$}
\end{align*}
\end{claim}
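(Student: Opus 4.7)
The plan is to prove the two identities simultaneously by induction on~$i$. The base case $i=0$ is immediate from the definitions: Constrainer wins the one\=/round game $\gameHs_U(Y\complemento,1)$ iff $U\cap Y\complemento=\emptyset$, so $A_0$ equals the interior of~$Y$, and hence $Y\complemento\cap A_0=\emptyset$, matching the empty union on the right\=/hand side.

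For the inductive step, the central observation is a~recursive characterisation of $A_i$ in terms of $A_{i-1}$: for odd $i\geq 1$,
\[
A_i \;=\; \bigcup\big\{U\text{ open}\mid U\cap Y\subseteq A_{i-1}\big\},
\]
and for even $i\geq 2$ the same with $Y$ replaced by~$Y\complemento$. To see this, analyse the first round of $\gameHs_U(Y,i+1)$: after Alternator plays $x\in U\cap Y$, Constrainer's task reduces to choosing an~open $U_1\ni x$ with $U_1\subseteq U$ on which he wins $\gameHs_{U_1}(Y\complemento,i)$, and such a~$U_1$ exists precisely when $x\in A_{i-1}$, since Constrainer's winning open sets behave monotonically in the starting set and $A_{i-1}$ is their union.

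From this characterisation, $A_i\cap Y\subseteq A_{i-1}$ for odd~$i$; combined with the monotonicity $A_{i-1}\subseteq A_i$ already noted in the text, this gives $Y\cap A_i=Y\cap A_{i-1}$. Using the disjoint telescoping decomposition
\[
A_{i-1}=A_0\sqcup (A_1\setminus A_0)\sqcup\cdots\sqcup(A_{i-1}\setminus A_{i-2})
\]
(valid because the $A_j$ are nested), together with the inductive hypothesis at~$i-1$, which identifies $Y\complemento\cap A_{i-1}$ with exactly the odd\=/indexed differences in this decomposition, subtraction leaves $A_0$ together with the even\=/indexed differences, yielding the required form $A_0\cup(A_2\setminus A_1)\cup\cdots\cup(A_{i-1}\setminus A_{i-2})$. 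The even\=/$i$ case is entirely symmetric after swapping $Y$ and~$Y\complemento$.

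The main subtlety is the recursive characterisation of~$A_i$, and within it the assertion that Constrainer wins $\gameHs_V(Y\complemento,i)$ for every open $V\subseteq A_{i-1}$: this follows because whenever Alternator plays $x\in V\cap Y\complemento$, the point $x$ lies in some winning open set $W$ from the union defining~$A_{i-1}$, and Constrainer can follow his winning strategy on~$W$ while intersecting each of his responses with~$V$, so the play never leaves~$V$ and Constrainer still wins — an argument in the spirit of Remark~\ref{rem:strategies}.
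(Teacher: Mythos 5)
Your proof is correct and follows essentially the same route as the paper's: the key step in both is unwinding the first round of $\gameHs_U(Y,i+1)$ to see that Constrainer's reply to Alternator playing $x\in U\cap Y$ witnesses $x\in A_{i-1}$ (hence $Y\cap A_i\subseteq A_{i-1}$ and so $Y\cap A_i=Y\cap A_{i-1}$ for odd $i$), after which the inductive hypothesis and the nestedness of the $A_j$ finish the Boolean computation. Your recursive characterisation $A_i=\bigcup\{U\mid U\cap Y\subseteq A_{i-1}\}$ is a tidy repackaging of the paper's pointwise case analysis, though only its easy inclusion is actually needed for the claim.
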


Notice that by the definition $A_0\subseteq Y$ --- whenever there exists $x\in U\cap Y\complemento$ then Alternator wins $\gameHs_U(Y\complemento,1)$ by playing $x$. Therefore, the above claim holds for $i=0$ as $Y\complemento\cap A_0=\emptyset$.

Assume that Claim~\ref{cl:Y-invariant} holds for $i{-}1$ and consider the two cases for $i$.

\paragraph*{\bf The case of odd $i$.}
In that case we need to prove that $Y\cap A_i$ is of the form from~\eqref{eq:form-of-Y}. Consider a~point $x\in A_i$. First consider the case that $x\in A_{i-1}$. Then by the inductive assumption $x\in Y$ if and only if
\[x\notin (A_1\setminus A_0)\cup(A_3\setminus A_2)\cup\cdots\cup(A_{i-2}\setminus A_{i-3})\]
what is equivalent to the disjunction of $x\in (A_{i-1}\setminus A_{i-2})$ or
\[x\in A_0\cup (A_2\setminus A_1)\cup\cdots\cup(A_{i-3}\setminus A_{i-4}).\]
Thus, $x\in Y$ if and only if
\[x\in A_0\cup (A_2\setminus A_1)\cup\cdots\cup(A_{i-1}\setminus A_{i-2}).\]
Thus, the statement of Claim~\ref{cl:Y-invariant} holds in that case.

Now assume that $x\notin A_{i-1}$. We will prove that in that case $x\notin Y$. Assume contrarily that $x\in Y$. Since $x\in A_i$, there exists an open set $U$ such that $x\in U \subseteq A_i$ and Constrainer wins $\gameHs_U(Y,i+1)$. Consider the first round of $\gameHs_U(Y,i+1)$ in which Alternator plays $x$ and a~winning strategy of Constrainer replies with $V\ni x$. This means that Constrainer has a~winning strategy in $\gameHs_V(Y\complemento,i)$ and therefore by the definition $x\in V\subseteq A_{i-1}$, a~contradiction.

\paragraph*{\bf The case of even $i$.} This case is entirely dual: we take $x\in A_i$ and consider the case that $x\in A_{i-1}$. Then the following conditions are equivalent:
\begin{align*}
x&\notin Y\\
x&\notin A_0\cup (A_2\setminus A_1)\cup\cdots\cup(A_{i-2}\setminus A_{i-3})\\
x&\in (A_1\setminus A_0)\cup(A_3\setminus A_2)\cup\cdots\cup(A_{i-3}\setminus A_{i-4})\text{ or }x\in (A_{i-1}\setminus A_{i-2})\\
x&\in (A_1\setminus A_0)\cup(A_3\setminus A_2)\cup\cdots\cup(A_{i-1}\setminus A_{i-2})
\end{align*}
and therefore Claim~\ref{cl:Y-invariant} holds in that case. If $x\notin A_{i-1}$ then we need to prove that $x\in Y$. Since $x\in A_i$, we know that $x\in U\subseteq A_i$ with Constrainer winning $\gameHs_U(Y\complemento, i+1)$ for some open $U$. Assume contrarily that $x\notin Y$ and as before we see a~contradiction, as $x$ is a~valid move of Alternator in $\gameHs_U(Y\complemento,i+1)$ and therefore $x\in A_{i-1}$.
\end{proof}

\begin{cor}%
\label{cor:char-by-winning-the-game}
The following conditions are equivalent for a~set $Y$:
\begin{enumerate}
\item $Y$ belongs to $\bigcup_{n \in \omega} \ndiff{n}{1}=\boolcomb{1}$.
\item Constrainer wins the game $\gameHs(Y,n)$ for all but finitely many $n$.
\end{enumerate}
\end{cor}

\begin{proof}
It follows \new{directly from Theorem~\ref{thm:decidforsinglefinitediff} and Fact~\ref{ft:diff-monotone}.}
\end{proof}

Since the family of sets defined by prefixes $N_p$ for all finite prefixes $p$ is a~basis of the topology on $\trees_A$, we obtain the following corollary for the case $X=\trees_A$.

\begin{cor}%
\label{cor:playing-prefixes}
Assume that $X=\trees_A$ is the space of all trees and $L\subseteq \trees_A$. Then, when considering strategies of Constrainer in $\gameHs(L,n)$ we can assume that each open set $U_i$ played by him is a~basic open set, i.e.~$U_{i+1}=N_p$ for a~finite prefix $p$ of the currently played tree $t_i$. Then the condition that $U_{i+2}\subseteq U_{i+1}$ boils down to the assumption that $p_{i+2}\supseteq p_{i+1}$.
\end{cor}

\subsection{Decidability of finite levels of the Wadge hierarchy}%
\label{ssec:finite-levels}

In this short subsection we use the game $\gameH(L_1,\ldots , L_n)$ to show that, given a~natural number $n$, it is decidable if a~regular language~$L$ is an~$n$\=/difference of open sets (i.e.~belongs to $\ndiff{n}{1}$). Recall that, without loss of generality (see Corollary~\ref{cor:playing-prefixes}) we can assume that Constrainer plays finite prefixes of the trees played by Alternator, and Alternator has to extend the finite prefixes played by Constrainer.

\begin{exa}
Consider the language
\[ L= \{ t\in\trees_A \mid \text{infinitely many letters $a$ appear in $t$}\}.\]
$L$ is regular and it is easy to check that Alternator wins the game $\gameHs(L,n)$ for every $n \in \omega$. This is because every finite prefix can be extended to a~tree with finitely many $a$ or to a~tree with infinitely many $a$. So $L$ is not a~Boolean combination of open sets (it is easy to see, indeed, that $L$ is a $\bpi{2}$ set but it is not a $\bsigma{2}$ set).
\end{exa}

\begin{lem}%
\label{lem:decidendifference}
Given regular tree languages $L_1, \ldots, L_n$, one can decide who wins the game $\gameH(L_1,\ldots , L_n)$. In particular, given $L$ and $n$, one can decide who wins $\gameHs(L,n)$.
\end{lem}

\begin{proof}
We prove the statement for two regular languages $L_1, L_2$. It is easy to generalise it to $n$ regular languages. The sentences ``Alternator wins the game $\gameH(L_1, L_2)$'' and ``Constrainer wins the game $\gameH(L_1,L_2)$'' can be effectively formalized in Monadic Second\=/order Logic on the complete binary tree. For instance, the sentence for
\[\lq \lq \text{Alternator wins the game } \gameH(L_1, L_2)\text{''} \]
is:
\[\text{there exists a~tree $t_1 \in L_1$ such that for any finite prefix $p$ of $t_1$} \]
\[\text{there exists a~tree $t_2 \in L_2$ that extends $p$}.\]
In a~similar way we can write the sentence that says that Constrainer wins for $n>2$.
\end{proof}

So we obtain:

\begin{cor}%
\label{cor:n-diff-dec}
It is decidable, given a~regular tree language $L$ and $n\in\omega$, whether $L$ is an $n$\=/difference of open sets.
\end{cor}

\begin{proof}
It directly follows from Theorem~\ref{thm:decidforsinglefinitediff} and Lemma~\ref{lem:decidendifference}.
\end{proof}

Hence, since Wadge degrees with Wadge ranks below $\omega$ are formed by Boolean combinations of the levels of the difference hierarchy, we easily obtain:

\begin{thm}
Given a~regular language $L$ and a~Wadge degree ${[A]}_W$ with Wadge rank less than $\omega$, it is decidable if $L$ belongs to ${[A]}_W$.
\end{thm}

\subsection{The infinite variant of the game}%
\label{ssec:inf-game}

We denote by $\gameHinfty(Y)$ the infinite variant of $\gameHs(Y,n)$: $\gameHinfty(Y)$ is the infinite duration game played the same way as $\gameHs(Y,n)$ but the winning condition for Alternator is that he has to survive for infinitely many turns. By $\gameHinfty_U(Y)$ we denote the relativised game.

\begin{rem}
The game $\gameHinfty(Y)$ is determined: every play where Constrainer wins is finite. Therefore, the winning condition for Constrainer is an~open condition, while the winning condition for Alternator is a~closed condition, both in the space\footnote{Since Alternator chooses points in $X$ and Constrainer chooses open sets in $\tau$, each round of the game can be represented as an~element of $X\times\tau$. Thus, the winning condition of the game is a~subset of ${\big(X\times \tau\big)}^\w$.} ${\big(X\times \tau\big)}^\w$, where the sets $X$ and $\tau$ are taken with the discrete topology. Hence, the game is determined by Gale--Stewart Theorem~\cite{gale_games} (see also~\cite{martin_borel_determinacy} for the more general result for Borel sets). 
\end{rem}

\begin{rem}%
\label{rem:can-play-basic-infinite}
In the same fashion as in the case of the finite game, without loss of generality we can assume that Constrainer in his strategies uses only basic open sets, see Corollary~\ref{cor:playing-prefixes}.
\end{rem}

The following fact follows directly from the definition of the two variants of the game.

\begin{fact}%
\label{ft:inf-to-fin-win}
If Alternator wins $\gameHinfty(Y)$ then he wins $\gameHs(Y,n)$ for any $n$.
\end{fact}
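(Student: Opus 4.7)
The plan is to proceed by contraposition using determinacy. Suppose that Alternator does not win $\gameHs(Y,n)$ for some fixed $n$; since this is a~finite-duration game it is trivially determined, so Constrainer has a~winning strategy $\sigma$ in $\gameHs(Y,n)$. I will lift $\sigma$ to a~winning strategy $\sigma^\infty$ for Constrainer in $\gameHinfty(Y)$, thereby contradicting the assumption that Alternator wins the infinite game via the determinacy of $\gameHinfty(Y)$ established in the preceding remark.

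To construct $\sigma^\infty$, I simply follow $\sigma$ during the first $n$ rounds and respond arbitrarily thereafter (for instance by repeating the previous open set $U_{i-1}$). Because $\sigma$ is winning in $\gameHs(Y,n)$, in every play consistent with $\sigma^\infty$ Alternator is forced to fail to produce a~valid point $x_i\in U_{i-1}\cap X_i$ at some round $i\leq n$, at which moment the play is interrupted and Constrainer wins. Hence $\sigma^\infty$ witnesses that Constrainer wins $\gameHinfty(Y)$, completing the contrapositive.

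Essentially no difficulty is anticipated: the point of the argument is just that a~play of $\gameHs(Y,n)$ is an~initial segment of a~play of $\gameHinfty(Y)$ obeying the same move protocol, so surviving infinitely many rounds in particular survives the first $n$. The contrapositive formulation is used only because in this article strategies have been formally defined for Constrainer alone; otherwise one could phrase the proof directly as ``restrict the Alternator strategy from the infinite game to its first $n$ moves''.
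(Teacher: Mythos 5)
Your proof is correct. The paper states this fact without proof, treating it as immediate from the definitions of the two games, so there is no explicit argument to compare against. Your contrapositive route works: Zermelo determinacy of the finite-duration game yields a Constrainer winning strategy $\sigma$ in $\gameHs(Y,n)$ once Alternator fails to win it, and extending $\sigma$ arbitrarily past round $n$ still forces Alternator to fail within the first $n$ rounds of $\gameHinfty(Y)$, so Constrainer wins the infinite game. (You cite determinacy of $\gameHinfty(Y)$ to close the contradiction, but all you actually need there is that the two players cannot both have winning strategies, which holds without determinacy.) Your closing remark is the heart of the matter and is closer to what the paper intends: a play of $\gameHs(Y,n)$ is exactly an initial segment of a play of $\gameHinfty(Y)$ under the same move protocol, so any Alternator strategy that survives forever in particular survives $n$ rounds. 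That direct reading requires neither determinacy nor any worry about which player's strategies are formally defined (one can phrase it as: for every Constrainer strategy $\tau$ in $\gameHs(Y,n)$, extend $\tau$ arbitrarily to $\gameHinfty(Y)$, take an infinite play beating the extension, and truncate). Both routes rest on the same observation, so your proposal is a longer path to the same place rather than a genuinely different argument.
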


\begin{prop}%
\label{pro:fin-to-inf-no}
The converse to Fact~\ref{ft:inf-to-fin-win} is not true, even for regular tree languages $L$.
\end{prop}

\new{
This proposition follows a~posteriori from Theorem~\ref{thm:mainforBC1} and Proposition~\ref{pro:inf-game-char}, using the fact that the involved games are determined and there exist sets in $\bdelta{2}\setminus \bc{\bsigma{1}}$. However, the proof provided here is supposed to provide an~informative illustration of the considered games.
}

\begin{proof}
We have to exhibit a~counterexample. To do that it is convenient to work with an~alphabet with three different symbols, so let $A$ be the alphabet $\{a,b,c\}$.
For the sake of this example, assume that if $t_1, \ldots, t_n$ are trees, by $[t_1, \ldots, t_n]$ we denote the tree

\begin{center}
\begin{tikzpicture}

\node[trnode] (r0) at (-3, 0) {$a$};
\node[toC] (l0) at ($(r0) + (-0.6, -0.8)$) {$t_1$};

\node[trnode] (r1) at ($(r0) + (+0.6, -0.8)$) {$a$};
\node[toC] (l1) at ($(r1) + (-0.6, -0.8)$) {$t_2$};

\coordinate (r2) at ($(r1) + (+0.6, -0.8)$);

\coordinate (r3) at ($(r2) + (+0.6*0.6, -0.8*0.6)$);

\node[trnode] (r4) at ($(r3) + (+0.6, -0.8)$) {$a$};
\node[toC] (l4) at ($(r4) + (-0.6, -0.8)$) {$t_n$};

\node[trnode] (r5) at ($(r4) + (+0.6, -0.8)$) {$b$};

\draw[tredge] (r0) -- (l0);
\draw[tredge] (r1) -- (l1);
\draw[tredge] (r4) -- (l4);

\draw[tredge] (r0) -- (r1);
\draw[tredge] (r1) -- (r2);
\draw[trdots] (r2) -- (r3);
\draw[tredge] (r3) -- (r4);
\draw[tredge] (r4) -- (r5);

\draw[tredge] (r5) edge[trlole] (r5);
\draw[tredge] (r5) edge[trlore] (r5);

\end{tikzpicture}
\end{center}
Now let $L$ be the language of all the trees of the form $[t_1, \ldots, t_n]$ where for every $i \in \{1,\ldots,n\}$ the tree $t_i$ is
\begin{enumerate}
\item either a~tree with every node labelled by $a$,
\item or a~tree that contains only finitely many letters different than $c$, i.e.~a~tree for which there exists a~finite prefix $p$ of $t_i$ such that $t_i(u)=c$ for any node $u \notin p$.
\end{enumerate}
If $t_i$ respects the first condition we say that $t_i$ is a \emph{first case tree}, if it respects the second condition we say it is a \emph{second case tree}.  We first prove that Alternator wins $\gameHs(L,n)$ for any $n$. Fix a~natural number $n$, we provide a~winning strategy for Alternator for the game $\gameHs(L,2n)$. We define the following sets of trees:
\begin{itemize}
\item For $i \in \{1,\ldots, n\}$ let $L_i$ be the set of trees $[t_1, \ldots, t_n]$ such that the trees $t_1, \ldots, t_{i-1}$ are second case trees and the trees $t_i, \ldots, t_n$ are first case trees. It is clear that $L_i \subseteq L$ for any $i$.
\item We define $L'_i$ as $L_i$ except that the tree $t_i$ contains both $a$ and $b$ nodes, but no $c$ nodes. Obviously $L'_i$ is disjoint from $L$.
\end{itemize}
Every prefix of a~tree in $L_i$ can be completed into a~tree in $L'_i$ and every prefix of a~tree in $L'_i$ can be completed into a~tree in $L_{i+1}$. It follows that Alternator wins the game
\[ \gameH(L_1,L'_1, L_2, L'_2, \ldots, L_n, L'_n)\]
and therefore also Alternator wins $\gameHs(L,2n)$.

Now we move to a~proof that Alternator loses $\gameHinfty(L)$. Consider the tree played by
Alternator in the first round. Since this tree belongs to $L$, it must be of the
form $[t_1, \ldots, t_n]$ with $t_1, \ldots, t_n$ either first case trees or second case trees. Let $p_1$ be a~finite prefix of this tree which contains the node $\dR^n$. Constrainer uses a~strategy, which preserves the following properties:

\begin{enumerate}
\item All prefixes played by Constrainer extend the prefix $p_1$. Consequently, all the trees played by Alternator are of the form $[s_1, \ldots, s_n]$. Indeed, the prefix $p_1$ guarantees that the played trees $t$ satisfy $t(\dR^n)=b$ and $t(\dR^k)=a$ for $k<n$. Hence, all the modifications done by Alternator are relative to the trees $t_1, \ldots, t_n$ and  therefore for $k \in \{1, \ldots, n\}$ it is meaningful to talk about the $k$th coordinate of the tree played by Alternator in a~round which refers to the tree $s_k$.

\item Suppose that Alternator plays a~tree $[s_1, \ldots, s_n]$ in some round $i$. Let $p_i$ be a~finite prefix of this tree such that for every coordinate $k \in \{1, \ldots, n\}$ we have:
\begin{itemize}
\item If $s_k$ is a~second case tree then $p_i$ contains a~prefix of $s_k$ such that under that prefix every node is labelled by $c$.
\item If $s_k$ contains some $b$ then $p_i$ contains some $b$ in the subtree $s_k$.
\end{itemize}
In the next round Constrainer chooses $p_i$. Consequently, if $i,j$ are rounds with $i < j$ and $k \in \{1, \ldots,n\}$ then
\begin{itemize}
\item If the $k$th coordinate of Alternator's tree in round $i$ is a~second case tree then also the $k$th coordinate of Alternator's tree in round $j$ has to be a~second case tree.
\item If the $k$th coordinate of Alternator's tree in round $i$ contains a $b$ then also the $k$th coordinate of Alternator's tree in round $j$ contains a $b$.
\end{itemize}
So, in an odd\=/numbered round, Alternator's tree belongs to the language and therefore all the coordinates with a $b$ are second case trees. In an even\=/numbered round, Alternator's tree is outside the language. Therefore, when going from an odd\=/numbered round to the next even\=/numbered round, Alternator must change some coordinate from a~first case tree without $b$ to a~tree with $b$. It follows that the number coordinates with $b$ increases in each even\=/numbered round. Since this can happen at most $n$ times, Alternator must lose after at most $2n$ rounds.
\end{enumerate}
The proof is complete.
\end{proof}

Now we can give a~non\=/effective characterisation of the class $\bdelta{2}$ for topological spaces that are completely metrizable (for the levels $\ndiff{n}{1}$ we gave a~characterisation that holds in general for any topological space, but here we are forced to require complete metrizability).

\begin{prop}%
\label{pro:inf-game-char}
Let $X$ be a~completely metrizable topological space \new{with a~countable basis of the topology (i.e.~$X$ is \emph{second\=/countable})}. The following conditions are equivalent for a~subset $Y$ of $X$:
\begin{enumerate}
\item Constrainer wins $\gameHinfty(Y)$.
\item $Y \in \bdelta{2}(X)$.
\end{enumerate}
\end{prop}

\begin{proof}
The proof is very similar to the analysis of other games of this kind, for instance Banach--Mazur game, see~\cite[Section~8.H]{kechris_descriptive}. 

\paragraph*{\bf Implication $1 \Rightarrow 2$.}

Assume that Constrainer has a~winning strategy $\sigma$ in $\gameHinfty(Y)$. By Remark~\ref{rem:can-play-basic-infinite} we can assume that $\sigma$ plays only basic open sets.

Notice that $\sigma$ (seen as a~tree) is well\=/founded because the strategy is winning and therefore it admits no infinite play. We will prove by induction on the structure of $\sigma$ that if $\langle U_0,x_1,U_1,\ldots,U_{i-1}\rangle $ is a~position compatible with $\sigma$ then $Y\cap U_{i-1}\in\bdelta{2}$. Consider such a~position $P=\langle U_0,x_1,U_1,\ldots,U_{i-1}\rangle$ and assume that the thesis holds for all the positions extending that one. If the position $P$ is instantly winning for Constrainer (i.e.~Alternator cannot play a~single round from $P$) then, depending on parity of $i$, either $U_{i-1}\subseteq Y\complemento$ or $U_{i-1}\subseteq Y$. In both cases the inductive thesis holds. Now assume that $P$ is not instantly winning for Constrainer. By the symmetry lets assume that $i$ is odd, i.e.~Alternator is forced to play $x_i\in U_{i-1}\cap Y$. Let ${(B_x)}_{x\in U_{i-1}\cap Y}$ be the indexed family of basic open sets $B_x$ played by $\sigma$ as a~response to Alternator playing $x$. By the inductive assumption we know that for each $x\in U_{i-1}\cap Y$ we have $Y\cap B_x\in\bdelta{2}$.

By the definition of the family $B_x$ we know that
\[Y\cap U_{i-1}=\bigcup_{x\in U_{i-1}\cap Y} \big(B_x\cap Y\big),\]
where the union is in fact countable since there is only countably many basic open sets in $X$. As every set taken in the union is $\bsigma{2}$ (in fact $\bdelta{2}$) we know that $Y\cap U_{i-1}$ is $\bsigma{2}$. Dually
\[Y\complemento\cap U_{i-1}=\left(U_{i-1}\setminus \bigcup_{x\in U_{i-1}\cap Y} B_x\right)\ \cup\ \bigcup_{x\in U_{i-1}\cap Y} \big(B_x\cap Y\complemento\big),\]
which again is a $\bsigma{2}$ presentation of $Y\complemento\cap U_{i-1}$.

Thus, the above induction implies that $Y\cap U_0=Y\cap X = Y$ is $\bdelta{2}$.

\paragraph*{\bf Implication $2 \Rightarrow 1$.}

We need to prove that if $Y \in \bdelta{2}$ then Constrainer wins $\gameHinfty(Y)$. Indeed, if $Y$ is in $\bdelta{2}$ we can write $Y$ and its complement as
\[ Y= \bigcap_{j \in \omega} A_j\quad \text{ and }\quad Y\complemento = \bigcap_{j \in \omega} B_j,\]
where all the sets $A_i$ and $B_i$ are open. Now we can describe a~winning strategy for Constrainer in $\gameHinfty(Y)$. Suppose $i=1,2,\ldots$ is the round we are playing and $i$ is odd (resp. $i$ is even). Let $j=\lfloor \frac{i-1}{2} \rfloor$. Assume that $U_{i-1}$ is the open set that was played last ($U_0=X$) and let $x_i$ be the point played by Alternator in the current round. By the definition of the game, if $i$ is odd then $x_i\in Y$ and otherwise $x_i\in Y\complemento$. Let Constrainer play $U_i$ such that $\overline{U_i}\subseteq U_{i-1}$; $U_i\subseteq A_j$ (resp. $U_i\subseteq B_j$); and the diameter of $U_i$ is smaller than $2^{-i}$. Such a~set exists because $x_i\in U_{i-1}\cap A_j$ (resp. $x_i\in U_{i-1}\cap B_j$).

Clearly it is a~valid strategy of Constrainer. Consider an infinite play consistent with this strategy. Since $X$ is Polish and the sets $U_i$ are of decreasing diameter with $\overline{U_i}\subseteq U_{i-1}$, there must exists $x\in\bigcap_{n\in\w} U_i$. But by the construction of $U_i$, such $x$ must belong to both $\bigcap_{j\in\w}A_j=Y$ and $\bigcap_{j\in\w}B_j=Y\complemento$, a~contradiction. Thus, each play consistent with the above strategy is finite and therefore winning for Constrainer.
\end{proof}

\begin{cor}
Let $X$ be a~completely metrizable topological space and $Y$ be a~subset of~$X$. Then $Y \in \bdelta{2} \setminus \boolcomb{1}$ if and only if Alternator wins $\gameHs(Y,n)$ for all $n$ but he loses $\gameHinfty(Y)$.
\end{cor}

\section{The algebra on trees}%
\label{sec:algebra}

Our next goal is decidability of the class $\bigcup_{n \in \omega} \ndiff{n}{1}=\boolcomb{1}$. Notice that Corollary~\ref{cor:n-diff-dec} gives us a~semi\=/algorithm for deciding if a~regular language is in $\bigcup_{n \in \omega} \ndiff{n}{1}$. Indeed, for $n = 1, 2, \ldots$ we can use Corollary~\ref{cor:n-diff-dec} to decide if $L\in\ndiff{n}{1}$. If for some $n$ this is the case then $L\in\bigcup_{n \in \omega} \ndiff{n}{1}$ and the algorithm terminates. Otherwise, the algorithm does not terminate. Section~\ref{sec:main-thm} provides an alternative algebraic algorithm that always terminates and solves the above decision problem.

The algebraic approach we define in this section is based on the so\=/called \emph{Myhill--Nerode equivalence} that allows to distinguish trees based on their behaviour when put into certain \emph{contexts}. 

\begin{defi}
A \emph{multicontext} over an alphabet $A$ is a~partial tree $t$ over $A\sqcup \{\hole\}$ where $\hole\notin A$ such that: $t$ does not contain any unary nodes and a~node of $t$ is a~leaf if and only if it is labelled $\hole$. A \emph{port} of a~multicontext $C$ is any node of $t$ that is labelled $\hole$ (i.e.~any leaf of $t$).
\end{defi}

The number of ports is called the \emph{arity} of the multicontext. \emph{A~priori} a~multicontext may have infinitely many ports and in that case the arity is $\infty$. A~multicontext with exactly one port is called a~\emph{context}. Given a~multicontext $C$ and a~valuation $\eta$ which maps ports of $C$ to trees in $\trees_A$, we write $C[\eta]$ for the tree obtained by replacing each port $u$ by the tree $\eta(u)$. The tree $C[\eta]$ is said to \emph{extend} the multicontext $C$. If $L$ is a~set of trees and $C$ is a~multicontext then by $C[L]$ we denote the set of trees obtained by plugging trees of $L$ in the ports of $C$ in all the possible ways. The set of all trees extending a~multicontext $C$ is denoted by $C[\ast]$. If $C$ is a~multicontext, possibly with infinitely many ports, and $t$ is a~tree, we denote by $C[t]$ the tree obtained by putting $t$ in every port of $C$.

\begin{exa}
The multicontext $C_0$ consists of only one node --- the root. It is called the \emph{trivial context} and denoted $\hole$. $C_1$ is a~tree, it has no ports, and $C_1[\ast]$ is $\{C_1\}$. The multicontext $C_2$ is a~context and if we complete $C_2$ with a~tree we obtain a~tree where the root label is $a$ and the left subtree of the root is labelled with only letters $b$. Finally, $C_3$ is a~finite multicontext and $C_3[\ast] = \{ t \mid t(\epsilon)=a\}$.
\begin{center}
\begin{tikzpicture}
\node[trport] (z0) at (-6, 0) {};
\node[toC] at ($(z0)+(0, 1.0)$) {$C_0$};

\node[trnode] (z1) at (-3, 0) {$a$};
\node[toC] at ($(z1)+(0, 1.0)$) {$C_1$};
\node[trnode] (l1) at ($(z1) + (-0.6, -0.8)$) {$b$};
\node[trnode] (r1) at ($(z1) + (+0.6, -0.8)$) {$a$};

\draw[tredge] (z1) -- (l1);
\draw[tredge] (z1) -- (r1);
\draw[tredge] (l1) edge[trlole] (l1);
\draw[tredge] (l1) edge[trlore] (l1);
\draw[tredge] (r1) edge[trlole] (r1);
\draw[tredge] (r1) edge[trlore] (r1);

\node[trnode] (z2) at (-0, 0) {$a$};
\node[toC] at ($(z2)+(0, 1.0)$) {$C_2$};
\node[trnode] (l2) at ($(z2) + (-0.6, -0.8)$) {$b$};
\node[trport] (r2) at ($(z2) + (+0.6, -0.8)$) {};

\draw[tredge] (z2) -- (l2);
\draw[tredge] (z2) -- (r2);
\draw[tredge] (l2) edge[trlole] (l2);
\draw[tredge] (l2) edge[trlore] (l2);

\node[trnode] (z3) at (+3, 0) {$a$};
\node[toC] at ($(z3)+(0, 1.0)$) {$C_3$};
\node[trport] (l3) at ($(z3) + (-0.6, -0.8)$) {};
\node[trport] (r3) at ($(z3) + (+0.6, -0.8)$) {};

\draw[tredge] (z3) -- (l3);
\draw[tredge] (z3) -- (r3);
\end{tikzpicture}
\end{center}
\end{exa}

Now we focus on contexts, i.e.~multicontexts with exactly one port. We write $\Context$ for the set of all non\=/trivial contexts over $A$. Given two contexts $C,D$ we write $C\cdot D$ for the context obtained by replacing the port of $C$ with $D$. Moreover, if $C\neq\hole$ (i.e.~$C$ is non\=/trivial) then we write $C^\infty$ for the infinite tree
\[ C \cdot C \cdot C \cdot C \cdots\]
\new{Notice that the above definition does not construct a~tree for $C=\hole$ (the trivial context). That is why we restrict the set of contexts $\Context$ to only non\=/trivial ones.}

\begin{rem}
It is easy to verify that $\cdot$ is \emph{associative}, therefore $\big(\Context\sqcup\{\hole\},\cdot\big)$ is an infinite monoid, with ${\cdot}$ associative and the trivial context $\hole$ as the neutral element.
\end{rem}

Now we define the two Myhill--Nerode equivalence relations: one for trees and one for contexts. These equivalence relations depend on a~fixed language $L\subseteq \trees_A$. 

\begin{defi}%
\label{def:myhill-nerode}
In the Myhill--Nerode equivalence for trees, we say that two trees $t$ and $t'$ are $L$\=/\emph{equivalent} \new{(denoted $t\Teq_L t'$)} if 
\[ C[t] \in L \Longleftrightarrow C[t'] \in L\quad \text{ for every multicontext $C$.}\]
\end{defi}

\begin{rem}%
\label{rem:L-equi}
\new{
Notice that because of the duality between $L$ and the complement $L^\cp$ in the above definition, we know that the equivalence relations $\Teq_L$ and $\Teq_{L^\cp}$ coincide. Moreover, since $\hole$ (i.e.~the trivial context) is a~multicontext we know that if $t\in L$ and $t'\in L^\cp$ then $t\not\Teq_L t'$.
}
\end{rem}

\new{
The above remark says that the equivalence relation $\Teq_L$ is always able to distinguish trees from $L$ from those not in $L$. In the following example these are the only two classes of $\Teq_L$, i.e.~$\Teq_L$ is not able to distinguish anything else.}

\begin{exa}
Consider the language $L = \{ t \in \trees_A \mid t(\epsilon) = a\}$. In this case we have just two equivalence classes that are $L$ and the complement $L\complemento$. The multicontext that establishes if a~tree belongs to $L$ or $L\complemento$ is the trivial context $\hole$.
\end{exa}

To give a~similar definition for contexts, we use a~variant of multicontexts where the ports can be substituted by contexts and not trees.

\begin{defi}
A \emph{context environment} over an alphabet $A$ is a~partial tree labelled by $A\sqcup\{\hole\}$ such that: $t$ has no leaves and a~node of $t$ is unary if and only if it is labelled $\hole$. A~\emph{port} of a~context environment is any node of $t$ that is labelled $\hole$ (i.e.~any unary node of $t$).
\end{defi}

Given a~context environment $E$ and a~non\=/trivial context $C$, we write $E[C]$ for the tree obtained by substituting $C$ for every port of $E$ in the following way. \new{For each port $v$ of $E$ we first plug a~copy of $C$ into that port and then plug the subtree of $E$ below that port into the port of $C$. Notice that there may be more than one occurrence of $\hole$ on a~branch of $E$ and the above plugging needs to be performed in each of them, see the left\=/most branch of $E[C]$ in Example~\ref{ex:plugging}. This guarantees that $E[C]$ is a~tree --- it does not contain any port of~$E$ nor any of the copies of the port of~$C$.}

\new{The assumption that $C$ is non\=/trivial is important in the above construction, because if $E$ is a~unary tree labelled everywhere $\hole$ (i.e.~a~$\hole$\=/labelled infinite path) and $C=\hole$ is the trivial context, then $E[C]$ (if defined at all) does not contain any node.}

\begin{exa}%
\label{ex:plugging}
In the following figure, $C$ is a~context, $E$ is a~context environment, and $E[C]$ is a~tree.
\begin{center}
\begin{tikzpicture}

\newcommand{\dotize}[1]{
\coordinate (dotzl) at ($(#1) + (+0.4*0.8, -0.8*0.9)$);
\coordinate (dotzr) at ($(#1) + (-0.4*0.8, -0.8*0.9)$);

\draw[trdots] (#1) -- (dotzl);
\draw[trdots] (#1) -- (dotzr);
}

\tikzstyle{szary}=[fill=gray!40]

\newcommand{\vstr}{1.0}

\node[trnode, szary] (z1) at (-10, 0) {$c$};
\node[toC] at ($(z1)+(0, 1.0)$) {$C$};
\node[trnode, szary] (l1) at ($(z1) + (+0.6, -\vstr)$) {$d$};
\node[trport] (r1) at ($(z1) + (-0.6, -\vstr)$) {};

\dotize{l1}

\draw[tredge] (z1) -- (l1);
\draw[tredge] (z1) -- (r1);

\node[trnode] (z2) at (-6, 0) {$e$};
\node[toC] at ($(z2)+(0, 1.0)$) {$E$};
\node[trport] (l2) at ($(z2) + (-1.0, -\vstr)$) {};
\node[trport] (r2) at ($(z2) + (+1.0, -\vstr)$) {};

\draw[tredge] (z2) -- (l2);
\draw[tredge] (z2) -- (r2);

\node[trnode] (ll2) at ($(l2) + (-0.0, -\vstr)$) {$f$};
\node[trnode] (rr2) at ($(r2) + (+0.0, -\vstr)$) {$g$};

\draw[tredge] (l2) -- (ll2);
\draw[tredge] (r2) -- (rr2);

\node[trport] (lll3) at ($(ll2) + (-0.6, -\vstr)$) {};
\node[trnode] (llr3) at ($(ll2) + (+0.6, -\vstr)$) {$h$};

\node[trnode] (llr4) at ($(lll3) + (+0.0, -\vstr)$) {$i$};

\draw[tredge] (ll2) -- (lll3);
\draw[tredge] (ll2) -- (llr3);
\draw[tredge] (lll3) -- (llr4);

\dotize{llr3}
\dotize{rr2}
\dotize{llr4}

\node[trnode] (z3) at (0.5, 0) {$e$};
\node[toC] at ($(z3)+(0, 1.0)$) {$E[C]$};
\node[trnode, szary] (l3) at ($(z3) + (-1.4, -\vstr)$) {$c$};
\node[trnode, szary] (r3) at ($(z3) + (+1.4, -\vstr)$) {$c$};
\node[trnode, szary] (lr3) at ($(l3) + (+0.9, -\vstr)$) {$d$};
\node[trnode, szary] (rr3) at ($(r3) + (+0.7, -\vstr)$) {$d$};
\node[trnode] (ll3) at ($(l3) + (-0.9, -\vstr)$) {$f$};
\node[trnode] (rl3) at ($(r3) + (-0.7, -\vstr)$) {$g$};

\node[trnode] (llr4) at ($(ll3) + (+0.8, -\vstr)$) {$h$};
\node[trnode, szary] (lll4) at ($(ll3) + (-0.8, -\vstr)$) {$c$};

\node[trnode, szary] (lllr5) at ($(lll4) + (+0.6, -\vstr)$) {$d$};
\node[trnode] (llll5) at ($(lll4) + (-0.6, -\vstr)$) {$i$};

\dotize{lr3}
\dotize{rr3}
\dotize{rl3}
\dotize{llr4}
\dotize{lllr5}
\dotize{llll5}

\draw[tredge] (z3) -- (l3);
\draw[tredge] (z3) -- (r3);
\draw[tredge] (l3) -- (ll3);
\draw[tredge] (l3) -- (lr3);
\draw[tredge] (r3) -- (rl3);
\draw[tredge] (r3) -- (rr3);
\draw[tredge] (ll3) -- (lll4);
\draw[tredge] (ll3) -- (llr4);
\draw[tredge] (lll4) -- (llll5);
\draw[tredge] (lll4) -- (lllr5);

\end{tikzpicture}
\end{center}
\end{exa}

\begin{defi}%
\label{def:cont-equiv}
We define two non\=/trivial contexts $C$ and $C'$ to be $L$\=/\emph{equivalent} \new{(denoted $C\Ceq_L C'$)} if
\[ E[C] \in L \Longleftrightarrow E[C'] \in L\quad \text{ for every context environment $E$}.\]
\end{defi}

\subsection{The syntactic tree algebra \texorpdfstring{$(H_L,V_L)$}{(HL,VL)}}

We denote by $H_L$ the set of the equivalence classes of trees with respect to $L$, the elements of $H_L$ are called \emph{tree types}. Similarly we denote by $V_L$ the set of the equivalence classes of non\=/trivial contexts with respect to $L$, the elements of $V_L$ are called \emph{context types}. \new{By $\one_L$ we denote an~additional type corresponding to the trivial context $\hole$. Since $V_L$ denotes types of the non\=/trivial contexts, in general $\one_L\notin V_L$}.

We will now prove a~number of results, showing that the sets $H_L$ and $V_L$ bear certain algebraic structure. We start with a~simple fact following directly from the definition of $L$\=/equivalence.

\begin{fact}%
\label{ft:equi-compose}
For every multicontext $D$ and context environment $E$, the following two operations preserve the $L$\=/equivalence:
\begin{enumerate}
\item the operation $t \rightarrow D[t]$, i.e.~if $t\Teq_L t'$ then $D[t]\Teq_L D[t']$,
\item the operation $C \rightarrow E[C]$, i.e.~if $C\Ceq_L C'$ then $E[C]\Teq_L E[C']$.
\end{enumerate}
\end{fact}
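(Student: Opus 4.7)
The plan is to reduce both claims to the defining property of $\Teq_L$ (resp.~$\Ceq_L$) by \emph{composing} the given syntactic objects into a single multicontext (resp.~context environment), so that the two-step plugging on the left-hand side is realised as a one-step plugging on the right-hand side, at which point the equivalence to be proved follows directly from the hypothesis.

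For part (1), fix a multicontext $D$ and trees $t \Teq_L t'$. To show $D[t] \Teq_L D[t']$ I would pick an arbitrary multicontext $C$ and verify $C[D[t]] \in L \Leftrightarrow C[D[t']] \in L$. The key construction is the composite multicontext $C \ast D$ obtained from $C$ by replacing each of its ports by a fresh copy of $D$. Then $(C \ast D)[t] = C[D[t]]$ and $(C \ast D)[t'] = C[D[t']]$, so the desired equivalence follows directly from $t \Teq_L t'$ applied to the multicontext $C \ast D$. The only thing left to check is that $C \ast D$ is a valid multicontext: it contains no unary nodes (since neither $C$ nor $D$ does), and its leaves are exactly the $\hole$-labelled ports of the substituted copies of $D$.

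For part (2) the same strategy applies but in a different ambient setting. Given a multicontext $D$ and a context environment $E$, I would form the context environment $D \diamond E$ by substituting a copy of $E$ into each port of $D$. Checking that $D \diamond E$ is a valid context environment is the central structural step: it has no leaves (the leaves of $D$ were precisely its ports, which are replaced by copies of $E$, and $E$ itself has no leaves), and a node of $D \diamond E$ is unary if and only if it is a $\hole$-labelled port inherited from some copy of $E$ (because $D$ has no unary nodes). Once this is in place, one verifies the identity $(D \diamond E)[C] = D[E[C]]$ for every non-trivial context $C$, and applying $C \Ceq_L C'$ to the environment $D \diamond E$ gives $D[E[C]] \in L \Leftrightarrow D[E[C']] \in L$ for every multicontext $D$, which is exactly $E[C] \Teq_L E[C']$.

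I do not expect any hard part in this argument; it is a purely structural manipulation. The only mild subtlety is keeping track of which objects are required to be trees, contexts, multicontexts, or context environments, and verifying that the two composition operations $(C,D) \mapsto C \ast D$ and $(D,E) \mapsto D \diamond E$ really produce objects of the correct type. Both checks reduce to an elementary case analysis comparing the labelling and arity constraints defining multicontexts and context environments.
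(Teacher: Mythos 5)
Your proof is correct and is precisely the argument the paper leaves implicit when it states that the fact ``follows directly from the definition of $L$\=/equivalence''; the paper gives no explicit proof, and the composition technique you use (plugging $D$ into $C$'s ports, resp.~$E$ into $D$'s ports) is exactly the same device the paper spells out one step later in the proof of Lemma~\ref{lem:preserve}. Your structural checks on the composites are the right things to verify, and they go through.
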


\noindent
\new{The following corollary follows directly from Fact~\ref{ft:equi-compose}.}

\begin{cor}%
\label{cor:lift-types}
\new{
Given a~multicontext $D$, a~context environment $E$, and elements $h\in H_L$, $v\in V_L$, the sets
\begin{align*}
D[h] &\eqdef \{D[t]\mid t\in h\}\subseteq \trees_A,\\
E[v] &\eqdef \{E[C]\mid C\in v\}\subseteq \trees_A.
\end{align*}
are $L$\=/equivalence classes, i.e.~elements of $H_L$.
}
\end{cor}

\new{Thus, each multicontext $D$ (resp.~context environment $E$) induces an~operation $H_L\to H_L$ (resp.~$V_L\to H_L$).}

As expressed by the next lemma, the following natural operations on contexts and trees respect the $L$\=/equivalence.

\begin{lem}%
\label{lem:preserve}
The following operations respect $L$\=/equivalence.
\begin{enumerate}
\item The composition of contexts $(C_1,C_2) \rightarrow C_1 \cdot C_2$.
\item Substituting a~tree in the port of a~context $(C,t) \rightarrow C[t]$.
\item Infinite iteration of a~non\=/trivial context $C \rightarrow C^\infty$.
\item For every symbol $c \in A$, the operations $t \mapsto c(t,\hole)$ and $t \mapsto c(\hole, t)$, that produce new contexts with roots labelled $c$, and $t$ plugged as the left or the right subtree under the root, respectively.
\end{enumerate}
\end{lem}

\begin{proof}
We prove all the items.
\paragraph*{\bf Item~1} Let $C_1$, $C_2$, $D_1$, $D_2$ be contexts such that $C_1$ is $L$\=/equivalent with $D_1$ and $C_2$ is $L$\=/equivalent with $D_2$. Let $E$ be some context environment, we prove that
\[E[C_1\cdot C_2] \in L \Leftrightarrow E[D_1\cdot D_2]  \in L.\]
We construct from $E$, $C_1$ and $E$, $D_2$ respectively two new context environments
$E_C$ and $E_D$ such that
\[E_C[C_2] = E[C_1\cdot C_2] \text{ and } E_D[D_1] =E[D_1\cdot D_2].\]
\new{
By the symmetry, lets focus on $E_C$. This context environment is obtained from $E$ by inserting into each occurrence of $\hole$ a~whole copy of $C_1$ (including its own $\hole$). Thus, the overall number of holes of $E_C$ and $E$ is the same and the above equality follows.
}

Using the $L$\=/equivalence, we have:
\[
\begin{array}{lll}
E_D[C_1] \in L & \Leftrightarrow & E_D[D_1] \in L \\
E_C[C_2] \in L & \Leftrightarrow & E_C[D_2] \in L
\end{array}
\]
\noindent Note that by the definition of $E_C$ and $E_D$ we have $E_C[D_2]=E_D[C_1]$. It follows that
\[E[C_1\cdot C_2] \in L \Leftrightarrow E_C[C_2] \in L \Leftrightarrow E_C[D_2] \in L \Leftrightarrow E_D[C_1] \in L \Leftrightarrow E_D[D_1] \in L.\]
Therefore, $C_1\cdot C_2$ and $D_1\cdot D_2$ are $L$\=/equivalent.

\paragraph*{\bf Item~2} Let $C$, $C'$ be $L$\=/equivalent contexts and $t$, $t'$ be $L$\=/equivalent trees. We want to prove that $C[t]$ and $C'[t']$ are two $L$\=/equivalent trees. Let $D$ be a~generic multicontext. We prove that
\[D[C[t]] \in L \Leftrightarrow D[C'[t']]  \in L.\]
Let $D'$ be a~multicontext such that $D'[t]=D[C[t]]$ and $E$ a~context environment such that
$E[C'] = D[C'[t']]$. Using the $L$\=/equivalence we have:

\[
\begin{array}{lll}
D'[t] \in L & \Leftrightarrow & D'[t'] \in L, \\
E [C] \in L & \Leftrightarrow & E[C'] \in L.
\end{array}
\]

\noindent By the definition of $E$ and $D'$ we have $D'[t']=E[C]$. It follows that
\[D[C[t]] \in L \Leftrightarrow
D'[t] \in L \Leftrightarrow
D'[t'] \in L \Leftrightarrow
E[C] \in L \Leftrightarrow
E[C'] \in L \Leftrightarrow
D[C'[t']] \in L.\]
Therefore, $C[t]$ and $C'[t']$ are $L$\=/equivalent.

\paragraph*{\bf Item~3} Let $C$, $C'$ be $L$\=/equivalent non\=/trivial contexts. We want to prove that $C^\infty$ and $C'^\infty$ are two $L$\=/equivalent trees. Let $D$ be some
multicontext, we prove that
\[ D[C^\infty] \in L \Leftrightarrow D[C'^\infty] \in L.\]
Consider a~context environment $E$ constructed from $D$ by replacing each port of $D$ with an infinite chain of ports. Using $L$\=/equivalence of $C$ and $C'$ we get:

\[
E[C] \in L \Leftrightarrow E[C'] \in L
\]

\noindent
By definition of $E$ we have $E[C]=D[C^\infty]$ and $E[C']=D[C'^\infty]$. It follows that
\[D[C^\infty] \in L \Leftrightarrow D[C'^\infty] \in L.\]
Therefore, $C^\infty$ and $C'^\infty$ are $L$\=/equivalent.

\paragraph*{\bf Item~4} We only do the proof for $t \mapsto c(t,\hole)$, the other operations is handled symmetrically. Let $t,t'$ be $L$\=/equivalent trees. Let $E$ be some context environment, we prove that
\[E[c(t,\hole)] \in L \Leftrightarrow E[c(t',\hole)] \in L.\]
By inserting $c$ into the ports of $E$ we construct a~multicontext $C$ such that for all trees $s$,
$C[s]=E[c(s,\hole)]$. Using $L$\=/equivalence of $t$ and $t'$ we get:
\[
C[t] \in L \Leftrightarrow C[t'] \in L
\]
Therefore, $c(t,\hole)$ and $c(t',\hole)$ are $L$\=/equivalent.
\end{proof}

\begin{cor}%
\label{cor:algebraic-struct}
The above operations on $\trees_A$ and $\Context$ induce the following algebraic structure on $(H_L, V_L)$:
\begin{itemize}
\item composition of context types $V_L\ni u,v\longmapsto u\cdot v\in V_L$,
\item action of $V_L$ on $H_L$ i.e.~$V_L\times H_L\ni (u,h)\longmapsto u\cdot h\in H_L$,
\item infinite composition $V_L\ni u \longmapsto u^\infty \in H_L$,
\item creation of contexts $H_L\ni h \longmapsto c(\hole, h), c(h,\hole)\in V_L$ for $c\in A$.
\end{itemize}
Moreover, $(V_L,{\cdot})$ is a~semigroup acting over $H_L$ via $\cdot$, $\big(V_L\sqcup\{\one_L\},\cdot\big)$ is a~monoid, and $(H_L,V_L)$ satisfy the axioms of a~Wilke algebra~\cite{wilke_algebraic}.
\end{cor}

\new{
A~pair of sets $(H,V)$, equipped with the operations as in Corollaries~\ref{cor:lift-types} and~\ref{cor:algebraic-struct}, is called a~\emph{tree algebra}. Once restricted to the operations from Corollary~\ref{cor:algebraic-struct}, it becomes a~\emph{thin algebra}\footnote{The name comes from the theory of \emph{thin trees}, sometimes called \emph{scattered trees}, see~\cite{idziaszek_thin_journal}.}. An~advantage of a~thin algebra is that (given that $H$ and $V$ are finite) it can be represented effectively on a~computer by giving the sets and the multiplication tables for the five involved operations. Because of the simplicity of the considered classes of topological complexity, the operations involved in our characterisations involve only operations of thin algebras, see Equations~\eqref{eq:cond-bool} and~\eqref{eq:cond-limit}.
}

Notice that the pair $(\trees_A,\Context)$ with the actual operations of composition of trees and contexts is in fact a~tree algebra, we call it the \emph{free} tree algebra.

\new{
The \emph{syntactic morphism} of a~language $L$, denoted by $\alpha_L$, is the two\=/sorted function
\[ \fun{\alpha_L}{(\trees_A,\Context)}{(H_L,V_L)},\]
that maps a~tree $t\in\trees_A$ into its \new{${\Teq_L}$\=/equivalence} class $\alpha_L(t)\in H_L$ \new{and a~context $C\in\Context$ into its ${\Ceq_L}$\=/equivalence class $\alpha_L(C)\in V_L$}.
}

The next fact is considered folklore, see for instance~\cite[Proposition~1.11 in Annex~A on page~442]{perrin_pin_words}. The usual notation for the number $\monoid$ is $\w$, however \new{it is better not to use} that symbol in this paper to avoid confusion with the infinite repetition.

\begin{fact}%
\label{fact_monoid}
Given any finite semigroup $V$, there is a~number $\monoid_V$ (denoted just $\monoid$ if $V$ is known from the context) such that for each element $v$ of $V$ the element $v^\monoid$ is an idempotent, i.e.~$v^\monoid = v^\monoid \cdot v^\monoid$.
\end{fact}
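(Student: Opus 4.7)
The plan is to first establish, for a single element $v \in V$, that some power $v^n$ is idempotent, and then find a~single exponent $\monoid_V$ that works simultaneously for all elements of $V$.

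For the first step, I would fix $v \in V$ and consider the infinite sequence $v, v^2, v^3, \ldots$ of elements of~$V$. Since $V$ is finite, by the pigeonhole principle there must exist indices $i < j$ with $v^i = v^j$. Setting $m_v \eqdef i$ and the period $p_v \eqdef j - i$, associativity of the semigroup operation gives $v^{k + p_v} = v^k$ for every $k \geq m_v$, and by iteration $v^{k + \ell \cdot p_v} = v^k$ for every $\ell \in \omega$. In particular, whenever $n \geq m_v$ is a~multiple of $p_v$, we have $v^n \cdot v^n = v^{2n} = v^n$, so $v^n$ is idempotent.

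For the second step, I would pick $\monoid_V$ to be any natural number that is simultaneously at least $\max_{v \in V} m_v$ and divisible by every $p_v$ for $v \in V$. A~concrete choice that works is $\monoid_V \eqdef |V|!$: since for each $v \in V$ the pre-period $m_v$ and the period $p_v$ are both at most $|V|$ (because the sequence $v, v^2, \ldots, v^{|V|+1}$ has a~repetition within its first $|V|+1$ terms), both conditions are satisfied. With this choice, the argument from the previous paragraph applies uniformly, giving $v^{\monoid_V} \cdot v^{\monoid_V} = v^{\monoid_V}$ for every $v \in V$.

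There is no substantial obstacle here; this is the standard \emph{idempotent power} lemma for finite semigroups. The only subtlety worth noting is the move from the element-wise existence of an~idempotent power to a~uniform exponent, which is handled by taking a~common multiple as above.
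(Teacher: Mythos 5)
Your proof is correct and is exactly the standard idempotent-power argument for finite semigroups; the paper does not prove this fact itself but cites it as folklore (Proposition~1.11 in the annex of the Perrin--Pin book), and your argument is the one behind that reference. Both steps --- eventual periodicity of the powers of a single element, and the uniform exponent $|V|!$ --- are handled correctly.
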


\subsection{Congruences}

\new{
To be able to compute the syntactic tree algebra $(H_L,V_L)$ we will need to start with its approximation. This notion is formalised as follows.
}

\begin{defi}
Consider a~two\=/sorted function $\fun{\alpha}{(\trees_A,\Context)}{(H,V)}$ that is surjective onto a~pair of sets $(H,V)$. We say that it is a~\emph{congruence} if the following two conditions hold:
\begin{itemize}
\item for every multicontext $D$ and every pair of trees $t,t'\in\trees_A$ such that $\alpha(t)=\alpha(t')$ we have
\[\alpha\big(D[t]\big)=\alpha\big(D[t']\big),\]
\item for every context environment $E$ and every pair of contexts $C,C'\in\Context$ such that $\alpha(C)=\alpha(C')$ we have
\[\alpha\big(E[C]\big)=\alpha\big(E[C']\big).\]
\end{itemize}
We say that $\alpha$ \emph{recognises} a~set of trees $L\subseteq\trees_A$ if $L=\alpha^{-1}(S)$ for some subset $S\subseteq H$.
\end{defi}

\begin{rem}%
\label{rem:cong-to-finer}
If $\alpha$ is a~congruence that recognises $L$ then its kernel is finer than the $L$\=/equivalence in the sense that for each pair of trees $t$, $t'$ and pair of contexts $C$, $C'$ we have:
\[\alpha(t)=\alpha(t') \Rightarrow t\Teq_L t'\quad\text{ and }\quad \alpha(C)=\alpha(C')\Rightarrow C\Ceq_L C'.\]
\end{rem}

\begin{proof}
\new{
Assume that $S\subseteq H$ is such that $\alpha^{-1}(S)=L$. Consider the first claim and take two trees $t,t'\in\trees_A$ such that $\alpha(t)=\alpha(t')$. We need to prove that $t\Teq_L t'$. Take a~multicontext~$D$ and assume by the symmetry that $D[t]\in L$, i.e.~$\alpha\big(D[t]\big)\in S$. However, the assumption that $\alpha$ is a~congruence implies that $\alpha\big(D[t']\big)=\alpha\big(D[t]\big)$ and therefore $D[t']\in \alpha^{-1}(S)=L$.
}
\end{proof}

\new{
Corollary~\ref{cor:lift-types} implies that the syntactic morphism $\fun{\alpha_L}{(\trees_A,\Context)}{(H_L,V_L)}$ is a~congruence. Lemma~\ref{lem:preserve} implies that $\alpha_L$ is a~homomorphism of thin algebras, i.e.~it commutes with the operations of these algebras --- for instance for the operation $v\cdot v'$ one needs to observe that for every two contexts $C,C'\in\Context$ we have
\[\alpha_L(C)\cdot \alpha_L(C') = \alpha_L\big(C\cdot C'\big).\]
Additionally, Remark~\ref{rem:L-equi} implies that $\alpha_L$ recognises~$L$.
}

\subsection{Computing \texorpdfstring{$(H_L,V_L)$}{(HL,VL)}}

We will now prove that the syntactic algebras can be constructed effectively, as stated by the following proposition.

\begin{prop}%
\label{pro:alg-finite-for-reg}
If $L$ is a~regular language then both $H_L$ and $V_L$ are finite. Moreover, given a~parity non\=/deterministic automaton $\Aa$ recognising $L$, one can compute in \EXPTIME the syntactic algebra $(H_L,V_L)$ for $L$, together with the structure of thin algebra on $(H_L,V_L)$. Additionally, both $H_L$ and $V_L$ have at most exponentially many elements in the number of states of $\Aa$.
\end{prop}

The following remark shows that finiteness of $H_L$ and $V_L$ is not sufficient for regularity (the remark is motivated by Running Example~2 in~\cite{bojanczyk_monads_dlt}).

\begin{rem}
Both $H_L$ and $V_L$ are finite for any language defined in \msou (see~\cite{bojanczyk_bounding}).
\end{rem}

\new{
The rest of this subsection is devoted to a~proof of Proposition~\ref{pro:alg-finite-for-reg}. We follow implicitly the approach from~\cite{bojanczyk_monads_dlt}, see~\cite[Theorem~3.1]{bojanczyk_monads_arxiv} in the full version of the paper. A~very similar construction is also given in~\cite[Appendices~D.1 and~D.2]{idziaszek_ef} or~\cite{idziaszek_phd}.
}

\new{
The proof is divided into two stages. The first stage shows how to construct some congruence $\fun{\alpha}{(\trees_A,\Context)}{(H,V)}$ recognising $L$, such that the sets $H$ and $V$ are at most exponential in the number of states of $\Aa$ --- a~non\=/deterministic automaton recognising $L$. The second stage merges certain elements of that congruence to obtain the syntactic algebra $(H_L,V_L)$.
}

\paragraph*{\bf The first stage --- a~congruence $\alpha$} We assume the notions of parity automata over infinite trees as in~\cite{thomas_languages}: such an~automaton is a~tuple $\Aa=\langle A, Q, q_\init, \delta,\Omega\rangle$, where $A$ is an~alphabet, $Q$ is a~set of states, $q_\init\in Q$ is an~initial state, $\delta\subseteq Q\times A \times Q^2$ is a~transition relation and $\fun{\Omega}{Q}{\{i,\ldots,j\}}$ is a~priority assignment. We use the standard concepts of an~accepting run of a~given automaton over a~given tree.

Define a~two\=/sorted function $\fun{\alpha}{(\trees_A,\Context)}{\Big(2^Q,2^{Q\times\{i,\ldots,j\}\times Q}\Big)}$ as follows:
\begin{align*}
\alpha(t) &\eqdef \big\{q\in Q\mid \text{$\Aa$ has an accepting run over $t$ from $q$}\big\},\\
\alpha(C) &\eqdef \big\{(q,\ell,q')\in Q\times\{i,\ldots,j\}\times Q\mid\\
&\qquad\qquad\text{$\Aa$ has an accepting run over $C$ that:}\\
&\qquad\qquad\qquad \text{starts from $q$ in the root of $C$,}\\
&\qquad\qquad\qquad \text{reaches $q'$ in the port of $C$,}\\
&\qquad\qquad\qquad \text{and the maximal priority seen in that run}\\
&\qquad\qquad\qquad\qquad \text{on the path to from the root to the port equals $\ell$}\big\}.\\
\end{align*}

\new{
Let $(H,V)$ be the range of $\alpha$. Intuitively, the value $\alpha(t)$ says from which states the automaton $\Aa$ can accept a given tree. In particular, $t\in L$ if and only if $q_\init\in\alpha(t)$ (thus, $\alpha$ recognises $L$). Similarly, the value $\alpha(C)$ for a context $C$ contains information about the relation between the states $\Aa$ can have in the root and in the port of $C$. However, as our algebra needs to deal with the operation $C^\infty$, we additionally need to know what is the maximal priority on the considered path, to make sure that it satisfies the parity condition once repeated infinitely many times.
}

\begin{claim}
$\alpha$ is a~congruence.
\end{claim}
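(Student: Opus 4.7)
The plan is to verify both defining conditions of a congruence by decomposing accepting runs of $\Aa$ along the structural boundaries of the multicontext or context environment. The key conceptual point is that $\alpha(t)$ and $\alpha(C)$ were defined precisely to record the information that $\Aa$ needs to ``see'' at the plug-in boundary, so that $\alpha$-equivalent trees/contexts are indistinguishable from the automaton's standpoint.

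For the multicontext case, I would first introduce the notion of a \emph{partial run} on a multicontext $D$: a labelling of the nodes of $D$ by states of $\Aa$ such that (i) at every internal node the transition relation is respected, and (ii) on every infinite branch of $D$ that does not reach a port, the parity condition holds. I would then observe the elementary equivalence: an accepting run of $\Aa$ on $D[t]$ starting from $q$ is the same data as a partial run $\rho$ on $D$ from $q$ together with, for each port $u$ of $D$, an accepting run of $\Aa$ on $t$ starting from the state $\rho(u)$ assigned by $\rho$ to $u$. The latter exists iff $\rho(u) \in \alpha(t)$. Hence
\[
q\in \alpha(D[t]) \iff \exists \rho \text{ partial run on } D \text{ from } q \text{ with } \rho(u)\in\alpha(t) \text{ for every port } u,
\]
and the right-hand side depends on $t$ only through $\alpha(t)$. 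So $\alpha(t)=\alpha(t')$ forces $\alpha(D[t])=\alpha(D[t'])$.

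For the context environment case, I would do the analogous decomposition, which is slightly more delicate because of the parity condition along branches carrying infinitely many $\hole$'s. Define a \emph{$C$-abstract run} on $E$ as a labelling that assigns to every node of $E$ a state (following $\delta$ at binary internal nodes), and to every port $v$ of $E$ a triple $(q_v,\ell_v,q'_v)\in\alpha(C)$, where $q_v$ is the state arriving at $v$ from above and $q'_v$ is the state entering the $E$-subtree below $v$; moreover, on every infinite branch of $E$ the parity condition must hold with respect to the sequence of priorities given by the $A$-labels along the branch together with the priorities $\ell_v$ contributed at each traversed port. The essential observation is that filling the ports with genuine accepting $C$-run segments in $E[C]$ produces exactly the accepting runs of $\Aa$ on $E[C]$: each triple in $\alpha(C)$ can be realised by an actual root-to-port run of $\Aa$ on $C$ whose maximum priority on the path equals $\ell_v$, and the limsup of priorities along any infinite branch of $E[C]$ coincides with the limsup computed from the $A$-priorities in $E$ together with the $\ell_v$'s at infinitely many traversed ports (for branches with only finitely many $\hole$'s, the tail lies entirely inside $E$ and the issue does not arise). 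Consequently
\[
q\in\alpha(E[C]) \iff \exists \text{ $C$-abstract run on } E \text{ from } q,
\]
and again the right-hand side depends on $C$ only through $\alpha(C)$.

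The main obstacle is the second case: I need to justify carefully that the single summary number $\ell_v$ (the maximum priority along the $C$-segment, as recorded in $\alpha(C)$) is sufficient to determine the parity condition on infinite branches of $E[C]$. This reduces to the standard fact that the limsup of a concatenated sequence of finite blocks is determined by the sequence of per-block maxima whenever the blocks are finite, combined with the fact that the port of any context is reached in finitely many steps. Once this is in place, the equivalence above is immediate and the congruence property follows.
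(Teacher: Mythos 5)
Your proposal is correct and matches the paper's approach: decompose an accepting run of $\Aa$ at the boundary between the outer $D$ (resp.\ $E$) and the plugged-in $t$ (resp.\ $C$), and observe that $\alpha$ records exactly the boundary data needed to reconstruct an accepting run after swapping. The paper dispatches the context-environment case with the single word ``analogous,'' whereas you correctly identify and resolve the genuine subtlety there --- that on a branch of $E[C]$ traversing infinitely many ports the parity condition depends only on the $E$-priorities together with the per-copy maxima $\ell_v$, since the limsup of a concatenation of finite blocks equals the limsup of the block maxima; this is precisely the reason $\alpha(C)$ carries that middle component.
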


\begin{proof}
\new{
We start with the first bullet of the definition (the second bullet speaking about contexts is analogous). Consider two trees $t,t'\in \trees_A$ such that $\alpha(t)=\alpha(t')=h\in H$ and let $D$ be a~multicontext. Our aim is to prove that $\alpha\big(D[t]\big)=\alpha\big(D[t']\big)$. Assume by the symmetry that $q\in \alpha\big(D[t]\big)$, i.e.~the tree $D[t]$ can be accepted from a~state $q\in Q$. We need to prove that $q\in \alpha\big(D[t']\big)$ (the symmetric case is analogous). There exists a~run of $\Aa$ over $D[t]$ that is accepting and starts from the state $q$ in the root of $D[t]$. By the assumption that $\alpha(t)=\alpha(t')$ we can construct a new run of $\Aa$ over $D[t']$ that is also accepting and starts from $q$, it is enough to replace the runs over $t$ by the respective runs over $t'$, below each of the ports of $D$.
}
\end{proof}

We will now define explicitly the operations of thin algebra on $(H,V)$, for $h\in H_{\Aa}$, $v,v'\in V_{\Aa}$, and $c\in A$:
\begin{align*}
v\cdot v' &\eqdef \{(q,\max(\ell,\ell'), q'')\mid (q,\ell,q')\in v, (q',\ell',q'')\in v'\},\\
v\cdot h &\eqdef \{q\in Q\mid (q,\ell,q')\in v, q' \in h\},\\
v^\infty &\eqdef {(v^\monoid)}^\infty &\text{see Fact~\ref{fact_monoid},}\\
e^\infty &\eqdef \{q\in Q\mid (q,\ell,q')\in e, (q', 2\ell',q')\in e\}&\text{for $e\in V$ idempotent,}\\
c(\hole,h) &\eqdef \{(q,\Omega(q), q_{\dL})\mid (q,c,q_\dL,q_\dR)\in \delta, q_\dR\in h\},\\
c(h,\hole) &\eqdef \{(q,\Omega(q), q_{\dR})\mid (q,c,q_\dL,q_\dR)\in \delta, q_\dL\in h\}.
\end{align*}

It is relatively easy to check that the above operations are compatible with $\alpha$, i.e.~$\alpha$ is a~homomorphism of thin algebras. Thus, we have constructed a~congruence $\fun{\alpha}{(\trees_A,\Context)}{(H,V)}$ that recognises $L$ and additionally we have given explicitly a~structure of thin algebra on $(H,V)$.

\paragraph*{\bf The second stage --- a~quotient of $(H,V)$.}
We now provide a~sketch of the second stage of the construction: computing the equivalence relation $\approx$ on $(H,V)$ defined as:
\begin{equation}
t\Teq_L t' \Leftrightarrow \alpha(t) \approx \alpha(t')\quad\text{ and }\quad C\Ceq_L C' \Leftrightarrow \alpha(C) \approx \alpha(C').
\label{eq:def-of-approx}
\end{equation}
First notice that this is a~correct definition that does not depend on the choice of witnesses because of Remark~\ref{rem:cong-to-finer}.

\new{
For the sake of simplicity we will focus on the case of trees, i.e.~we need to compute in \EXPTIME if $h\approx h'$. However, Equation~\eqref{eq:def-of-approx} says that $h\not\approx h'$ if and only if there exists a~multicontext $D$ such that $D[h] \in L \Leftrightarrow D[h']\notin L$ (the question whether $D[h]\in L$ is well\=/posed because $\alpha$ is a~congruence recognising $L$).
}

\new{
This goal is achieved by constructing a~non\=/deterministic parity tree automaton $\Cc_{h}$ of size polynomial in $\Aa$, that recognises the language of those multicontexts $D$ such that $D[h] \in L$. The automaton $\Cc_h$ non\=/deterministically guesses an~accepting run of $\Aa$ over $D[t]$ for some hypothetical tree $t$. When it reaches a~port of $D$, it verifies if the current state $q$ of $\Aa$ belongs to $h$. Once these automata are constructed, the question whether $h\approx h'$ boils down to checking if $\lang(\Cc_h)=\lang(\Cc_{h'})$, which can be done in \EXPTIME\@.
}

\new{
Now, that the equivalence relation $\approx$ is computed, it is enough to divide $(H,V)$ by it. Because of~\eqref{eq:def-of-approx}, there is a~bijection between the quotient $(H,V)/\approx$ and the syntactic algebra $(H_L,V_L)$. Moreover, since both $\alpha$ and $\alpha_L$ are homomorphisms of thin algebras, the thin algebra operations in $(H,V)$ must preserve $\approx$. This means that we can obtain the structure of a~thin algebra on $(H,V)/\approx$ that corresponds to the thin algebra structure of $(H_L, V_L)$. This concludes the proof of Proposition~\ref{pro:alg-finite-for-reg}.
}

\subsection{Quotients}%
\label{ssec:quotients}

Similarly as in the case of finite words, the syntactic algebra induces a~natural notion of a~quotient of a~language: given a~multicontext $D$ with $n$ holes and a~language $K$ of trees, by $D^{-1}(K)$ we denote the set of tuples $(t_1,\ldots,t_n)$ such that the valuation $\eta$ mapping the $i$th port of $D$ into $t_i$ satisfies
\[D[\eta]\in K.\]
Notice that if $D$ is a~context then in fact $D^{-1}(K)$ is a~set of $1$\=/tuples of trees, \new{which we identify with trees, i.e.~$D^{-1}(K)\subseteq \trees_A$.}

If $L$ is a~language recognised by a~morphism $\alpha_L$ then the fact whether $D[t]\in L$ depends only on $\alpha_L(D)\in V_L\sqcup\{\one_L\}$ and $\alpha_L(t)\in H_L$. Therefore, it makes sense to write $v^{-1}(L)$ for $v\in V_L\sqcup\{\one_L\}$. Also in that case $v^{-1}(L)$ is a~language recognised by the morphism $\alpha_L$. Directly from the definition we get that
\begin{equation}%
\label{eq:quot-comp}
{(vu)}^{-1}(L)=u^{-1}\big(v^{-1}(L)\big).
\end{equation}

\subsection{The game on types}%
\label{ssec:game-on-types}

Now we want to extend the definition of the game $\gameH$ to sets of contexts. Recall that contexts are defined as a~special case of partial trees, with an additional port label that appears in exactly one leaf. \new{Notice that if $p$ is a~finite partial context (i.e.~$p$ has exactly one leaf labelled $\hole$) then every context in the basic open set $N_p$ must have the same port as $p$ --- in such a~case we say that $p$ \emph{fixes the port}. Also, if $U$ is an~open set of contexts and $C\in U$ then there exists a~sufficiently big prefix $p\subset C$ such that $C\in N_p\subseteq U$ and $p$ fixes the port. Thus, without loss of generality we can assume that we consider only those basic open sets $N_p$ where $p$ does fix the port. This means that every two contexts $C,C'\in N_p$ must have the same position of the port.}

\new{
This yields the definition of a~game $\gameV(K_1,\ldots,K_n)$ for a~sequence $K_1,\ldots ,K_n$ of context languages (i.e.~subsets of $\Context\sqcup\{\hole\}$), which is played by Alternator and Constrainer. The game is played in $n$ rounds. Round $i=1$ is special: Alternator chooses a~context $C_1 \in K_1$. Let $u$ be the port of the context $C_1$. This port will stay fixed for the rest of the game; all contexts produced by Alternator
will have their port in the node $u$. Next, Constrainer chooses a~finite prefix $D_1$ of $C_1$, which has one of its leaves in the node $u$ (i.e.~fixes the port).
}

A~subsequent round $i \in \{2,\ldots,n\}$ is played as follows. Let $D_{i-1}$ be the finite partial tree over $A\sqcup\{\hole\}$ chosen by Constrainer in the previous round with a~leaf in the node $u$.
\begin{itemize}
\item Alternator provides a~context $C_i$, which extends $D_{i-1}$, belongs to $K_i$, and has its port in the node $u$. If there is no such context, the game is interrupted and Constrainer wins immediately.
\item Constrainer chooses a~finite prefix $D_i$ of $C_i$ and which has a~leaf in the node $u$.
\end{itemize}

\noindent
If Alternator manages to survive $n$ rounds then he wins. Recall that by the definition of the syntactic morphism, a~tree type $h \in H_L$ is actually equal to the set of trees $\alpha_L^{-1}(h)$, similarly for a~context type $v\in V_L$. Therefore, it makes sense to talk about the games $\gameH(h_1, \ldots, h_n)$, and $\gameV(v_1,\ldots,v_n)$ for sequences of types. Using these games we define the following sets of sequences of types:
\begin{defi}
We define two sets:
\begin{align*}
\gameH_L&=\{ (h_1, \ldots, h_n) \in {(H_L)}^\ast \mid \text{ Alternator wins $\gameH(h_1, \ldots, h_n)\}$},\\
\gameV_L&= \{ (v_1, \ldots, v_n) \in {(V_L)}^\ast \mid \text{ Alternator wins $\gameV(v_1, \ldots, v_n)\}$}.
\end{align*}
\end{defi}
A~comment on notation is in order here. The sets $\gameH_L$ and $\gameV_L$ contain words, over alphabets $H_L$ and $V_L$, respectively. Usually when dealing with words, one omits the brackets and commas and writes $abc$ instead of $(a, b, c)$. When the alphabet is $V_L$ this leads to ambiguity, since the expression $vwu$ can be interpreted as a~word with a~single letter obtained by multiplying the three context types $v$, $w$, and $u$, or a~three\=/letter word over the alphabet $V_L$. These two interpretations should not be confused, so we write $(v_1, \ldots , v_n)$ for $n$\=/letter words over the alphabet $V_L$. For the sake of uniformity, we also write $(h_1, \ldots, h_n)$ for $n$\=/letter words over the alphabet $H_L$, although there is no risk of ambiguity here.

It turns out that the sets of words $\gameH_L$ and $\gameV_L$ have specific structure. We say that a~word $u$ is a~\emph{subword} of $w$ if $u$ can be obtained from $w$ by removing some letters.

\begin{fact}%
\label{ft:removing-letters}
\new{Both sets $\gameH_L$ and $\gameV_L$ are closed under removing letters, i.e.~if $u$ is a~subword of $w$ and $w\in\gameH_L$ then also $u\in \gameH_L$ (similarly for $\gameV_L$).}
\end{fact}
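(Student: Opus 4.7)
The approach is a standard simulation argument. Given a winning strategy $\sigma$ for Alternator in the long game, I would construct a winning strategy $\sigma'$ for the short game where Alternator internally plays the long game, inserting ``phantom'' rounds for the letters that were removed. At each phantom round the phantom Constrainer is chosen to be as lenient as possible, so that $\sigma$'s subsequent suggestions remain valid.

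Concretely, for $\gameH_L$: suppose $w=(h_1,\ldots,h_n)\in\gameH_L$ and $u=(h_{i_1},\ldots,h_{i_k})$ is a subword, and fix a winning strategy $\sigma$ for Alternator in $\gameH(h_1,\ldots,h_n)$. The strategy $\sigma'$ in $\gameH(h_{i_1},\ldots,h_{i_k})$ is defined by maintaining an internal run of the long game satisfying the invariants (a)~the $j$-th round of the short game is identified with the $i_j$-th round of the long game, so that Alternator's actual move $y_j$ equals the move $x_{i_j}$ produced by $\sigma$; and (b)~for phantom indices $r\in\{i_{j-1}+1,\ldots,i_j-1\}$, the phantom Constrainer plays $U_r:=U_{r-1}$, which is a valid Constrainer move (the same open set, trivially containing the point $x_r$). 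When the real Constrainer responds in the short game with $V_j$, I set $U_{i_j}:=V_j$ in the simulation and continue. Validity is immediate: since $\sigma$ is a winning (hence well-defined) strategy in the long game, each $x_r$ lies in $h_r\cap U_{r-1}$, so in particular $y_j=x_{i_j}\in h_{i_j}\cap V_{j-1}$. Survival for $k$ rounds follows from $\sigma$'s survival for $n$ rounds.

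For $\gameV_L$ the same idea works, with one point requiring care: the port $u$ is fixed by Alternator's move in round~$1$. If $i_1=1$ there is nothing new to check. If $i_1>1$, then the port is fixed by the phantom first move $x_1$ produced by $\sigma$, and Alternator simply commits to that port when making his actual first move $y_1=x_{i_1}$ (which has the same port $u$ by construction). At each phantom round $r$, phantom Constrainer plays the finite prefix $D_r:=D_{r-1}$ (with $D_1$ chosen as any finite prefix of $x_1$ whose unique leaf is at $u$); this is a valid finite prefix of $x_r$ since $x_r$ extends $D_{r-1}$ and its port is at $u$. Everything else goes through as in the $\gameH_L$ case.

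The main conceptual point — and the only place where one needs to be careful — is the treatment of the first real move in the short game when $i_1>1$: Alternator must perform a phantom prefix of the long-game simulation before making any real move, and he must commit to the port fixed during that phantom first round. This is not really an obstacle, only a bookkeeping issue, since Alternator fully controls both his real and phantom moves. The rest is routine verification that $\sigma$'s validity transfers to $\sigma'$ and that $k$ rounds of the short game are simulated inside $n$ rounds of the long game.
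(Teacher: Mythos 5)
Your proof is correct and fills in exactly the simulation argument that the paper's one-line proof (``It is clear from the definitions of the games $\gameH$ and $\gameV$'') takes for granted: phantom Constrainer moves $U_r := U_{r-1}$ (resp.\ $D_r := D_{r-1}$) on the skipped indices maintain the invariant $U_{i_j-1} = V_{j-1}$, so Alternator's long-game moves transfer verbatim to the short game, and the port-fixing issue in $\gameV$ is handled correctly by committing to the port of the phantom first move.
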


\begin{proof}
\new{It is clear from the definitions of the games $\gameH$ and $\gameV$.}
\end{proof}

\new{The order induced by the subword relation (known also as the Higman's order) has the following finiteness property.}

\begin{lem}[Higman's Lemma, see~\cite{higman_lemma}]
The set of finite words $A^\ast$ over a~finite alphabet $A$ with the~subword ordering is a~well\=/quasi order: there is no infinite antichain nor an~infinite descending chain.
\end{lem}

\begin{fact}%
\label{ft:removing-regular}
If $L\subseteq A^\ast$ is closed under removing letters then $L$ is regular.
\end{fact}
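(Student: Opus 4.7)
The plan is to work with the complement $L^\cp = A^\ast \setminus L$ and exploit the fact that it becomes upward\=/closed under the subword ordering. Since $L$ is closed under removing letters, a word $w$ lies in $L^\cp$ whenever it contains some $w' \in L^\cp$ as a~subword; in other words, $L^\cp$ is upward\=/closed in the subword order.

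Next I would use Higman's Lemma to extract a~finite set of minimal elements. In any well\=/quasi order, an~upward\=/closed set has only finitely many minimal elements, since an~infinite set of minimal elements would form an~infinite antichain, contradicting the well\=/quasi\=/ordering. Applying this to $L^\cp \subseteq A^\ast$, I obtain a~finite set $\{u_1, \ldots, u_k\} \subseteq L^\cp$ of minimal elements such that
\[L^\cp \,=\, \bigcup_{i=1}^{k} \big\{ w \in A^\ast \mid u_i \text{ is a~subword of } w \big\}.\]

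The final step is to observe that each set of super\=/words of a~fixed word is regular: if $u_i = a_{i,1} a_{i,2} \cdots a_{i,m_i}$ then
\[\big\{ w \in A^\ast \mid u_i \text{ is a~subword of } w \big\} \,=\, A^\ast\, a_{i,1}\, A^\ast\, a_{i,2}\, A^\ast \cdots A^\ast\, a_{i,m_i}\, A^\ast,\]
which is described by a~regular expression. Hence $L^\cp$ is a~finite union of regular languages, so $L^\cp$ is regular, and therefore $L$ is regular as well. There is no real obstacle here: the whole argument is essentially an~application of Higman's Lemma together with the easy observation that each principal upper set in the subword order is regular.
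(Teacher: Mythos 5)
Your proposal is correct and takes essentially the same route as the paper: both observe that $L$ is downward\=/closed (equivalently, $L^\cp$ is upward\=/closed) in the subword order, invoke Higman's Lemma to get a finite set of forbidden subwords, and note that forbidding a fixed subword is a regular condition. You simply spell out the last step more explicitly by writing the regular expression $A^\ast a_{i,1} A^\ast \cdots A^\ast a_{i,m_i} A^\ast$ for each principal upper set.
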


\begin{proof}
Consider $L\subseteq A^\ast$ that is closed under removing letters. Then $L$ forms a~downward\=/closed set with respect to the subword relation. Thus, Higman's Lemma implies that there exists a~finite set of words $w_1,\ldots,w_N$ such that $w$ does not belong to $L$ if and only if one of $w_1,\ldots,w_N$ is a~subword of $w$. Such a~condition is a~regular condition.
\end{proof}

\begin{cor}
Both $\gameH_L$ and  $\gameV_L$ are regular languages of finite words.
\end{cor}

\begin{proof}
Both languages are closed under removing letters and therefore Fact~\ref{ft:removing-regular} applies.
\end{proof}

The above corollary is amusing, but useless \new{for our needs}, because it does not say how to compute automata for $\gameH_L$ and $\gameV_L$ as a~function of a~representation of the language $L$.

\begin{lem}%
\label{lem:compose-winh-with-multicontext}
The following properties hold \new{for each context $C$ and context environment~$E$}:
\begin{enumerate}
\item $(h_1, \ldots, h_n) \in \gameH_L$ implies $(C[h_1], \ldots, C[h_n])\in \gameH_L$.
\item\label{it:composes-E-and-v} $(v_1, \ldots, v_n) \in \gameV_L$ implies $(E[v_1], \ldots, E[v_n]) \in \gameH_L$.
\item\label{it:compose-v-and-w}
 $(v_1,\ldots,v_n),(w_1,\ldots,w_n)  \in \gameV_L$ implies $(v_1w_1,\ldots,v_{n}w_{n}) \in \gameV_L$.
\item\label{it:compose-v-and-h} $(v_1,\ldots,v_n)\in \gameV_L ,(h_1,\ldots,h_n)  \in \gameH_L$ implies $(v_1h_1,\ldots,v_{n}h_{n}) \in \gameH_L$.
\item\label{it:compose-v-infty} $(v_1,\ldots,v_n) \in \gameV_L$  implies $(v_1^\infty,\ldots,v_n^\infty) \in \gameH_L$.
\item $(h_1,\ldots,h_n)  \in \gameH_L$ implies $(c[\hole,h_1],\ldots,c[\hole,h_n]) \in \gameV_L$.
\item $(h_1,\ldots,h_n)  \in \gameH_L$ implies $(c[h_1,\hole],\ldots,c[h_n,\hole]) \in \gameV_L$.
\end{enumerate}
\end{lem}

\begin{proof}
All properties are proved by composing strategies, we prove the first one. All other properties are proved similarly. Assume that $(h_1,\ldots,h_n)  \in \gameH_L$, and consider some multicontext $C$ (possibly with infinitely many ports). For all $i\leq n$ let $L_i$ be the set of trees that are Alternator's first move in some winning strategy for $\gameH(h_i,\dots,h_n)$. Note that since $(h_1,\dots,h_n)\in \gameH_L$, $L_i$ is non\=/empty for all $i$.

\new{Lemma~\ref{lem:compose-winh-with-multicontext} follows directly from the following claim.}

\begin{claim}
For all $i \leq n$, for all trees $t$ obtained by plugging trees of $L_i$ in the ports of $C$, Alternator has a~winning strategy in $\gameH(C[h_i],\dots, C[h_n])$ such that the tree chosen in round $1$ is $t$.
\end{claim}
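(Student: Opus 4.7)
The plan is to prove the claim by downward induction on $i\in\{n,n-1,\dots,1\}$, using a strengthened inductive hypothesis that accommodates the prefix constraints imposed by Constrainer in the course of the game. The base case $i=n$ is immediate: since $L_n\subseteq h_n$, the tree $t=C[\eta]$ with $\eta(u)\in L_n$ for each port $u$ lies in $C[h_n]$, and Alternator wins the one-round game $\gameH(C[h_n])$ by playing $t$.

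For the inductive step, suppose Alternator plays $t=C[\eta]$ in round~$1$ with $\eta(u)\in L_i$ for every port $u$ of $C$. By Corollary~\ref{cor:playing-prefixes} we may assume Constrainer responds with an open set of the form $N_p$ for some finite prefix $p$ of $t$. For each port $u$ of $C$ visible in $p$, the prefix $p$ induces a prefix $p_u$ of $\eta(u)$; since $\eta(u)\in L_i$ is the first move of some winning strategy for $\gameH(h_i,\dots,h_n)$, the response of that strategy to Constrainer's move $N_{p_u}$ yields a tree $\eta'(u)\in h_{i+1}$ that extends $p_u$ and begins a winning strategy for the \emph{relativized} game $\gameH_{N_{p_u}}(h_{i+1},\dots,h_n)$. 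For the ports $u$ not visible in $p$, pick $\eta'(u)\in L_{i+1}$ arbitrarily; this set is non-empty because $(h_{i+1},\dots,h_n)\in\gameH_L$ by Fact~\ref{ft:removing-letters}. The resulting tree $t'=C[\eta']$ lies in $C[h_{i+1}]$, extends $p$, and is therefore a legal move for Alternator in round~$2$.

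The main obstacle is that $t'$ does not satisfy the hypotheses of the claim at level $i+1$: at visible ports, $\eta'(u)$ is merely a first move of a relativized winning strategy and need not belong to $L_{i+1}$. I would resolve this by strengthening the inductive statement to: for every finite partial tree $q$ and every valuation $\eta$ of the ports of $C$ such that $C[\eta]$ extends $q$, with $\eta(u)$ a first move of a winning strategy for the relativized game $\gameH_{N_{q_u}}(h_i,\dots,h_n)$ at each port $u$ visible in $q$ and $\eta(u)\in L_i$ at the remaining ports, Alternator wins $\gameH_{N_q}(C[h_i],\dots,C[h_n])$ with first move $C[\eta]$. The original claim corresponds to the case where $q$ is trivial, and the inductive construction above produces exactly the data required to invoke the strengthened hypothesis at level $i+1$. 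An equivalent viewpoint is to regard Alternator as maintaining, at each port $u$ of $C$, a private sub-game that ``wakes up'' in the first round $r_u$ in which $u$ becomes visible: this sub-game is an instance of $\gameH(h_{i+r_u-1},\dots,h_n)$, which belongs to $\gameH_L$ by Fact~\ref{ft:removing-letters}, and whose winning strategy supplies Alternator's moves at $u$ for exactly the remaining $n-i-r_u+2$ rounds, matching the length of the outer game.
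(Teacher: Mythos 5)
Your proof is correct and uses the same downward induction as the paper, but the obstacle you identify --- that $\eta'(u)$ "need not belong to $L_{i+1}$" --- does not actually exist, so the strengthened induction hypothesis you introduce is unnecessary. If $s'$ is the first move of a winning strategy in the relativized game $\gameH_{N}(h_{i+1},\dots,h_n)$, then $s'\in L_{i+1}$: for any open $V\ni s'$, the set $V\cap N$ is a legal Constrainer reply in the relativized game, so Alternator wins $\gameH_{V\cap N}(h_{i+2},\dots,h_n)$; since $V\cap N\subseteq V$ and $\gameH_U$ only gets easier for Alternator as $U$ grows, Alternator also wins $\gameH_V(h_{i+2},\dots,h_n)$, which shows that $s'$ is the first move of a winning strategy in the unrelativized game $\gameH(h_{i+1},\dots,h_n)$, i.e.~$s'\in L_{i+1}$. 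This is precisely the (implicit) observation the paper relies on when it asserts that any prefix $p_s$ of $s\in L_{i-1}$ can be completed to a tree $s'\in L_i$. With this in hand, the valuation $\eta'$ you construct already satisfies the hypotheses of the original claim at level $i+1$, and the extra bookkeeping about ports visible in $q$, while sound, can simply be dropped.
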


\begin{proof}
We proceed by induction on $i$. For $i=n$ this is obvious. Assume the result holds for $i$ and we prove it for $i-1$. Let $p$ be a~prefix of $t$, for all subtree $s$ plugged into a~port of $C$, $p$ yields some (possibly empty) prefix $p_s$ of $s$. Since $s \in L_{i-1}$, $p_s$ can be completed into a~tree $s'\in L_i$. It follows that $p$ can be completed into $t'$ obtained by plugging trees of $L_i$ in the ports of $C$. By induction hypothesis, Alternator has winning strategy in $\gameH(C[h_i],\dots, C[h_n])$ such the tree chosen in round $1$ is $t'$. Finally we conclude that Alternator has winning strategy
in $\gameH(C[h_{i-1}],\dots,C[h_n])$ such the tree chosen in round~$1$ is~$t$.
\end{proof}
This completes the proof of Lemma~\ref{lem:compose-winh-with-multicontext}.
\end{proof}

\begin{defi}%
\label{def:alternation}
We define the \emph{alternation} of a~finite word to \new{be the length of the word obtained by iteratively eliminating letters that are identical to their predecessors}. We say that a~set of words has \emph{unbounded alternation} if it contains words with arbitrarily large alternation. In the other case we say that a~set has \emph{bounded alternation}. A~word is \emph{alternating} if every two consecutive letters are distinct.
\end{defi}

For example the alternation of \new{$abccbbb$ is $4$ and $abcb$ is the alternating word witnessing that.} The notion of \emph{alternation} gives us another characterization of the class $\boolcomb{1}$.

\begin{lem}%
\label{lem:char-by-alternation}
For a~regular language $L$ of infinite trees, the following conditions are equivalent:
\begin{enumerate}
\item Alternator wins the game $\gameHs(L,n)$ for \new{infinitely many} $n$.
\item The set $\gameH_L$ has unbounded alternation.
\end{enumerate}
\end{lem}

\noindent
Clearly, if $n\leq n'$ and Constrainer wins the game $\gameHs(L,n)$ then he also wins the game $\gameHs(L,n')$. This means that in fact there are only two possibilities: either Alternator wins $\gameHs(L,n)$ for all $n$; or Constrainer wins $\gameHs(L,n)$ for all except finitely many $n$. Thus, the conditions of Lemma~\ref{lem:char-by-alternation} are in fact equivalent to saying that Alternator wins the game $\gameHs(L,n)$ for all $n$.

\begin{proof}
We have to prove both implications.
\begin{enumerate}[leftmargin=13mm]
\item[$1 \Rightarrow 2$.]
We show that for $n\in\w$ if Alternator wins the game $\gameHs(L,n)$, then $\gameH_L$ contains an~alternating word of length $n$. Suppose that Alternator wins $\gameHs(L,n)$. Both $L$ and $L\complemento$ can be partitioned into tree types, \new{see Remark~\ref{rem:L-equi}}. By Lemma~\ref{lem:refinement_lemma}, Alternator wins $\gameH(h_1,\ldots,h_n)$ for some sequence of types, such that $h_i$ is included in $L$ or its complement, depending on the parity of $i$. In particular, the consecutive types are different.

\item[$2 \Rightarrow 1$.] Suppose that $\gameH_L$ has unbounded alternation. \new{By Fact~\ref{ft:removing-letters} we know} that $\gameH_L$ is closed under removing letters. We will now argue that there must be some $g,h \in H_L$ with $g\neq h$ such that $\gameH_L$ contains all the words
\[(g,h), (g,h,g,h), (g,h,g,h,g,h), \ldots\]
\new{Assume contrarily and use finiteness of $H_L$: there must exist a~global bound $B$ on the number of times any pair of distinct types $g\neq h$ can appear in a~word in $\gameH_L$ in the alternating way as above. Consider an~alternating word of length $|H_L|^2\cdot (B+1)+1$ in~$\gameH_L$. This word contains $|H_L|^2\cdot (B+1)$ pairs of consecutive distinct letters and therefore some pair of those must appear there at least $B+1$ times --- a~contradiction.}

Since $g$ and $h$ are different elements of the syntactic algebra, it follows that there must be some multicontext $C$ such that the tree type $C[g]$ is contained in $L$, while the tree type $C[h]$ is disjoint with $L$. By the first item of Lemma~\ref{lem:compose-winh-with-multicontext}, we can conclude that $\gameH_L$ contains  all the words
\[\big(C[g],C[h]\big), \big(C[g],C[h],C[g],C[h]\big), \big(C[g],C[h],C[g],C[h],C[g],C[h]\big), \ldots\]
It follows that Alternator can alternate arbitrarily long between the language $L$ and its complement.
\qedhere
\end{enumerate}
\end{proof}

\begin{lem}%
\label{lem:VLunbHLunb}
If $\gameV_L$  has unbounded alternation then so does $\gameH_L$.
\end{lem}

\begin{proof}
Assume that $\gameV_L$ has unbounded alternation. Take $n>0$, we will find a~sequence in~$\gameH_L$ of alternation at least $n$. Let $N\eqdef 2\cdot n\cdot |V_L|^2$ and let $(v_1, \ldots, v_{N})$ be an alternating sequence in $\gameV_L$ of length $N$ --- such a~sequence exists by the assumption and Fact~\ref{ft:removing-letters}.
By Pigeon\=/hole Principle, there exist two types $v\neq v'\in V_L$ such that the word ${(v,v')}^n$ can be obtained from $(v_1, \ldots, v_{N})$ by removing letters. \new{By the definition of $V_L$ (see Definition~\ref{def:cont-equiv}) we know that there exists a~context environment $E$ such that $E[v]\neq E[v']$. Thus, by Item~\ref{it:composes-E-and-v} of Lemma~\ref{lem:compose-winh-with-multicontext} we know that the word ${\big(E[v],E[v']\big)}^n$ is a~member of $\gameH_L$. Since this word is alternating and has length $2n$, the claim holds.}
\end{proof}

\section{Effective characterisation of \texorpdfstring{$\boolcomb{1}$}{BC(Sigma01)}}
\label{sec:main-thm}

In this section we state the crucial result of the paper, providing an effective characterisation of the class of regular tree languages in $\boolcomb{1}$.

\begin{thm}%
\label{thm:mainforBC1}
For a~regular language $L$ of infinite trees, the following conditions are equivalent.
\begin{enumerate}
\item $L$ is a~Boolean combination of open sets, i.e. $L \in \boolcomb{1}$.
\item Constrainer wins the game $\gameHs(L,n)$ for all but finitely many $n$.
\item The set $\gameH_L$ has bounded alternation.
\item The following identities are satisfied in the algebra $(H_L, V_L)$:
\begin{align}
\label{eq:cond-bool}
u^\monoid u w^\monoid = u^\monoid v w^\monoid &= u^\monoid w w^\monoid  &\text{ if $(u,v,w) \in \gameV_L$ or $(w,v,u) \in \gameV_L$}\\
\label{eq:cond-limit}
{(u_2 w_2^\monoid v)}^\monoid u_1 w_1^\infty &= {(u_2 w_2^\monoid v)}^\infty &\text{ if $v\in V_L$ and $(u_1,u_2), (w_1,w_2) \in \gameV_L$}
\end{align}
\end{enumerate}
\end{thm}

\noindent
We have already proved the implications $1 \Leftrightarrow 2 \Leftrightarrow 3$ respectively in Corollary~\ref{cor:char-by-winning-the-game} and Lemma~\ref{lem:char-by-alternation}.
It remains to prove $3 \Leftrightarrow 4$.
The direction $3 \Rightarrow 4$ is not hard and we prove it in the next section, whereas the direction $4 \Rightarrow 3$ forms the technical core of the paper.

\begin{cor}
The problem whether a~regular language $L$ belongs to $\boolcomb{1}$ is \EXPTIME\=/complete, when the language $L$ is given as a~parity non\=/deterministic automaton $\Aa$ recognising~$L$.
\end{cor}

\begin{proof}
Given a~representation of $L$, one can compute the algebra $(H_L,V_L)$ in \EXPTIME, see Fact~\ref{pro:alg-finite-for-reg}. Then, verifying the equations from condition~4 of Theorem~\ref{thm:mainforBC1} can be done in time polynomial in the size of the algebra (which is exponential in the number of states of the given automaton for $L$).

Hardness follows immediately from \EXPTIME hardness of the universality problem for non\=/deterministic automata over finite trees~\cite{seidl_hardness}, see~\cite[Theorem~4.1, page~8]{walukiewicz_low_levels} for a~generic reduction of problems about complexity of infinite tree languages to universality of finite tree languages.
\end{proof}

\subsection{Overview of the proof}

\new{
The remaining part of the proof of Theorem~\ref{thm:mainforBC1} consists of two implications: $3\Rightarrow 4$ (Subsection~\ref{ssec:three-to-four}) and $4\Rightarrow 3$ (Subsection~\ref{ssec:four-to-three}).
}

\new{
The proof of the implication $3\Rightarrow 4$ is rather straightforward: we assume that either~\eqref{eq:cond-bool} or~\eqref{eq:cond-limit} is violated and prove that the set $\gameH_L$ has unbounded alternation. This is achieved by providing concrete strategies for Alternator in the game $\gameH_L$.
}

\new{
The proof of the implication $4\Rightarrow 3$ is more demanding. First, we introduce a~concept of \emph{strategy trees} that represent specific strategies of Alternator in $\gameH$ (Subsections~\ref{sssec:strat-trees} to~\ref{sssec:locally-opt}). We study properties of the set $\Sigma_L$ of those strategies and notice two distinct cases of their behaviour: Case~(C1) leading to a~violation of~\eqref{eq:cond-bool}; and Case~(C2) leading to a~violation of~\eqref{eq:cond-limit}. See Subsection~\ref{sssec:case-distinction} for the case distinction.
}

\new{
The first argument, provided in Subsection~\ref{ssec:limit-bounded}, utilises the assumption of Case~(C1) to construct \emph{strategy matrices} --- finite combinatorial objects witnessing the assumptions of Case~(C1). By applying Erd\"os--Szekeres Theorem and Ramsey Theorem for hypergraphs, we gradually simplify the structure of these matrices. Ultimately, we construct a~strategy matrix that literally encodes a~violation of~\eqref{eq:cond-bool} (see Lemma~\ref{lem:very-regular-matrix}). 
}

\new{
The second argument, given in Subsection~\ref{ssec:limit-unbounded}, works under the assumption of Case~(C2). This proof involves another object for representing certain strategies in $G_L$: a~\emph{strategy graph} $G_L$. The edges of that directed graph encode special types of strategies of Alternator in the game $\gameV$, see Lemma~\ref{lem:def-of-edges}. A~relatively easy observation (Lemma~\ref{lem:rec-to-violation}) shows that if the graph is \emph{recursive} (i.e.~contains certain loops) then it witnesses a~violation of~\eqref{eq:cond-limit}. Subsection~\ref{sssec:cons-path} constructs inductively such a~loop in $G_L$ based on the assumption of Case~(C2).
}

\subsection{The implication \texorpdfstring{$3 \Rightarrow 4$}{3 => 4}}%
\label{ssec:three-to-four}

\new{
In this section we prove the implication $3 \Rightarrow 4$ of Theorem~\ref{thm:mainforBC1} in the contra positive, as stated below.}

\begin{prop}%
\label{pro:three-to-four}
\new{
If one of the identities~\eqref{eq:cond-bool} and~\eqref{eq:cond-limit} of Theorem~\ref{thm:mainforBC1} is violated then the set $\gameH_L$ has unbounded alternation.}
\end{prop}

\subsubsection{The case when~\eqref{eq:cond-bool} is violated.}

The assumption that~\eqref{eq:cond-bool} is violated says that there are $u,v,w \in V_L$ such that
\[(u,v,w) \in \gameV_L \qquad \text{or}\qquad (w,v,u) \in \gameV_L,\]
but the three context types $u^\monoid u w^\monoid$,  $u^\monoid v w^\monoid$, and $u^\monoid w w^\monoid$ are not all equal. If the three context types are not equal then the second one must be different from either the first one or the third one. We only do the proof for the case when $(u,v,w) \in \gameV_L$ and when $ u^\monoid u w^\monoid \neq
u^\monoid v w^\monoid$; the other cases are entirely dual. For $n \geq 0$ and $i \in \{1,\ldots,n\}$, define
\[\vec w_{(i,n)} \eqdef \big(\overbrace{u,u,\ldots,u}^{2(n{-}i){+}1},\ v,\ \overbrace{w,w,\ldots,w}^{2(i{-}1)}\big) \quad \in {(V_L)}^{2n}.\]
This word is obtained from $(u,v,w)$ by duplicating some letters, and therefore it  belongs to $\gameV_L$. For a~given $n$, consider the words
\[\vec w_{(1,n)},  \ldots, \vec w_{(n,n)} \in \gameV_L.\]
These are $n$  words of length $2n$. Let us multiply all these words coordinate\=/wise, yielding a~word $\vec w_{n}$, also of length $2n$, which
is depicted in the following picture:
\begin{center}
\begin{tikzpicture}

\fill[gray!50] (-6+5*0.7-0.2, 0.35-0.5) rectangle ++(0.4, -4*0.5-0.7);

\fill[gray!50] (-6+6*0.7-0.2, 0.35-0.5) rectangle ++(0.4, -4*0.5-0.7);

\foreach \i in {1,...,5} {
	\node[toL] at (-5.7, -\i*0.5) {$\vec{w}_{(\i,n)}=$};
	\tikzEvalInt{\maxX}{2*(5-\i)+1}
	\foreach \x in {1,...,\maxX} {
		\node[toC] at (-6 + \x*0.7, -\i*0.5) {$u$};
	}
	\tikzEvalInt{\maxX}{\maxX+1}
	\node[toC] at (-6 + \maxX*0.7, -\i*0.5) {$v$};
	\tikzEvalInt{\maxX}{\maxX+1}
	\ifthenelse{\maxX < 11}{
	\foreach \x in {\maxX,...,10} {
		\node[toC] at (-6 + \x*0.7, -\i*0.5) {$w$};
	}}{}
}

\node[toL] at (-6+5*0.7, -0.5 + 0.9) {letter $2i{-}1$};
\node[toR] at (-6+6*0.7, -0.5 + 0.9) {letter $2i$};

\draw[edge,<-] (-6+5*0.7, -0.5 + 0.4) -- ++(-0.5, +0.3);
\draw[edge,<-] (-6+6*0.7, -0.5 + 0.4) -- ++(+0.5, +0.3);

\node[toL] at (-6+5*0.7, -5*0.5 - 1.0) {$u^{n-i+1}w^{i-1}$};
\node[toR] at (-6+6*0.7, -5*0.5 - 1.0) {$u^{n-i}vw^{i-1}$};

\draw[edge,<-] (-6+5*0.7, -5*0.5 - 0.4) -- ++(-0.5, -0.3);
\draw[edge,<-] (-6+6*0.7, -5*0.5 - 0.4) -- ++(+0.5, -0.3);
\end{tikzpicture}
\end{center}

As $\vec{w}_{n}$ is obtained by a~coordinate\=/wise multiplication of the rows of the above matrix, and all these rows belong to $\gameV_L$, Lemma~\ref{lem:compose-winh-with-multicontext} implies that also $\vec{w}_{n}\in\gameV_L$.

Recall that $\monoid$ is a~number dependant on $V_L$ such that for every $z\in V_L$ we know that $z^\monoid$ is an idempotent. Choose some $k$, and take  $n = k \cdot \monoid+1$, and $i \in \{\monoid{+}1,2\monoid{+}1, \ldots,(k-1) \cdot  \monoid{+}1\}$. Consider the letters $2i-1$ and $2i$ in the word $\vec w_{n}$, which are
\[u^{n-i+1}w^{i-1} = u^\monoid uw^{\monoid}  \qquad u^{n-i} v w^{i-1} = u^\monoid v w^\monoid.\]

By the assumption, these letters are different, and therefore the word $\vec w$ has alternation at least $k$. Because $k$ was chosen
arbitrarily, it follows that $\gameV_L$ has unbounded alternation. Lemma~\ref{lem:VLunbHLunb} implies that in that case also $\gameH_L$ has unbounded alternation.

\subsubsection{The case when~\eqref{eq:cond-limit} is violated.}

The assumption that~\eqref{eq:cond-limit} is violated says that $\gameV_L$ contains pairs $(u_1,u_2)$ and $(w_1,w_2)$ such that for some $v \in V_L$,
\begin{align*}
e^\infty \neq e u_1 w_1^\infty \qquad \mbox{ for }e \eqdef {(u_2w_2^\monoid v)}^\monoid.
\end{align*}

Let $h_1=e^\infty$ and $h_2=e u_1 w_1^\infty$, by the above assumption we know that $h_1\neq h_2$. It turns out that a~violation of Equation~\eqref{eq:cond-limit} has even stronger consequences than a~violation of Equation~\eqref{eq:cond-bool}, as expressed by the following claim. It speaks about the infinite variant of the game of types, which is defined analogously to the finite one, see Subsection~\ref{ssec:game-on-types}.

\begin{claim}%
\label{cl:alt-wins-infinite}
If Equation~\eqref{eq:cond-limit} is violated then Alternator wins $\gameHinfty(h_1,h_2,h_1,\ldots)$.
\end{claim}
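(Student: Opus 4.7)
First I would fix witness contexts $C^u \in u_1$ and $C^w \in w_1$ from the assumptions $(u_1,u_2), (w_1,w_2) \in \gameV_L$: every finite prefix of $C^u$ (with leaf at the port) extends to some context of type $u_2$, and similarly for $C^w$ extending to $w_2$. I would also fix an arbitrary $C^v \in v$. Combining these as in the proof of Item~\ref{it:compose-v-and-w} of Lemma~\ref{lem:compose-winh-with-multicontext}, the context $\tilde{C} := C^u \cdot (C^w)^\monoid \cdot C^v$, of type $a := u_1 w_1^\monoid v$, has the property that every finite prefix extends to some context of type $b := u_2 w_2^\monoid v$; iterating once more gives the analogous property for $\tilde{C}^\monoid$ and contexts of type $e = b^\monoid$. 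I would then take as the ``template tree'' the infinite composition $T^* := \tilde{C}^\infty$, which has type $a^\infty$.

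The central observation I would establish is that every finite prefix $p$ of $T^*$ is simultaneously the prefix of some tree of type $h_1 = e^\infty$ and of some tree of type $h_2 = e \cdot u_1 \cdot w_1^\infty$. For $h_1$, one decomposes $p$ along the $\tilde{C}$-copies of $T^*$ and extends each partial copy to a $b$-context via the witness property; the infinite composition of these $b$-contexts, padded with arbitrary $b$-contexts beyond $p$'s depth, has type $b^\infty = (b^\monoid)^\infty = e^\infty = h_1$ by the Wilke axiom and extends $p$. For $h_2$, one performs the same extension on the first $M$ copies for a multiple $M$ of $\monoid$ large enough that $p$ is contained in them, yielding a top of type $b^M = e$, and then plugs in below the subtree $C^u \cdot (C^w)^\infty$, which has type $u_1 \cdot w_1^\infty$.

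Based on this, Alternator's strategy in round $i$ would be to play a tree $t_i$ of the required type that extends $p_{i-1}$ and whose first several levels agree with $T^*$ to a depth strictly exceeding that of $p_{i-1}$. As long as Constrainer's next prefix $p_i$ stays within the $T^*$-agreement region of $t_i$, the observation above supplies the required extension for round $i+1$, and this invariant allows the play to continue forever. The main technical obstacle I expect is handling the case where $p_i$ reaches past the $T^*$-agreement region of $t_i$ into the type-specific tail, particularly the tail $C^u \cdot (C^w)^\infty$ of an even-round $t_i \in h_2$, because Constrainer may force an arbitrarily long committed chain of $C^w$-contexts that must then be reinterpreted inside an $h_1 = (u_2 w_2^\monoid v)^\infty$ decomposition. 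The plan to overcome this is to choose the switch point of $t_i$ adaptively based on the history and to exploit the idempotence $e \cdot e = e$ together with the Wilke identity $w_1^\infty = (w_1^\monoid)^\infty$, so that any excess committed $w_1$-material can be absorbed into an enlarged $e$-top of the next tree and realigned with the cyclic $w_2^\monoid v$-pattern of $h_1$.
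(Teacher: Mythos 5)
Your ``central observation'' is correct and is proved the right way (decompose a prefix of $\tilde C$ into prefixes of $C^u$, of the $C^w$-copies and of $C^v$, and use the $\gameV$-witness property of $C^u$ and $C^w$), but it only applies to prefixes of $T^*$, and the play leaves $T^*$ after the very first round: the tree of type $h_1$ extending a prefix $p$ of $T^*$ is obtained by replacing each partial copy of $\tilde C$ with a context of type $b$, so it agrees with $T^*$ only on $p$ itself. Hence the invariant you state for the strategy --- that $t_i$ agrees with $T^*$ to a depth exceeding that of $p_{i-1}$ --- already fails at $i=2$, since $p_1$ may reach into the replaced material; as written, the observation buys only two rounds of survival, not infinitely many. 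You flag this as the main obstacle, but the sketched fix does not close it: ``realigning with $T^*$'' is not the right goal (agreement with $T^*$ is permanently lost after round one), and your realignment is phrased only for the excess $w_1$-material, whereas one must also repair the committed prefix of $C^u$ (to a context of type $u_2$) and, in the odd-to-even transition, complete the committed portions of the type-$b$ contexts back to whole contexts of type $b$. The invariant that actually closes the induction makes no reference to $T^*$: after every round the committed prefix must extend to a context of type $(u_2w_2^\monoid v)^N$ along a fixed spine, with $N$ adjustable to a multiple of $\monoid$, so that the next tree can be taken to be that context followed either by fresh contexts of type $b$ (giving $e^\infty=h_1$) or by $C^u\cdot (C^w)^\infty$ (giving $e\, u_1 w_1^\infty = h_2$).

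Once that invariant is substituted for yours, the argument goes through and is essentially the paper's proof. The paper avoids the realignment problem altogether by building the template from the \emph{second} components: it starts with $t_1=(C_e)^\infty$ for $C_e=(C_{u_2}C_{w_2}^\monoid C_v)^\monoid$, so the odd-round trees literally consist of the committed part followed by the template and need no conversion; the only repair happens after an even round, where the committed prefixes of $C_{u_1}$ and of the $C_{w_1}$-copies are extended to contexts of types $u_2$ and $w_2$. Your first-component template $T^*=(C^u(C^w)^\monoid C^v)^\infty$, whose type $(u_1w_1^\monoid v)^\infty$ is neither $h_1$ nor $h_2$, forces a conversion at every round, and that choice is the source of the difficulty you ran into.
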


The above claim implies in particular that for every $n$ the sequence
\[{\big(h_1, h_2\big)}^n\]
belongs to $\gameH_L$, and thus $\gameH_L$ has unbounded alternation.

\begin{proof}
Let $C_{u_1}$, \new{$C_{w_1}$} be contexts of types $u_1$, \new{$w_1$} that witness that $(u_1,u_2), (w_1,w_2)\in \gameV_L$ i.e.~they are contexts played by Alternator in the first rounds of the respective games. Let $C_{u_2}$, $C_{w_2}$, and $C_v$ be any three contexts of types $u_2$, $w_2$, and $v$ respectively. Let $C_e\eqdef {\big(C_{u_2} C_{w_2}^\monoid C_v\big)}^\monoid$. The type of the context $C_e$ is $e$.

\new{
Our aim is to provide an explicit strategy of Alternator in the game $\gameHinfty(h_1,h_2,h_1,\ldots)$. This will be done inductively, considering the consecutive rounds of the game. The outcome will be a~set of trees ${(t_i)}_{i<\w}$ that are played in the considered play of the game, with $\alpha_L(t_i)=h_1$ for odd $i$ and $\alpha_L(t_i)=h_2$ for even $i$.
}

The strategy of Alternator starts with the tree $t_1={\big(C_e\big)}^\infty$. We will demonstrate how it works for the first three rounds of the game, the rest is analogous.

Consider a~prefix $p_1$ of the tree $t_1$ that is fixed by Constrainer. It must be the case that $p_1$ is a~prefix of ${\big(C_e\big)}^{n_1}$ for some $n_1$. Thus, Alternator can now provide the tree
\[t_2={\big(C_e\big)}^{n_1}\cdot C_{u_1}\cdot C_{w_1}^\infty.\]
Now Constrainer chooses a~prefix $p_2$ of the above tree, in fact $p_2$ is a~prefix of ${\big(C_e\big)}^{n_1}\cdot C_{u_1}\cdot C_{w_1}^{\monoid\cdot n_2}$ for some $n_2$. Thus, by the assumptions on $C_{u_1}$ and $C_{u_2}$, Alternator is able to provide a~tree of the form
\[t_2 = {\big(C_e\big)}^{n_1}\cdot \big( D_{u_2}\cdot D_{w_2}^{\monoid\cdot n_2}\cdot C_v\big) \cdot {\big(C_e\big)}^{\infty},\]
where the contexts $D_{u_2}$ and $D_{w_2}$ depend on the prefix $p_2$ and have types $u_2$ and $w_2$ respectively. Now we proceed inductively as before, because any prefix $p_3$ of $t_2$ must be a~prefix of ${\big(C_e\big)}^{n_1}\cdot \big( D_{u_2}\cdot D_{w_2}^{\monoid\cdot n_2}\cdot C_v\big) \cdot {\big(C_e\big)}^{n_3}$.

Notice that by the choice of $e$, the type of the context $D_{u_2}\cdot D_{w_2}^{\monoid\cdot n_2}\cdot C_v$ is $e$. Therefore, $\alpha_L(t_i)=e^\infty=h_1$ for odd $i$ and $\alpha_L(t_i)=e u_1 w_1^\infty=h_2$ for even~$i$.
\end{proof}

\new{
Consider a~pair of trees $t_1$ and $t_2$ of types $h_1$ and $h_2$ respectively. Since $t_1\not\Teq_L t_2$, there exists a~multicontext $C$ such that\footnote{We swap $h_1$ and $h_2$ if needed.}
\begin{equation}
C[t_1]\in L\ \land\ C[t_2]\notin L,
\label{eq:distinction}
\end{equation}
see Definition~\ref{def:myhill-nerode}. Now, by Item~1 of Lemma~\ref{lem:preserve} we know that the types $\alpha_L\big(C[t_1]\big)$, $\alpha_L(C[t_2]\big)$ (and therefore the conditions in~\eqref{eq:distinction}) do not depend on the actual choice of the trees $t_1\in h_1$ and $t_2\in h_2$. Thus, by using the above strategy for Alternator under the multicontext~$C$, we obtain the following corollary.
}

\begin{cor}%
\label{cor:eq-limit-to-game}
If Equation~\eqref{eq:cond-limit} is violated then Alternator wins~$\gameHinfty(L)$.
\end{cor}

\subsection{The implication \texorpdfstring{$4 \Rightarrow 3$}{4=>3} --- case distinction}%
\label{ssec:four-to-three}

We now move to the implication $4 \Rightarrow 3$ of Theorem~\ref{thm:mainforBC1}, which is the most involved part of the article. To prove it, we need to introduce a~crucial concept witnessing a~large alternation of the set $\gameH_L$. The objects witnessing that will be called \emph{strategy trees} and \emph{locally optimal strategy trees}. Using these objects we will split the proof of that implication into two separate cases, \new{see Subsection~\ref{sssec:case-distinction}. We deal with these cases in Subsections~\ref{ssec:limit-bounded} and~\ref{ssec:limit-unbounded}.}.

\subsubsection{Strategy trees}%
\label{sssec:strat-trees}

First, a~\emph{\newdef} is a~tree such that the nodes are labelled with tree types, i.e.~it is a~tree over the alphabet $H_L$.

\begin{defi}
If $t$ is a~tree over the alphabet $A$, the \emph{\newdef induced by} $t$ is the~\newdef $\sigma$ where the label of a~node $u$ is the tree type of the subtree $t.u$, i.e.~$\sigma(u)=\alpha_L(t.u)$. In particular $\sigma(\epsilon)=\alpha_L(t)$.
\end{defi}

\begin{defi}
Let $\sigma$ be a~\newdef and let $t$ be a~tree over $A$. We say that $\sigma$ is \emph{locally consistent} with $t$ if for every node $u \in {\{\dL,\dR\}}^\ast$, whose label in $t$ is $a$, we have that $\sigma(u)$ is the type obtained by applying the letter $a$ to the pair of types $\sigma(u\dL)$ and $\sigma(u\dR)$, that is
\begin{equation}
\sigma(u)=a\big(\sigma(u\dL),\hole\big)\cdot \sigma(u\dR).
\label{eq:local-consistency}
\end{equation}
In other words, if we take a~tree $t_1$ inside $\sigma(u\dL)$ and a~tree $t_2$ inside $\sigma(u\dR)$ and we plug these two trees as the left and the right children of $a$, then the type of the result is $\sigma(u)$.
\end{defi}

\begin{rem}
The \newdef induced by $t$ is locally consistent with $t$.
\end{rem}

\begin{exa}
Consider the language
\[ L = \{ t \in \trees_A \mid \text{$t$ contains at least one $b$}\}\]
over the alphabet $A = \{a,b\}$. $H_L$ contains two tree types that (treated as sets of trees) are $L$ and $L\complemento$. Now consider a~\newdef $\sigma$ where every node is labelled with $L$. It is easy to check that $\sigma$ is locally consistent with every tree $t \in \trees_A$, even with the tree that contains only letters $a$.
\end{exa}

\begin{fact}%
\label{ft:consistent-closed}
The set of pairs
\[\big\{(t,\sigma)\in\trees_A\times\trees_{H_L}\mid \text{$\sigma$ is locally consistent with $t$}\big\}\]
is closed in the product topology.
\end{fact}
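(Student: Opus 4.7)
The plan is to express the set of locally consistent pairs as a countable intersection of clopen sets, which will immediately give closedness. For each node $u\in\{\dL,\dR\}^\ast$, define
\[F_u \eqdef \big\{(t,\sigma)\in \trees_A\times\trees_{H_L} \mid \text{equation~\eqref{eq:local-consistency} holds at $u$}\big\}.\]
By the very definition of local consistency, the set in the statement of the fact equals $\bigcap_{u\in\{\dL,\dR\}^\ast} F_u$. Thus it suffices to show that every $F_u$ is closed (in fact clopen).

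Fix $u$. Membership of $(t,\sigma)$ in $F_u$ depends only on the four coordinates $t(u)\in A$, $\sigma(u)\in H_L$, $\sigma(u\dL)\in H_L$, and $\sigma(u\dR)\in H_L$. Since $A$ and $H_L$ are finite discrete spaces, and the prefix topology $\toppref$ on $\trees_A$ (respectively $\trees_{H_L}$) coincides with the product of the discrete topologies on the coordinates (see the discussion in Subsection~\ref{ssec:polish}), the projection
\[\pi_u\colon (t,\sigma)\longmapsto \big(t(u),\sigma(u),\sigma(u\dL),\sigma(u\dR)\big)\]
into the finite discrete space $A\times H_L^3$ is continuous. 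The set $F_u$ is the $\pi_u$-preimage of the subset of $A\times H_L^3$ consisting of those quadruples that satisfy~\eqref{eq:local-consistency}, and is therefore clopen.

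Finally, an arbitrary intersection of closed sets is closed, so $\bigcap_{u} F_u$ is closed in the product topology on $\trees_A\times\trees_{H_L}$. There is no real obstacle here: the argument reduces to the elementary fact that a property defined by a conjunction (indexed by nodes) of conditions each depending on only finitely many coordinates of a product of finite discrete spaces defines a closed set.
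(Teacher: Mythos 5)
Your proof is correct and matches the paper's argument exactly: decompose the set as the intersection over all nodes $u$ of the clopen sets $F_u$ (clopen because each depends on only finitely many coordinates), then conclude closedness of the intersection. You have simply written out the same reasoning in more explicit detail than the paper does.
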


\begin{proof}
As Condition~\eqref{eq:local-consistency} speaks about finitely many values of $\sigma$ and $t$, it corresponds to a~clopen set of pairs. A \newdef $\sigma$ is locally consistent with a~tree $t$ if they obey the local consistency conditions from~\eqref{eq:local-consistency} in all the vertices. Thus, the above defined set of pairs is an intersection of a~family of clopen sets.
\end{proof}

\begin{lem}%
\label{lem:limitofstrtrees}
Let ${(t_n)}_{n\in \N}$ be a~sequence of trees that converges to $t_\ast$ and let ${(\sigma_n)}_{n\in \N}$ be a~sequence of \newdef{s} that converges to $\sigma_\ast$. If $\sigma_n$ is locally consistent with $t_n$ for every $n$ then $\sigma_\ast$ is locally consistent with $t_\ast$.
\end{lem}
\begin{proof}
Follows directly from Fact~\ref{ft:consistent-closed}.
\end{proof}

Now we are ready to define strategy trees, a~key concept of this paper.

\begin{defi}\label{def_strategytree}
A~\emph{strategy tree} is a~tuple $\sigma = (t, \sigma_1, \ldots, \sigma_n)$
where:
\begin{enumerate}
\item $t$ is a~tree over $A$, called the \emph{support} of $\sigma$.
\item $\sigma_1$ is the \newdef induced by $t$.
\item The \newdef{s} $\sigma_2, \ldots, \sigma_n$ are locally consistent with $t$.
\item For each node $u$ of $t$, the sequence $\big(\sigma_2(u), \ldots, \sigma_n(u)\big)$ belongs to $\gameH_L$.
\end{enumerate}
\end{defi}

\noindent
Notice that the fourth condition of Definition~\ref{def_strategytree} does not mention the \newdef $\sigma_1$. By the definition, $\sigma$ can be interpreted as a~single tree over the alphabet $A \times H_L^n$.

Intuitively speaking, a~strategy tree represents a~special kind of strategy for Alternator. In the first round, Alternator plays the support $t$ of $\sigma$. However,
Alternator also declares all the types that will appear in the nodes of $t$ as the game progresses. More specifically, he declares that for every
node $u$ of $t$ and round $k \in \{2,\ldots,n\}$, he has a
strategy so that for the tree played in the round $k$, the subtree in the node $u$ has type $\sigma_k(u)$.

\paragraph*{\bf Alternations.}

The number $n$ is called the \emph{duration} of a~strategy tree $\sigma=(t,\sigma_1,\ldots,\sigma_n)$. We define the \emph{root sequence} of a~strategy tree to be the sequence of root labels of $\sigma_1,\ldots,\sigma_n$. If the duration is $n$, the root sequence is in $H_L^n$. We define the \emph{root alternation} of a~strategy tree $\sigma$ to be the alternation of its root sequence. We define the \emph{limit alternation} of $\sigma$ to be the maximal number $\ell$ such that infinitely many subtrees of $\sigma$ have root alternation at least $\ell$. This means that if the limit alternation of $\sigma$ is $\ell$ then there exist infinitely many nodes $u$ such that their root sequence (i.e.~the root sequence of the strategy tree obtained by truncating $\sigma$ in the node $u$) has alternation at least $\ell$.

\paragraph*{\bf Context zones.}

A~\emph{context zone} is a~set $X$ of nodes for which there exists a~node $u$, called the \emph{root} of $X$, and a~node $y$, called the \emph{port} of $X$, such that $u\prec y$ and $X$ contains the nodes that are in the subtree of $u$, but not in the subtree of $y$:
\[X=\big\{x\in{\{\dL,\dR\}}^\ast\mid u\preceq x \land y\not\preceq x\big\}.\]
We say that context zones $X_1,\ldots,X_n$ are \emph{consecutive} if for each $i \in \{1,\ldots,n-1\}$, the port of $X_i$ is the root of $X_{i+1}$. The union of consecutive context zones is a~context zone.

Consider a~tree $t$, a~\newdef $\sigma$, and a~context zone $X$. $X$ can be seen as a~context inside $t$ taking into account the types of the~subtrees of $t$ as declared in $\sigma$. This is achieved by defining a~value $\val(t,\sigma,X)\in V_L$. The definition of $\val(t,\sigma,X)$ is inductive on the number of nodes $v$ in $X$ such that $v \preceq y$, where $y$ is the port of $X$. If there is only one such node then the port $y$ of $X$ is a~child of its root $u$. Let $a$ be the label of
$u$ in $t$ and $v\in X$ be the other child of $u$. We set $\val(t,\sigma,X)$
as $a(\hole,\sigma(v))$ if $v$ is the right child and $a(\sigma(v)
,\hole)$ if $v$ is the left child. Otherwise, let $u$ be the root of
$X$, $y$ its port and $z$ the child of $u$ such that $z \preceq y$. Let $X_1$ be the context zone with root $u$ and port $z$ and $X_2$
the context zone of root $z$ and port $y$. We define
\[\val(t,\sigma,X)\eqdef\val(t,\sigma,X_1) \cdot \val(t,\sigma,X_2).\]
Figure~\ref{fig:val-example} illustrates an example of a~tree and a~context zone with
\[\val(t,\sigma,X)=b(\hole,h_1)\cdot c(h_2,\hole)\in V_L.\]
In other words, $\val(t,\sigma,X)$ is the value of the context $C\eqdef t\restr X$ when we assume that the subtrees of $t$ aside of the branch leading to the port of $C$ have types as declared by $\sigma$.

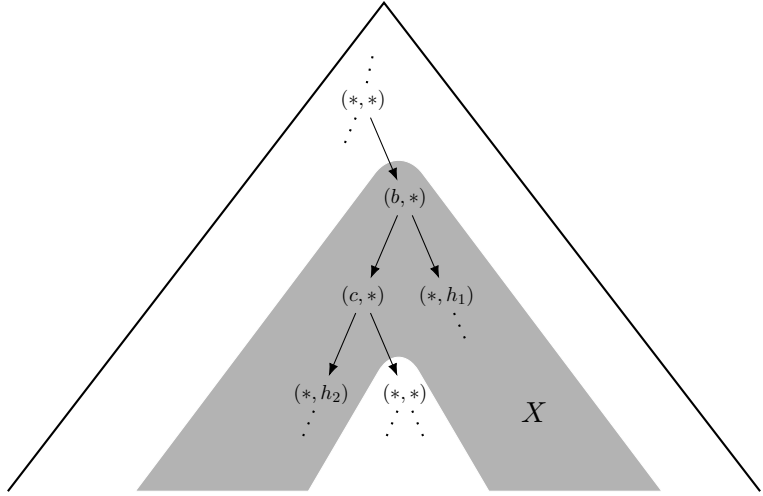
\begin{figure}
\begin{center}
\begin{tikzpicture}
\newcommand{\vVs}{1.3}

\tikzEvalFloat{\vTs}{\vVs*1.8}

\newcommand{\hTp}{0.5*2.0/1.8}

\newcommand{\hHt}{5}

\newcommand{\hHu}{3.5}
\tikzEvalFloat{\vMu}{(\hHt-\hHu)*\vVs}

\newcommand{\hHl}{1.5}
\tikzEvalFloat{\vMl}{(\hHt-\hHl)*\vVs}

\draw[thick] (-1*\hHt, -1*\hHt*\vVs) -- (0,0) -- (+1*\hHt, -1*\hHt*\vVs);

\coordinate (tu) at (+\hTp*0.35,0.1);
\coordinate (hu) at (+\hTp*0.35,0);

\coordinate (lu) at ($(-1*\hHu, -1*\hHu*\vVs-\vMu) + (hu)$);
\coordinate (uu) at ($(0,-\vMu) + (tu)$);
\coordinate (ru) at ($(+1*\hHu, -1*\hHu*\vVs-\vMu) + (hu)$);

\coordinate (ll) at ($(-1*\hHl*0.8, -1*\hHl*\vVs-\vMl) + (hu)$);
\coordinate (ul) at ($(0,-\vMl) + (tu)$);
\coordinate (rl) at ($(+1*\hHl*0.8, -1*\hHl*\vVs-\vMl) + (hu)$);

\path[fill=gray!60]
	(ll)[rounded corners=15pt, draw] --
	(ul)[rounded corners=0pt] --
	(rl)[draw=white!0] --
	(ru)[rounded corners=15pt, draw] --
	(uu)[rounded corners=0pt] --
	(lu)[draw=white!0] --
	cycle;

\tikzstyle{trlabb} = [scale=0.7]

\node[trlabb] (n1) at ($(0,-\vMu)+(-\hTp*0.5,+\hTp*\vTs*0.5)$) {$(\ast,\ast)$};
\node[trlabb] (n2) at ($(n1)+(+\hTp,-\hTp*\vTs)$) {$(b,\ast)$};
\node[trlabb] (n3r) at ($(n2)+(+\hTp,-\hTp*\vTs)$) {$(\ast,h_1)$};
\node[trlabb] (n3l) at ($(n2)+(-\hTp,-\hTp*\vTs)$) {$(c,\ast)$};
\node[trlabb] (n4r) at ($(n3l)+(+\hTp,-\hTp*\vTs)$) {$(\ast,\ast)$};
\node[trlabb] (n4l) at ($(n3l)+(-\hTp,-\hTp*\vTs)$) {$(\ast,h_2)$};

\path[tredge] (n1) -- (n2);
\path[tredge] (n2) -- (n3l);
\path[tredge] (n2) -- (n3r);
\path[tredge] (n3l) -- (n4l);
\path[tredge] (n3l) -- (n4r);

\coordinate (d0) at ($(0,-\vMu*0.5)+(-\hTp*0.5*0.5,+\hTp*\vTs*0.5*0.5)$);
\coordinate (d1) at ($(n1)+(-\hTp*0.5,-\hTp*\vTs*0.5)$);
\coordinate (d2r) at ($(n3r)+(+\hTp*0.5,-\hTp*\vTs*0.5)$);
\coordinate (d3l) at ($(n4l)+(-\hTp*0.5,-\hTp*\vTs*0.5)$);
\coordinate (d4r) at ($(n4r)+(+\hTp*0.5,-\hTp*\vTs*0.5)$);
\coordinate (d4l) at ($(n4r)+(-\hTp*0.5,-\hTp*\vTs*0.5)$);

\path[trdots] (n1) -- (d0);
\path[trdots] (n1) -- (d1);
\path[trdots] (n3r) -- (d2r);
\path[trdots] (n4l) -- (d3l);
\path[trdots] (n4r) -- (d4r);
\path[trdots] (n4r) -- (d4l);

\node[toC] at (2, -5.5) {$X$};
\end{tikzpicture}
\end{center}
\caption{An illustration to the value $\val(t,\sigma,X)$. The grey area is the set $X$, a~label $(a,h)$ of a~node $u$ represents the values of $t(u)=a$ and $\sigma(u)=h$ respectively. The symbol $\ast$ indicates values that are irrelevant for the value of the set $X$.}%
\label{fig:val-example}
\end{figure}

Now assume that $\sigma= (t, \sigma_1,\ldots,\sigma_n)$ is a~strategy tree. Let $X$
be a~context zone of $\sigma$ (we can see $\sigma$ as a~single tree). For a~round $i \in \{1,\ldots,n\}$, we define the type
\[\val(\sigma,X,i) \eqdef \val(t,\sigma_i,X)\in V_L.\]

\begin{fact}%
\label{fact:zone-strat}
Let $\sigma=(t,\sigma_1,\dots,\sigma_n)$ be a~strategy tree and $X$ be
a~context zone. Then
\[\big(\val(\sigma,X,1),\dots,\val(\sigma,X,n)\big) \in \gameV_L.\]
\end{fact}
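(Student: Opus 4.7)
The plan is to exhibit a winning strategy for Alternator in the game $\gameV(v_1,\dots,v_n)$, where we write $v_i$ for $\val(\sigma,X,i)$. Denote the root of $X$ by $u$ and its port by $y$; the spine of $X$ runs from $u$ down to the parent of $y$, and let $P$ be the collection of off-spine children of those spine nodes (the ``side positions'').

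First I would set up the skeleton of Alternator's strategy. In round~$1$ he plays the context $C_1$ obtained by restricting $t$ to the nodes of $X$, with port at $y$, so that the side subtrees of $C_1$ are the subtrees $t.z$ of $t$ for $z\in P$. Because $\sigma_1$ is induced by $t$, the computation of $\val(t,\sigma_1,X)$ plugs $\alpha_L(t.z)=\sigma_1(z)$ into each side port of the spine, giving $\alpha_L(C_1)=v_1$. Constrainer's response $D_1$ must fix the port at $y$, hence reveals the whole spine together with some finite prefix of each side subtree. In every subsequent round $i\geq 2$, Alternator's plan is to play a context $C_i$ having the same spine as $C_1$ and, at each side position $z$, a subtree of type $\sigma_i(z)$ extending whatever Constrainer has revealed at $z$ so far; local consistency of $\sigma_i$ with $t$ then ensures that the resulting context has type $\val(t,\sigma_i,X)=v_i$.

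The technical core is producing these side subtrees round after round. The device is a recursive application of condition~(4) of Definition~\ref{def_strategytree}. Write $\pi_z$ for the prefix of $t.z$ revealed by $D_1$. Since the labels of $\pi_z$ come from $t$, local consistency tells us that taking $\pi_z$ on top and attaching, at each leaf $\ell$ of $\pi_z$, a tree of type $\sigma_i(z\ell)$ produces a tree of type exactly $\sigma_i(z)$. At every such leaf $\ell$, condition~(4) guarantees $(\sigma_2(z\ell),\dots,\sigma_n(z\ell))\in\gameH_L$; I would fix any winning strategy $\tau_\ell$ in that subgame and use its $(i{-}1)$-th move as the tree attached at $\ell$ during round~$i$. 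The strategies attached at distinct leaves act on disjoint subtrees, so they compose into a coherent global strategy that delivers, for every side position~$z$ and every round~$i\geq 2$, a tree of the required type $\sigma_i(z)$ extending Constrainer's current prefix.

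The main obstacle I anticipate is the transition from round~$1$ to round~$2$: in round~$1$ Alternator is committed to $t.z$ of type $\sigma_1(z)$, whereas condition~(4) only furnishes winning $\gameH_L$-sequences starting from $\sigma_2$. Local consistency is precisely what bridges this gap --- it lets the already revealed prefix $\pi_z$ be absorbed into the structure of $t$, postponing the use of the $\gameH_L$-witnesses to the leaves of $\pi_z$, which are themselves nodes of $t$ and therefore fall under condition~(4). Once this stitching is made explicit, the remainder reduces to a routine composition of strategies in the spirit of Lemma~\ref{lem:compose-winh-with-multicontext}.
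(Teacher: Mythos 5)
Your proof is correct, and it takes a genuinely different route from the paper's. The paper appeals to Items~3, 6 and~7 of Lemma~\ref{lem:compose-winh-with-multicontext}, which amounts to decomposing $X$ into single-spine-node zones, handling each via Items~6 or~7, and re-assembling with Item~3. For Items~6 and~7 to apply, one needs $(\sigma_1(z),\dots,\sigma_n(z))\in\gameH_L$ at every side position $z$, whereas condition~(4) of Definition~\ref{def_strategytree} only delivers $(\sigma_2(z),\dots,\sigma_n(z))\in\gameH_L$ --- the paper even remarks that condition~(4) ``does not mention $\sigma_1$''. Bridging that gap is exactly the $2\Rightarrow 1$ direction of the unnamed lemma characterising strategy trees, which however is only stated \emph{after} Fact~\ref{fact:zone-strat}. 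Your argument is a direct construction of Alternator's winning strategy in $\gameV(v_1,\dots,v_n)$ that performs that bridge explicitly: round~$1$ plays the context cut out from $t$ by $X$, and local consistency is used to push the appeal to condition~(4) down to the leaves $\ell$ of the prefix $\pi_z$ revealed by Constrainer --- the ``absorb $\pi_z$ into $t$'' step you identify as the main obstacle. The paper's phrasing buys modularity (prove it for one-node spines, then close under coordinate-wise products); yours buys a self-contained strategy and a transparent treatment of the round-$1$-to-round-$2$ transition, which the paper's one-line proof leaves implicit.
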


\begin{proof}
This is by construction and Items~3, 6 and 7 of Lemma~\ref{lem:compose-winh-with-multicontext}.
\end{proof}

\subsubsection{Existence of strategy trees}

We will now prove that whenever Alternator has a~strategy in $\gameH$ then this fact is witnessed by a~strategy tree. This fact is expressed by the following two lemmas. For inductive reasons these lemmas are parametrised by a~finite multicontext $C$. The games $\gameH_C(h_1,\ldots,h_n)$ and $\gameV_C(h_1,\ldots,h_n)$ are defined as the games $\gameH_U(h_1,\ldots,h_n)$ and $\gameV_U(h_1,\ldots,h_n)$ respectively, where $U$ is the open set of all trees (resp.\ contexts) that can be obtained from $C$, i.e.~$C[\ast]$.

\begin{lem}%
\label{lem:compose-strategies}
Let $C$ be a~finite multicontext for strategy trees. Consider a~sequence of valuations $\eta_1,\ldots,\eta_n$ that map ports of $C$ to $H_L$. If
\[ (\eta_1(u),\ldots,\eta_n(u)) \in \gameH_L\]
for every port $u$, then Alternator wins the game
\[\gameH_C(C[\eta_1],\ldots,C[\eta_n]).\]
\end{lem}

\begin{proof}
For every port $u$ of $C$, Alternator has a~winning strategy for the game
\[\gameH(\eta_1(u),\ldots, \eta_n(u)).\]
We describe a~winning strategy for Alternator in $\gameH_C(C[\eta_1],\ldots,C[\eta_n])$. Alternator starts with a~tree obtained by plugging the initial tree from his winning strategy in the game $\gameH(\eta_1(u),\ldots, \eta_n(u))$ in every port $u$ of $C$. Then every prefix of this tree yields a~prefix for each subtree plugged into a~port of $C$ and Alternator can then use his strategies for the game $\gameH(\eta_1(u),\ldots, \eta_n(u))$ to answer.
\end{proof}

\begin{lem}
For a~sequence $(h_1, \ldots, h_n) \in H_L^\ast$ and a~finite multicontext $C$ the
following conditions are equivalent:
\begin{enumerate}
\item Alternator wins $\gameH_C(h_1, \ldots, h_n)$.
\item There is a~strategy tree  whose support extends $C$, with root sequence  $(h_1, \ldots, h_n)$.
\end{enumerate}
\end{lem}
\begin{proof}
We prove the lemma by induction on $n$. The base case when $n=0$ or $n=1$, is trivial. We do the induction
step. Let us begin with the easier implication from 2 to 1. Suppose that
\begin{align*}
	\sigma \equiv (t,\sigma_1,  \ldots, \sigma_n)
\end{align*}
is a~strategy tree as in item 2. Alternator's strategy is as follows. In the first round, he plays the tree $t$, which has type $h_1$. Suppose Constrainer chooses a~prefix $p$ of
$t$. \new{Let $D$ be the multicontext obtained from $p$ by labelling all the ${\preceq}$\=/minimal elements outside of $\dom(p)$ by~$\hole$.} Consider the valuations $\eta_2,\ldots,\eta_n$ that map all the ports of $D$ to $H_L$ with $\eta_i(u)= \sigma_i(u)$ for any port $u$ of $D$. For every $i \in \{2,\ldots,n\}$, the root label of $\sigma_i$ is the same as $D[\eta_i]$, because $\sigma_i$ is locally consistent with $t$ and $D$ is obtained from a~prefix of $t$.
By Lemma~\ref{lem:compose-strategies}, Alternator wins the game
\begin{align*}
	\gameH_D(D[\eta_2], \ldots, D[\eta_n]).
\end{align*}
This shows  that for every choice of $p$, Alternator has a~winning strategy in the remaining part of the game.

Now we move to the more difficult implication from 1 to 2.

Suppose that Alternator wins $\gameH_C(h_1, \ldots, h_n)$. Let $t$ be
the tree of type $h_1$ that Alternator plays in the first round. This tree
has prefix $C$. Also, for every finite prefix $D$ of $t$, Alternator
wins $\gameH_D(h_2, \ldots, h_n)$. By the induction assumption, for every finite
prefix $D$ of $t$, there is a~strategy tree
\begin{align*}
	\sigma_D = (t_D,\sigma_{D_2},\ldots,\sigma_{D_n})
\end{align*}
such that  $t_D$ has  prefix $D$, and the root sequence
of $\sigma_D$ is $(h_2, \ldots, h_n)$.

A~sequence of finite multicontexts ${(D_i)}_{i \in \omega}$ is said to converge to $t$ if all of the multicontexts are prefixes of $t$, and for every $j \in \omega$, only finitely many multicontexts have some port at depth at most $j$. By compactness, there is an infinite sequence of finite multicontexts ${(D_i)}_{i \in \omega}$ which converges to the tree $t$ and such that all of the sequences
\begin{align*}
	{(t_{D_i})}_{i \in \omega} \qquad  {(\sigma_{D_i 2})}_{i \in \omega} \qquad \ldots \qquad  {(\sigma_{D_i n})}_{i \in \omega}
\end{align*}
are convergent. Let the limits of these sequences be
\begin{align*}
 t_* \qquad \sigma_{*2} \qquad \ldots \qquad \sigma_{*n}.
\end{align*}
Because the sequence ${(D_i)}_{i \in \omega}$ converges to $t$, it follows that $t_*=t$.
For each $D$, the type trees
\begin{align*}
	(\sigma_{D2},\ldots,\sigma_{Dn})
\end{align*}
are locally consistent with $t_D$. Therefore, by Lemma~\ref{lem:limitofstrtrees}
it follows that the limits  $\sigma_{*2},\ldots,\sigma_{*n}$ are locally consistent with $t$.
Finally, define $\sigma_{*1}$ to be the unique type tree that is globally consistent with $t$.
We have just proved that
\[(\sigma_{*1},\ldots,\sigma_{*n})\]
is a~strategy tree. Because root values are preserved under limits, the root value of
this strategy tree is the desired $(h_1,\ldots,h_n)$.
\end{proof}

\subsubsection{Locally optimal strategy trees}%
\label{sssec:locally-opt}

To make the structure of a~strategy tree more rigid, we will introduce the~notion of a~\emph{locally optimal strategy tree}. In such a~strategy tree, whenever $\sigma_i(u)\neq\sigma_{i+1}(u)$, there is some concrete reason for that fact.

\begin{defi}%
\label{def_optimalstrategies}
Consider a~strategy tree $(t,\sigma_1,\ldots,\sigma_n)$ and let $v \in V_L\sqcup\{\one_L\}$. The strategy tree is called \emph{locally optimal for $v$} if for every $i \in \{2,\ldots,n\}$ and for every \newdef $\sigma'$, if $\sigma'$ is locally consistent with $t$ and
$v \cdot \sigma'(\epsilon) = v \cdot \sigma_i(\epsilon)$, then
\[\topdisc(\sigma_{i-1},\sigma_i) \leq \topdisc(\sigma_{i-1},\sigma'),\]
where $\topdisc$ is the discounted distance.
\end{defi}

If $\sigma$ is optimal for the context type $\one_L$ of the trivial context $\hole$ then we just say that $\sigma$ is \emph{locally optimal}. \new{Let $\Sigma_L$ denote the set of all locally optimal strategy trees for $L$.}

\begin{fact}%
\label{ft:locally-premultiply}
If $\sigma$ is locally optimal for $u\cdot v$ then $\sigma$ is locally optimal for $v$ as well.
\end{fact}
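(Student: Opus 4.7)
The proof is essentially an unpacking of Definition~\ref{def_optimalstrategies}. The plan is to observe that the constraint defining the ``relevant'' competitor \newdef{s}~$\sigma'$ for $v$\=/optimality is \emph{stronger} than the one for $u{\cdot}v$\=/optimality: if $v \cdot \sigma'(\epsilon) = v \cdot \sigma_i(\epsilon)$, then pre\=/multiplying by $u$ in the semigroup $V_L \sqcup \{\one_L\}$ gives $u \cdot v \cdot \sigma'(\epsilon) = u \cdot v \cdot \sigma_i(\epsilon)$. Thus every $\sigma'$ that is a valid competitor in the definition of local optimality for $v$ is also a valid competitor for $u \cdot v$, so the optimality inequality for $u \cdot v$ is a~stronger condition than the one for $v$.

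Concretely, I would write: assume $\sigma = (t,\sigma_1,\ldots,\sigma_n)$ is locally optimal for $u \cdot v$. Fix $i \in \{2,\ldots,n\}$ and a~\newdef $\sigma'$ that is locally consistent with $t$ and satisfies $v \cdot \sigma'(\epsilon) = v \cdot \sigma_i(\epsilon)$. Pre\=/multiplying both sides by $u$ yields $(u\cdot v)\cdot \sigma'(\epsilon) = (u\cdot v)\cdot \sigma_i(\epsilon)$, so $\sigma'$ witnesses the hypothesis of local optimality for $u\cdot v$ applied to the same index~$i$. By assumption we therefore obtain
\[\topdisc(\sigma_{i-1},\sigma_i) \leq \topdisc(\sigma_{i-1},\sigma'),\]
which is exactly the conclusion required for local optimality for $v$. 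Since $i$ and $\sigma'$ were arbitrary, $\sigma$ is locally optimal for $v$, completing the proof.

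There is no real obstacle here: the statement is a~direct monotonicity observation on the side\=/condition in Definition~\ref{def_optimalstrategies}. The only subtle point worth flagging explicitly is that the equality $v \cdot \sigma'(\epsilon) = v \cdot \sigma_i(\epsilon)$ takes place in $H_L$ with the left action of $V_L \sqcup \{\one_L\}$ from Corollary~\ref{cor:algebraic-struct}, so ``pre\=/multiplying by $u$'' is a~legal operation regardless of whether $u \in V_L$ or $u = \one_L$, and the special case $v = \one_L$ recovers the statement that optimality for any $u$ implies optimality in the unqualified sense.
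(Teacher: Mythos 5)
Your proof is correct and is exactly the intended argument; the paper records this as a Fact without an explicit proof precisely because it reduces to the monotonicity observation you spell out, namely that any competitor $\sigma'$ satisfying the side condition for $v$ also satisfies the (weaker) side condition for $u\cdot v$ via associativity of the $V_L\sqcup\{\one_L\}$ action on $H_L$. Nothing is missing.
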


\begin{lem}%
\label{lem:locally-optimal}
Consider a~strategy tree $\sigma$ with root sequence $(h_1,\ldots,h_n)$. Let $v\in V_L\sqcup\{\one_L\}$ be a~context type. Then there exists a~locally optimal strategy tree $\sigma'$ for $v$ with root sequence $(h_1',\ldots,h_n')$ such that
\begin{equation}
v\cdot (h_1,\ldots,h_n)=v\cdot (h_1',\ldots,h_n').
\label{eq:preserve-types}
\end{equation}
\end{lem}

\begin{proof}
Take a~strategy tree $\sigma = (t, \sigma_1, \ldots, \sigma_n)$. Consider the set $\Sigma_2$ of type\=/labelled trees $\sigma_2'$ that are:
\begin{itemize}
\item locally consistent with $t$,
\item if $h$ (resp. $h'$) is the root value of $\sigma_2$ (resp. $\sigma_2')$ then $v\cdot h = v\cdot h'$.
\end{itemize}

\noindent
Fact~\ref{ft:consistent-closed} implies that $\Sigma_2$ is a~closed set. As a~closed subset of the compact space of all trees, $\Sigma_2$ is compact. It follows that some elements of $\Sigma_2$ minimise the discounted distance with respect to $\sigma_1$. We choose such an element as the new $\sigma_2$ and we iterate this mechanism to build a~locally optimal strategy tree $\bar{\sigma}$. This tree satisfies condition~\eqref{eq:preserve-types}.
\end{proof}

\new{
Notice that in the above construction, if $v\neq\one_L$ then the root sequence of the new strategy tree might be different than the original root sequence. However, if $v=\one_L$ then the root sequence is not modified.
}

\begin{cor}%
\label{cor:H-to-stra-trees}
If $(h_1,h_2,\ldots,h_n)\in\gameH_L$ then there exists a~locally optimal strategy tree $\sigma$ with root sequence $(h_1,\ldots,h_n)$.
\end{cor}

Using the above properties, without loss of generality we can restrict our attention to locally optimal strategy trees. One of the direct consequences of working with such strategy trees is expressed by the following lemma.

\begin{lem}%
\label{lem:loc-opt-to-prefix-change}
If $\sigma$ is a~locally optimal strategy tree, $u\preceq y$ two nodes of $\sigma$ then for all $i=1,\ldots,n$ we have
\[\sigma_i(y)\neq \sigma_{i+1}(y)\Rightarrow \sigma_i(u)\neq\sigma_{i+1}(u).\]
\end{lem}

\begin{proof}
Assume that $\sigma_{i}(y) \neq \sigma_{i+1}(y)$ and $\sigma_{i}(u) =
\sigma_{i+1}(u)$. We show that this contradicts local optimality.
Consider the \newdef $\sigma'_{i+1}$ defined as follows:
\begin{itemize}
\item for every $z$ such that $u\preceq z$, $\sigma'_{i+1}(z)=\sigma_i(z)$,
\item for all other nodes $z$, $\sigma'_{i+1}(z) = \sigma_{i+1}(z)$.
\end{itemize}

\noindent
By the construction, $\sigma'_{i+1}$ is locally consistent with $t$ (this
is by the definition for all nodes $z \neq u$, and because $\sigma_{i}(u)
= \sigma_{i+1}(u)$). Note that by the definition, for all nodes
$z$:
\[ \dist(\sigma_{i}(z),\sigma'_{i+1}(z)) \leq \dist(\sigma_{i}(z),\sigma_{i+1}(z))\]

\noindent
Moreover, $\dist(\sigma_{i}(y),\sigma'_{i+1}(y))=0$ and $\dist(\sigma_{i}(y)
,\sigma_{i+1}(y))=1$. Combining all this we obtain:

\begin{itemize}
\item $\topdisc(\sigma_{i},\sigma'_{i+1}) < \topdisc(\sigma_{i},\sigma_{i+1})$.
\item \new{$\sigma'_{i+1}(\epsilon) = \sigma_{i+1}(\epsilon)$, as sequences in ${(H_L)}^n$.}
\end{itemize}
This contradicts local optimality of $\sigma$.
\end{proof}

\begin{cor}%
\label{cor:loc-opt-to-alt-prefix}
If $\sigma$ is a~locally optimal strategy tree and $u\preceq y$ are two nodes of $\sigma$ then the root alternation of $\sigma.u$ is not smaller than the root alternation of $\sigma.y$.
\end{cor}

\subsubsection{Case distinction}%
\label{sssec:case-distinction}

We can now split the proof of the implication $4 \Rightarrow 3$ of Theorem~\ref{thm:mainforBC1} into two subcases, as discussed below. Then we will treat these cases separately, as expressed by Propositions~\ref{pro:if-bounded-limit} and~\ref{pro:if-unbounded-limit}.

Assume for the sake of contradiction that Condition~3 of Theorem~\ref{thm:mainforBC1} is violated, what means that $\gameH_L$ has unbounded alternation. Thus, by Corollary~\ref{cor:H-to-stra-trees} we know that the root alternation of the set $\Sigma_L$ of all locally optimal strategy trees for $L$ is unbounded. Now consider the following two dual subcases:

\begin{description}
\item[(C1)]\label{it:lim-bounded}
$\Sigma_L$ has unbounded root alternation and there exists a~subset $\Sigma'\subseteq\Sigma_L$ that has unbounded root alternation but bounded limit alternation.
\item[(C2)]\label{it:lim-unbounded}
$\Sigma_L$ has unbounded root alternation and every subset $\Sigma'\subseteq\Sigma_L$ with unbounded root alternation has also unbounded limit alternation.
\end{description}

\noindent
The following two propositions consider these cases separately:

\begin{prop}%
\label{pro:if-bounded-limit}
Assuming~(C1), Equation~\eqref{eq:cond-bool} is violated.
\end{prop}

\begin{prop}%
\label{pro:if-unbounded-limit}
Assuming~(C2), Equation~\eqref{eq:cond-limit} is violated.
\end{prop}

The proofs of these propositions are given in \new{Subsections~\ref{ssec:limit-bounded} and~\ref{ssec:limit-unbounded}}. Notice that since either Case~(C1) or Case~(C2) must hold, both these propositions together complete the proof of the implication $4 \Rightarrow 3$ of Theorem~\ref{thm:mainforBC1}.

Roughly speaking, Case~(C1) corresponds to the situation in which the high alternation of~$\gameH_L$ is achieved by modifications in the strategy trees that are spread inside their structure. On the opposite, Case~(C2) corresponds to the situation in which the high alternation occurs along some infinite branch of the considered strategy trees.

\subsection{Case (C1) --- limit alternation is bounded}%
\label{ssec:limit-bounded}

In this section we assume that $\Sigma'\subseteq\Sigma_L$ is a~set of locally optimal strategy trees such that the root alternation of $\Sigma'$ is unbounded but the limit alternation of $\Sigma'$ is bounded. Under that assumption we prove that Equation~\eqref{eq:cond-bool} is violated. We do this in two steps, using a~new object called \emph{strategy matrix} as an intermediary. Strategy matrices represent special strategies for Alternator in the game on tree types. In our first step, we show that if the root alternation of $\Sigma'$ is unbounded but the limit alternation of $\Sigma'$ is bounded then there exist special
strategy matrices of arbitrarily large size. Finally we show that the existence
of sufficiently large special strategy matrices violates Equation~\eqref{eq:cond-bool}.

\begin{defi}
A~\emph{strategy matrix} is a~rectangular matrix with entries from $V_L$ such that every row belongs to $\gameV_L$. The \emph{value} of a~column of a~strategy matrix is the value in $V_L$ obtained by multiplying the entries in that column in $V_L$ in the top\=/to\=/bottom order.
\end{defi}

\begin{defi} A~strategy matrix $M$ is called \emph{parity
alternating} if for some $n \in \omega$ it has $2n$ columns and $n$ rows, and one of the following conditions holds (see Figure~\ref{fig:strategy-matrix}):
\begin{enumerate}[label={(\alph*)}] 
	\item For every $i \in \{1,\ldots,n\}$,
    \begin{itemize}
        \item Columns $2i-1$ and $2i$ have the same entries in all rows except for row $i$.
        \item The values of columns $2i-1$ and $2i$ are different.
    \end{itemize}
    \item Condition (a) holds when the order of columns is reversed.
\end{enumerate}

\noindent
If the case $(a)$ holds then the matrix is called \emph{top\=/down} and if the case $(b)$
holds then it is called \emph{bottom\=/up}. The set of parity alternating matrices with
$n$ rows and $2n$ columns is denoted by $\palt{n}$.
\end{defi}

Figure~\ref{fig:strategy-matrix} depicts a~top\=/down parity alternating strategy matrix in $\palt{6}$. The picture presents columns $2i-1$ and $2i$ for $i=3$, the entries of these columns agree everywhere except the third row, where the values $x$ and $y$ are distinct. The differences in all the other pairs of columns are indicated by grey rectangles, their values are not written down for the sake of clarity. It is important that the definition of a~parity alternating strategy matrix requires the total values of the respective columns to differ, not only the entries in grey rectangles.

\begin{figure}
\begin{center}
\begin{tikzpicture}

\foreach \i in {1,...,6} {
	\fill[gray!40] (\i-1, -\i*0.5+0.5) rectangle ++(1,-0.5);
}

\draw (0,0) rectangle (6,-3);

\foreach \i in {1,...,6} {
	\draw (\i, 0) -- ++(0, -3);
}

\foreach \i in {1,...,2} {
	\node[toC] at (2.25, -\i*0.5+0.25) {$u_\i$};
	\node[toC] at (2.75, -\i*0.5+0.25) {$u_\i$};
}

\node[toC] at (2.25, -3*0.5+0.25) {$x$};
\node[toC] at (2.75, -3*0.5+0.25) {$y$};

\foreach \i in {4,...,6} {
	\node[toC] at (2.25, -\i*0.5+0.25) {$w_\i$};
	\node[toC] at (2.75, -\i*0.5+0.25) {$w_\i$};
}

\node[toL] at (2.25, 0.6) {column $2i{-}1$};
\node[toR] at (2.75, 0.6) {column $2i$};

\draw[edge,<-] (2.25, 0.2) -- ++(-0.5, +0.3);
\draw[edge,<-] (2.75, 0.2) -- ++(+0.5, +0.3);
\end{tikzpicture}
\end{center}
\caption{A~matrix in $\palt{6}$.}%
\label{fig:strategy-matrix}
\end{figure}

\subsubsection{Constructing strategy matrices}%
\label{ssec:constructing-matrices}

As we have already said, we want to prove that under our assumptions we can build arbitrary large strategy matrices: the goal that we aim now, is to prove that if $\Sigma'$ has unbounded root alternation and bounded limit alternation, then $\palt{n}$ is non\=/empty for every $n \in \omega$.

Fix some arbitrary $n \in \omega$. We want to construct a~strategy matrix $M \in \palt n$. Let $\ell$ be the maximal limit alternation among the strategy trees in $\Sigma'$. Choose a~strategy tree $\sigma \in \Sigma'$ with root alternation at least $\ell \cdot 2^{n^2}$ and let $m$ be the duration of $\sigma$, i.e.~$\sigma=(t, \sigma_1, \ldots, \sigma_m)$.

During the proof we will use a~known combinatorial fact that we recall now.

\begin{thm}[Erd\"{o}s--Szekeres Theorem, see~\cite{erdosszekeres}] 
For given $r, s \in \N$, any sequence of length at least $(r {-} 1)(s {-} 1) {+} 1$ contains a~monotonically increasing subsequence of length $r$ or a~monotonically decreasing subsequence of length $s$.
\end{thm}

\begin{lem}%
\label{lem:path}
There are consecutive contexts zones $X_1,\ldots,X_n$ in $\sigma$ and a~sequence of rounds  $i_1,\ldots,i_{n}
\in \{1,\ldots,m-1\}$ such that the sequence of rounds  is either strictly increasing or strictly decreasing, and
\begin{align*}
\val(\sigma,X_j,i_j) \neq \val(\sigma,X_j,i_j{+}1) \qquad \text{for every $j \in \{1,\ldots,n\}$}
\end{align*}
\end{lem}

\begin{proof}
For a~round  $i \in \{1,\ldots,{m{-}1}\}$, define:
\[\change_i(\sigma) =  \{ x \in {\{\dL,\dR\}}^\ast \mid \sigma_i(x) \neq \sigma_{i+1}(x)\}.\]
Now we prove three different claims. Once these claims are proved, we easily finish the proof of the lemma.

\paragraph*{\bf Claim~1} Let $X$ be  a~context zone, and let $\pi$ be an infinite path that passes through the root of $X$, but not through the port. Then
\[\pi \subseteq \change_i(\sigma) \quad \Longrightarrow \quad \val(\sigma,X,i) \neq \val(\sigma,X,i{+}1)\]
holds for every round $i \in \{1,\ldots,m-1\}$.

\begin{proof}[Proof of Claim~1]
This is by local optimality of $\sigma$. Assume that $\pi \subseteq
\change_i(\sigma)$ and $\val(\sigma,X,i) = \val(\sigma,X,i{+}1)$ for some
round $i$. We construct a~\newdef $\sigma_{i+1}'$ that is closer to $\sigma_i$ than
$\sigma_{i+1}$ regarding the discounted distance $\topdisc$ and with the same
root value as $\sigma_{i+1}$, contradicting local optimality.

Consider the \newdef $\sigma'_{i+1}$ defined as follows:

\begin{itemize}
\item For nodes $x \in X$ $\sigma'_{i+1}(x)=\sigma_i(x)$. Hence, for $x \in X$
\begin{align*}
\dist(\sigma_i(x),\sigma'_{i+1}(x)) = 0 \leq \dist(\sigma_i(x),\sigma_{i+1}(x))
\end{align*}

\item For other nodes $y$ we define $\sigma'_{i+1}(y)=\sigma_{i+1}(y)$. Hence
\[\dist(\sigma_i(y),\sigma'_{i+1}(y)) = \dist(\sigma_i(y),\sigma_{i+1}(y)).\]
\end{itemize}
Because $\pi \subseteq \change_i(\sigma)$, there exists at least one node
$x \in X$ such that s
\[\dist(\sigma_i(x),\sigma_{i+1}(x))=1.\]
It follows that:
\begin{align*}
\topdisc(\sigma_{i},\sigma'_{i+1}) < \topdisc(\sigma_{i},\sigma_{i+1})
\end{align*}
Moreover since $\val(\sigma,X,i) = \val(\sigma,X,i{+}1)$, we have
$\sigma'_{i+1}(\epsilon)= \sigma_{i+1}(\epsilon)$. This contradicts local
optimality of $\sigma$.
\end{proof}

\paragraph{\bf Claim~2}
There is a~set $\Pi$ of at least $2^{n^2}$ distinct infinite paths, and a~function $\fun{\when}{\Pi}{\{1,\ldots,m-1\}}$  such that for every $\pi\in\Pi$
\[\pi \subseteq\change_{\when(\pi)}(\sigma).\]

\begin{proof}[Proof of Claim~2]
By consistency of a~strategy tree, every node in $\change_j(\sigma)$ has at least one child in $\change_j(\sigma)$. Therefore, each of the non\=/empty sets $\change_j(\sigma)$ contains at least one infinite path. By the assumption on the root alternation there are at least $\ell \cdot 2^{n^2}$ rounds $j$ where $\change_j(\sigma)$ is non\=/empty. By the assumption on limit alternation, an infinite path can be contained in sets $\change_j(\sigma)$ for at most $\ell$ different values of $j$.
\end{proof}

\paragraph*{\bf Claim~3} Let $\Pi$ be a~set of $2^k$ distinct infinite paths in a~binary tree. There exist consecutive context zones $X_1,\ldots,X_k$ and paths
$\pi_1,\ldots, \pi_k \in \Pi$ such that for every $i \in \{1,\ldots,k\}$, the path $\pi_i$ passes through the ports of the context zones $X_1,\ldots,X_{i-1}$, but not through the port of $X_i$.

\begin{proof}[Proof of Claim~3]
The proof is by induction on $k$. The induction base of $k=1$ is obvious.

Now assume that the thesis holds for $k$ and consider a~set $\Pi$ of $2^{k+1}$ paths.
Let $u$ be the deepest node in the tree that belongs to all paths of $\Pi$
($u$ exists since the root belongs to all paths of $\Pi$). Let $\Pi_\dL$
(respectively, $\Pi_\dR$) be those paths in $\Pi$ that pass through the left
child of $u$ (respectively, the right child of $u$). One of the sets $\Pi_\dL$
or $\Pi_\dR$ must have at least half of the paths, i.e.~at least $2^k$ paths.
By symmetry, assume that $\Pi_\dL$ has at least $2^k$ paths and let $y$ be the
left child of $u$. We apply the induction hypothesis to $\Pi_\dL$ and obtain
paths $\pi_2,\ldots,\pi_{k+1} \in \Pi_\dL$ and consecutive context zones $X_2,
\ldots,X_{k+1}$ such that for every $i \in \{2,\ldots,k+1\}$, path $\pi_i$ passes through the ports of contexts $X_2,\ldots,X_{i-1}$, but not through the port of $X_i$.

We slightly modify $X_2$ by setting $y$ as its root. Note that this does
not affect the properties of the paths $\pi_2,\ldots,\pi_{k+1} \in \Pi_\dL$. Now, we define $X_1$ as the context zone with the root $u$ and the root of $X_2$ as the port. Let $\pi_1$ be some arbitrary path in $\Pi_\dR$. By the definition, $X_1,\dots,X_{k+1}$ are consecutive context zones. Moreover, the paths $\pi_2,\ldots, \pi_{k+1}$ are paths of $\Pi_\dL$ and therefore pass through $y$, i.e.~the port of $X_\dL$. Finally, by the definition $\pi_1$ passes through the right child of $u$ and therefore not through the port of $X_1$.
\end{proof}

Now we can easily complete the proof of Lemma~\ref{lem:path}. Let $\Pi$ and the function $\when$ be as in Claim~2. Apply Claim~3 to $\Pi$, yielding a~sequence of paths,  $\tau_1,\ldots,\tau_{n^2}$,
and a~sequence of context zones, $Y_1,\ldots,Y_{n^2}$, such that for every $i \in \{1,\ldots,n^2\}$, the path $\tau_i$ passes through the ports of context zones $Y_1,\ldots,Y_{i-1}$, but not through the port of $Y_i$. Consider the sequence
\begin{align*}
	\when(\tau_1),\ldots,\when(\tau_{n^2}) \in \{1,\ldots,m-1\}.
\end{align*}
By Erd\"{o}s--Szekeres Theorem, we can find a~sequence of indexes 
\begin{align*}
	j_1 < \cdots < j_n \in \{1,\ldots,n^2\}
\end{align*}
such that the  sequence
\begin{align*}
	\when(\tau_{j_1}),\ldots,\when(\tau_{j_n})
\end{align*}
is either increasing or decreasing. For $i \in \{1,\ldots,n\}$, define $\pi_i$ to  be $\tau_{j_i}$ and $X_i$ to be the union of the context zones
\[ Y_{j_i} \cup \cdots \cup Y_{j_{i+1}-1}.\]
By the construction, we know that the path $\pi_i$ passes through the ports of the context zones $X_1,\ldots,X_{i-1}$, but not through the port of the context zone $X_i$. By Claim~1 we know that
\begin{align*}
\val\big(\sigma,X_i,\when(\pi_i)\big) \neq \val\big(\sigma,X_i,\when(\pi_i){+}1\big).
\end{align*}
Therefore, \new{the proof of Lemma~\ref{lem:path}} is complete if we define $i_1,\ldots,i_n$ to be
\[\when(\pi_1),\ldots,\when(\pi_n). \qedhere\]
\end{proof}

\begin{defi} Let $M$ be a~matrix and consider a~row $j$ and a~column $i$ of $M$ which is not the first column. We define a~new column, denoted by
\begin{align*}
	\mathrm{almostcopy}^{\neq j}_{i}(M),
\end{align*}
as follows: $\mathrm{almostcopy}^{\neq j}_{i}(M)$ is equal to column $i$ of $M$ in all rows except for row $j$, where it is equal to column $i-1$ of $M$.
\end{defi}

\begin{defi}
Let $\sigma$ be a~strategy tree of duration $m$ and let $X_1,\dots,X_n$ be consecutive context zones. We define a~matrix with $n$ rows and $m$
columns as follows (for $i=1,\ldots,m$ and $j=1,\ldots,n$):
\[\smat\big(\sigma,X_1,\ldots,X_n\big)\,[i,j]\eqdef\val(\sigma,X_j,i).\]
\end{defi}

Note that it follows from Fact~\ref{fact:zone-strat} that every row of $\smat(\sigma,X_1,\ldots,X_n)$ belongs to $\gameV_L$ and therefore it is a~strategy matrix.

\begin{lem}%
\label{lem:almost-copy}
Let $N$ be a~strategy matrix defined by
\[N = \smat(\sigma,X_1,\ldots,X_n),\]
for some locally optimal strategy tree $\sigma$ and consecutive context zones $X_1,\ldots,X_n$.
Let $j$ be a~row and $i$ a~column, which is not the first column. If columns $i$ and $i-1$
in $N$ have different entries in row $j$ then the value of the column
\[\mathrm{almostcopy}^{\neq j}_{i}(N)\]
is different from the value of column $i$ in $N$.
\end{lem}

\begin{proof}
This is a~consequence of local optimality of $\sigma$. The proof is the same as Claim~2 in the proof of Lemma~\ref{lem:path}.
\end{proof}

Now we are ready to prove the first intermediary step: constructing a~matrix in $\palt n$.

\begin{prop}
Under our assumptions about the strategy tree $\sigma$ we can construct a~matrix $M \in \palt n$.
\end{prop}

\begin{proof}
Let $X_1,\ldots,X_n$ and $i_1,\ldots,i_n$ be as in Lemma~\ref{lem:path}. Consider the strategy matrix
$N = \smat(\sigma,X_1,\ldots,X_n)$.

By Lemma~\ref{lem:path} we know that for every $j \in \{1,\ldots,n\}$ the entries in row $j$ are different in columns $i_j$ and $i_j+1$.

Suppose first that the sequence $i_1,\ldots,i_n$ is strictly increasing.
Define a~new matrix $M$, which has $n$ rows and $2n$ columns as follows.
\begin{itemize}
\item For  $j \in \{1,\ldots,n\}$, column $2j-1$ of $M$ is $\mathrm{almostcopy}^{\neq j}_{i_j+1}(N)$.
\item For  $j \in \{1,\ldots,n\}$, column $2j$ of $M$ is column $i_j+1$ of $N$.
\end{itemize}

\noindent
Figure~\ref{fig:construction-of-M} depicts the process of constructing the matrix $M$. Gray regions in the matrix $N$ indicate pairs of distinct values in a~row $j$ of the consecutive columns $i_j$ and $i_j+1$. We assume that the sequence $i_j$ is strictly increasing (the other case leads to a~bottom\=/up parity alternating strategy matrix).

\begin{figure}
\begin{center}
\begin{tikzpicture}

\draw (-2, +3.6) edge[obrace] node{$m$} ++(9,0);

\draw (-2.1, +1.0) edge[lbrace] node{$n$} ++(0,2.5);

\node[toC] at (6.5, 3.0) {$N$};

\node[toC] at (0.75, 4.4) {column $i_j$};

\draw[edge,<-] (0.75, 3.8) -- ++(0, +0.3);

\foreach \j/\i in {1/1.5, 2/3.5, 3/5.5, 4/6.0, 5/7.5} {
	\fill[gray!40] (-2+\i-1, +3.5-\j*0.5+0.5) rectangle ++(1,-0.5);
}

\draw (1.0, 3.5) rectangle (1.5, 1.0);

\tikzstyle{almostcopy}=[pattern=north east lines, pattern color=black!80]

\path[almostcopy] (1.0, 3.5) rectangle (1.5, 3.0);
\path[almostcopy] (0.5, 3.0) rectangle (1.0, 2.5);
\path[almostcopy] (1.0, 2.5) rectangle (1.5, 1.0);

\draw (-2,+3.5) rectangle (7,+1);

\foreach \i in {1,...,5} {
	\fill[gray!40] (\i-1, -\i*0.5+0.5) rectangle ++(1,-0.5);
}

\draw (0,0) rectangle (5,-2.5);

\foreach \i in {1,...,5} {
	\draw (\i, 0) -- ++(0, -2.5);
}

\node[toC] at (4.5, -0.5) {$M$};

\path[almostcopy] (1.0, 0.0) rectangle (1.5, -2.5);

\draw (-0.1, -2.5) edge[lbrace] node{$n$} ++(0,2.5);

\node[toL] at (1.25, -3.4) {column $2j{-}1$};
\node[toR] at (1.75, -3.4) {column $2j$};

\draw[edge,<-] (1.25, -2.8) -- ++(-0.5, -0.3);
\draw[edge,<-] (1.75, -2.8) -- ++(+0.5, -0.3);

\draw[edge,->] (1.25, 0.8) -- ++(0.0, -0.6);
\draw[edge,->] (1.25, 0.8) -- ++(0.5, -0.6);
\end{tikzpicture}
\end{center}
\caption{Construction of the matrix $M$ from $N$. Column $2j-1$ of $M$ equals $\mathrm{almostcopy}$ of column $i_j+1$ of $N$ (i.e.~the dashed part), while column $2j$ just equals column $i_j+1$ of $N$.}%
\label{fig:construction-of-M}
\end{figure}
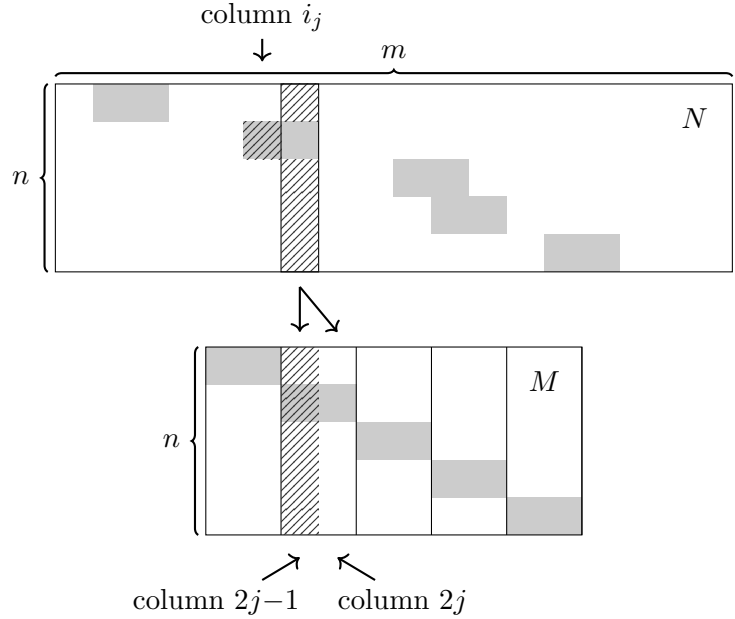

Now we show that $M \in  \palt n$. The dimensions of the matrix $M$ are correct: it has $n$ rows and $2n$ columns. We now show that $M$ is a~strategy matrix, which means that each row belongs to $\gameV_L$.
For $j \in \{1,\ldots,n\}$,  let us see how row $j$ of $M$
\[M[j,1], \ldots, M[j,2n]\]
depends on row $j$ of $N$
\[N[j,1],\ldots, N[j,m].\]

By reading the definition of $M$, we see that the dependency is
\begin{itemize}
\item When $k \neq j$, then $M[j,2k-1]=M[j,2k]=N[j,i_k+1]$.
\item When $k=j$, then  $M[j,2k-1]=N[j,i_k]$ and $M[j,2k]=N[j,i_k+1]$.
\end{itemize}
It follows that row $j$ of $M$ is obtained from row $j$ of $N$ by eliminating some letters and duplicating some other letters. Since  $\gameV_L$ is closed under eliminating and duplicating letters, and since $N$ was a~strategy matrix, it follows that also $M$ is a~strategy matrix.

By Lemma~\ref{lem:almost-copy}, for every $j \in \{1,\ldots,n\}$, the values of columns $2j-1$ and $2j$ in $M$ are different. By the construction, columns $2j-1$ and $2j$ in $M$ have the same entries, except for row $j$. So $M$ is parity alternating.

When the sequence $i_1,\ldots,i_n$ is strictly decreasing, the matrix $M$ is defined like for a~strictly increasing sequence, except that the columns of $M$ are filled in not from left to right, but from right to left. Formally speaking:
\begin{itemize}
\item For  $j \in \{1,\ldots,n\}$, column $2(n-j+1)-1$ of $M$ is $\mathrm{almostcopy}^{\neq j}_{i_j+1}(N)$.
\item For  $j \in \{1,\ldots,n\}$, column $2(n-j+1)$ of $M$ is column $i_j+1$ of $N$.
\end{itemize}
The proof that $M$ belongs to $\palt n$ is the same as above.
\end{proof}

\subsubsection{From strategy matrices to violation of~\texorpdfstring{\eqref{eq:cond-bool}}{(bool)}}

So now we can do the second intermediary step. Let us assume that $\palt n$ is non\=/empty for any $n$. By the symmetry we can assume that for every $n$ there exists a~top\=/down parity alternating matrix of size $n$ (the other case is handled symmetrically): under this assumption we prove that Equation~\eqref{eq:cond-bool} is violated. To do that, we need to start from a~parity alternating matrix of a~large size, and by consecutive simplifications, construct a~small matrix from which the violation can be easily extracted.

Take a~strategy matrix $M$.  Our aim is define a~matrix obtained form $M$ by \emph{merging} two consecutive rows $i$ and $i+1$ of $M$. Let
\[(v_{1,1},\ldots,v_{1,n}), \ldots, (v_{k,1},\ldots,v_{k,n}) \in V_L^n.\]
be the rows in $M$. The merge operation removes rows
\[(v_{i,1},\ldots,v_{i,n}) \quad \text{and} \quad (v_{(i+1),1},\ldots,v_{(i+1),n})\]
and replaces them by the row
\[(v_{i,1} \cdot v_{(i+1),1},\ldots,v_{i,n} \cdot v_{(i+1),n}),\]
which is the product of the two removed rows in the monoid $V_L^n$. Clearly, the result of the merging operation is also a~strategy matrix with the same values of all the columns.

Now we define four \emph{safe rules}, i.e.~rules that preserve parity alternating strategy matrices.

\begin{lem}%
\label{lem:safe-rules}
For every $M \in \palt n$ that is top\=/down and $i \in \{1,\ldots,n\}$, applying the following rules yields a~parity alternating strategy matrix in $\palt {n-1}$:
\begin{itemize}
\item remove columns $2i-1$ and $2i$ and then merge row $i$ with $i{+}1$;
\item remove columns $2i-1$ and $2i$ and then merge row $i{-}1$ with $i$;
\item remove the first two columns and remove the first row;
\item remove the last two columns and remove the last row.
\end{itemize}
The same holds for a~bottom\=/up parity alternating matrices but then we count the columns in the reversed order.
\end{lem}

\begin{proof}
Immediate by the definition of the rules.
\end{proof}

Notice that $ux\neq uy$ implies that $x\neq y$, but $uvx\neq uvy$ does not imply that $ux\neq uy$, therefore we cannot safely remove columns $2i-1$ and $2i$ and remove row $i$ except for $i=1$ or $i=n$.

Consider a~parity alternating strategy matrix $M\in\palt{n}$ and two indices $i\in\{1,\ldots,n\}$ and $j\in\{1,\ldots,2n\}$. We say that the entry $M[i,j]$ is \emph{above diagonal} (resp. \emph{below diagonal}) if:
\begin{itemize}
\item if $M$ is top\=/down then $2i$ must be smaller than $j$ (resp. $2i$ must be grater than $j+1$),
\item if $M$ is bottom\=/up parity alternating then $2(n-i)$ must be greater than $j-1$ (resp. $2(n-i)$ must be smaller than $j-2$).
\end{itemize}

\noindent
The following picture depicts the regions above and below the diagonal of a~top\=/down parity alternating matrix in $\palt{6}$.

\begin{center}
\begin{tikzpicture}

\foreach \i in {1,...,6} {
	\fill[gray!40] (\i-1, -\i*0.5+0.5) rectangle ++(1,-0.5);
}

\draw (0,0) rectangle (6,-3);

\foreach \i in {1,...,6} {
	\draw[black!40] (\i, 0) -- ++(0, -3);
}

\path[pattern=crosshatch dots, pattern color=black!30] (1,0) -- (6,0) -- (6,-2.5)
  \foreach \i in {1,...,5} {-- (6-\i, -3.0+\i*0.5) -- ++(0, 0.5)}
  -- cycle;

\path[pattern=north east lines, pattern color=black!50] (0,-0.5)
  \foreach \i in {1,...,5} {-- (\i, -\i*0.5) -- ++(0, -0.5)}
  -- (0, -3) -- cycle;

\node[toC] at (4.5, -1) {above};

\node[toC] at (6-4.5, -3+1) {below};
\end{tikzpicture}
\end{center}

\begin{defi}
A~parity alternating strategy matrix $M$ is upper (resp.\ lower) \emph{idempotent} if there exists an idempotent $e \in V_L$ such that every entry above (resp.\ below) the diagonal of $M$ equals $e$. $M$ is called just \emph{idempotent} if it is both upper and lower idempotent (possibly with two distinct idempotents $e,e'\in V_L$).
\end{defi}

\begin{fact}
The safe rules preserve the fact that a~given parity alternating matrix is upper (resp.\ lower) idempotent.
\end{fact}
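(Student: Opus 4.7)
The plan is to treat each of the four safe rules in turn, focusing by symmetry on the top\=/down case and on upper idempotency; the bottom\=/up case and lower idempotency are handled by entirely dual arguments. Throughout, let $M\in\palt{n}$ be upper idempotent with witness idempotent $e\in V_L$, so that $M[i,j]=e$ whenever $2i<j$, and let $M'\in\palt{n-1}$ be the result of applying one of the rules.

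The first two rules are easy. For the rule removing the first two columns and the first row, an entry $M'[k',j']$ above the diagonal of $M'$ (i.e.\ with $2k'<j'$) corresponds to the original entry $M[k'+1,\,j'+2]$, and the inequality $2(k'+1)<j'+2$ follows from $2k'<j'$; hence $M'[k',j']=e$. For the rule removing the last two columns and the last row, the correspondence is the identity on indices and the above\=/diagonal condition is inherited verbatim. So in both cases the set of above\=/diagonal entries of $M'$ injects into the above\=/diagonal entries of $M$ and is therefore constantly equal to $e$.

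The interesting cases are the two merging rules, and it is precisely here that idempotency of $e$ is used. Consider rule~1: remove columns $2i-1$ and $2i$, and replace rows $i$ and $i+1$ by their pointwise product. For a new row index $k'\neq i$, every entry of $M'$ above the diagonal is a single entry of $M$ that, after tracking the column reindexing (new column $j'\ge 2i-1$ corresponds to original column $j'+2$), is easily seen to lie above the diagonal of $M$, and hence equals $e$. The only new phenomenon occurs at $k'=i$: an above\=/diagonal entry $M'[i,j']$ requires $j'>2i$, so the corresponding original column is $j'+2\ge 2i+3$, and the entry equals the product
\[
M[i,\,j'+2]\cdot M[i+1,\,j'+2].
\]
Both factors are above the diagonal of $M$ — the first because $2i<j'+2$ and the second because $2(i{+}1)=2i+2<j'+2$ — so each factor equals $e$, and idempotency gives $e\cdot e=e$. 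Rule~2 is identical except that the merged row comes from original rows $i-1$ and $i$; the verification that both factors lie above the diagonal uses $2(i-1)<2i<j'+2$ in the same way, and again collapses via $e\cdot e=e$. Lower idempotency is handled by the dual calculation, where for the merging rules both factors lie below the diagonal of $M$ and the other idempotent $e'$ absorbs itself. This uses idempotency in an essential way: without $e\cdot e=e$, the merging operation could in principle destroy the constant above\=/diagonal value.
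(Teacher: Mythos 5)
The paper states this as a bare Fact with no proof at all, presumably regarding it as an immediate bookkeeping observation. Your write-up correctly fills that gap. The index tracking for the two removal rules is right, and you correctly identify the only nontrivial point: under the merging rules, the merged row's above-diagonal (resp.\ below-diagonal) entries are products of two entries both lying above (resp.\ below) the diagonal of the original matrix, so the constant value is preserved precisely because $e$ (resp.\ $e'$) is idempotent. Two very minor presentational quibbles: for rule~1 with row index $k' > i$ you only mention the column reindexing $j' \mapsto j'+2$ but not the row reindexing $k' \mapsto k'+1$ (the inequality $2(k'+1) < j'+2$ still reduces to $2k' < j'$, so nothing breaks); and for $k' < i$ you should also note the subcase $j' < 2i-1$ where no column shift occurs, though the condition matches trivially. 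Neither affects correctness.
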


\begin{lem}%
\label{lem:matrix-lemma}
For each  $m \in \N$ there is some $n \in \N$ such that for any matrix
in $\palt n$ there is a~sequence of safe rules that yields a~matrix $N
\in \palt m$ that is upper (resp.\ lower) idempotent.
\end{lem}

\begin{proof}
By the symmetry we will only deal with the upper idempotent case.
The proof uses Ramsey Theorem for hypergraphs with edges of size $3$.
This theorem says that for every  $m \in \N$ there exists a~number
$f(m)$ such that for any complete hypergraph with edges coloured
over $V_L$, there exists a~complete sub\=/hypergraph of size $m$ in which
all edges share the same colour. We choose $n = f(m+1)$.

Fix a~matrix $M \in \palt n$. Again by the symmetry we assume that $M$ is a~top\=/down parity alternating matrix. Consider the hypergraph where the nodes are $\{1,\ldots,n\}$ and an edge
$\{ i < j < k\}$ is coloured by the value obtained by multiplying, in the monoid $V_L$, the entries that appear in rows $i+1,\ldots,j$ of column $2k$. By the choice of $n$, we can apply Ramsey Theorem to this colouring and get a~subset of size $m+1$:
\[I = \{i_0<i_1<\cdots<i_m\}\subseteq \{1,\ldots,n\}\]
such that all the hyperedges on $I$ have the same colour, say $e \in V_L$.
This colour must be an idempotent, i.e.~it must satisfy $e = ee$. This situation
is illustrated below, with a~matrix in $\palt{10}$. The multiplication in $V_L$ of the entries in the regions marked by red rectangles always gives the idempotent $e$. The regions come in pairs, for columns $2i_\ell$ and $2i_\ell-1$ but as the matrix is parity alternating, the entries in these columns agree (except row $i_\ell$).

\begin{center}
\begin{tikzpicture}[scale=0.9]
\tikzstyle{region}=[draw=red, ultra thick]

\foreach \i in {1,...,10} {
	\fill[gray!40] (\i-1, -\i*0.5+0.5) rectangle ++(1,-0.5);
}

\draw (0,0) rectangle (10,-5);

\foreach \i in {1,...,10} {
	\draw[black!40] (\i, 0) -- ++(0, -5);
}

\foreach \x/\i in {0/2,1/4,2/7,3/9} {
	\draw[edge] (-0.6, -\i*0.5+0.25) -- ++(0.3, 0);
	\node[toL] at (-0.8, -\i*0.5+0.25) {$i_\x$};

	\draw[edge] (10.6, -\i*0.5+0.25) -- ++(-0.3, 0);
	\node[toR] at (10.8, -\i*0.5+0.25) {$i_\x$};

	\draw[edge] (\i-0.25, +0.6) -- ++(0, -0.3);
	\node[toC] at (\i-0.25, +0.8) {$2i_\x$};
}

\path[region] (8.0,-1) rectangle ++(0.5,-1.0);
\path[region] (8.5,-1) rectangle ++(0.5,-1.0);

\path[region] (8.0,-2) rectangle ++(0.5,-1.5);
\path[region] (8.5,-2) rectangle ++(0.5,-1.5);

\path[region] (6.0,-1) rectangle ++(0.5,-1.0);
\path[region] (6.5,-1) rectangle ++(0.5,-1.0);
\end{tikzpicture}
\end{center}

Now we apply a~sequence of safe rules. First, we remove the first $i_0$ rows and first $2i_0$ columns. Then we remove the last $n-i_m$ rows and last $2(n-i_m)$ columns. We assume that the indices of rows and columns are shifted accordingly to the operations we perform; i.e.~now~$i_\ell\eqdef i_\ell-i_0$ for $\ell=0,\ldots,m$. After these operations, the matrix gets the following shape.

\begin{center}
\begin{tikzpicture}[scale=0.9]
\tikzstyle{region}=[draw=red, ultra thick]

\foreach \i in {1,...,7} {
	\fill[gray!40] (\i-1, -\i*0.5+0.5) rectangle ++(1,-0.5);
}

\draw (0,0) rectangle (7,-3.5);

\foreach \i in {1,...,7} {
	\draw[black!40] (\i, 0) -- ++(0, -3.5);
}

\foreach \x/\i in {0/0,1/2,2/5,3/7} {
	\draw[edge] (-0.6, -\i*0.5+0.25) -- ++(0.3, 0);
	\node[toL] at (-0.8, -\i*0.5+0.25) {$i_\x$};

	\draw[edge] (7.6, -\i*0.5+0.25) -- ++(-0.3, 0);
	\node[toR] at (7.8, -\i*0.5+0.25) {$i_\x$};

	\draw[edge] (\i-0.25, +0.6) -- ++(0, -0.3);
	\node[toC] at (\i-0.25, +0.8) {$2i_\x$};
}

\path[region] (6.0,-0) rectangle ++(0.5,-1.0);
\path[region] (6.5,-0) rectangle ++(0.5,-1.0);

\path[region] (6.0,-1) rectangle ++(0.5,-1.5);
\path[region] (6.5,-1) rectangle ++(0.5,-1.5);

\path[region] (4.0,-0) rectangle ++(0.5,-1.0);
\path[region] (4.5,-0) rectangle ++(0.5,-1.0);
\end{tikzpicture}
\end{center}

Now, for $\ell=1,2,\ldots,m$ we remove columns $2i_{\ell-1}+1,\ldots, 2i_\ell-2$ and merge rows $i_{\ell-1}+1,\ldots,i_\ell$ into one row (with value $e$ in columns $2i_{\ell+1}$, \ldots,$2i_m$). This way, we get the following matrix, which is upper idempotent.

\[
\begin{tikzpicture}[scale=0.9]
\tikzstyle{region}=[draw=red, ultra thick]

\foreach \i in {1,...,3} {
	\fill[gray!40] (\i-1, -\i*0.5+0.5) rectangle ++(1,-0.5);
}

\draw (0,0) rectangle (3,-1.5);

\foreach \i in {1,...,3} {
	\draw[black!40] (\i, 0) -- ++(0, -1.5);
}

\path[region] (2.0,-0.0) rectangle ++(0.5,-0.5);
\path[region] (2.5,-0.0) rectangle ++(0.5,-0.5);

\path[region] (2.0,-0.5) rectangle ++(0.5,-0.5);
\path[region] (2.5,-0.5) rectangle ++(0.5,-0.5);

\path[region] (1.0,-0.0) rectangle ++(0.5,-0.5);
\path[region] (1.5,-0.0) rectangle ++(0.5,-0.5);
\end{tikzpicture}
\qedhere
\]
\end{proof}

\begin{cor}%
\label{cor:matrix-lemma}
For each  $m \in \N$ there is some $n \in \N$ such that for any matrix
in $\palt n$ there is a~sequence of safe rules that yields a~matrix $N
\in \palt m$ that is idempotent.
\end{cor}

\begin{proof}
Since the safe rules preserve the fact that a~matrix is upper (resp.\ lower) idempotent, we can apply Lemma~\ref{lem:matrix-lemma} twice, once to get an upper idempotent matrix and then to make it also lower idempotent.
\end{proof}

\begin{fact}%
\label{ft:idem-remove}
If $M$ is an idempotent parity alternating matrix, the operation that removes columns $2i$ and $2i-1$ and removes row $i$ preserves the fact that the matrix is idempotent parity alternating.
\end{fact}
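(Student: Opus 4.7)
\medskip
\noindent\textbf{Proof proposal.} By symmetry, I assume $M$ is top\=/down, with upper idempotent $e$ and lower idempotent $e'$; let $x_k, y_k \in V_L$ denote the diagonal entries of $M$ in columns $2k{-}1$ and $2k$. The cases $i=1$ and $i=n$ are already covered by the safe rules of Lemma~\ref{lem:safe-rules}, so the only new content is the case $1 < i < n$. Let $N$ denote the matrix obtained from $M$. Clearly $N$ has $n{-}1$ rows and $2(n{-}1)$ columns, and each row of $N$ is obtained from a row of $M$ by erasing two letters, so by Fact~\ref{ft:removing-letters} every row of $N$ lies in $\gameV_L$.

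\smallskip
\noindent Next I would set up the bookkeeping for indices: new row $\ell$ corresponds to old row $\ell$ when $\ell < i$ and to old row $\ell{+}1$ when $\ell \geq i$, while new column $j$ corresponds to old column $j$ when $j < 2i{-}1$ and to old column $j{+}2$ otherwise. Under this correspondence the new column pair $(2k{-}1, 2k)$ comes from the old pair indexed by $k' = k$ (if $k < i$) or $k' = k{+}1$ (if $k \geq i$). In both cases the exceptional old row $k'$ is distinct from the removed row $i$, and it translates precisely to new row $k$, so the structural requirement of parity alternation (agreement outside a single row) is preserved. An analogous index chase shows that the above/below\=/diagonal regions of $N$ correspond exactly to above/below\=/diagonal regions of $M$, which yields preservation of upper and lower idempotence.

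\smallskip
\noindent The only substantive point — and the place where the idempotence hypothesis is used in an essential way — is showing that the values of the two new columns in each pair remain distinct. By idempotence, the column of $M$ with index $2k'{-}1$ has the product form
\[
e^{k'-1} \cdot x_{k'} \cdot (e')^{n-k'},
\]
and similarly for $2k'$ with $y_{k'}$; the assumption $M \in \palt{n}$ tells us these two products differ. When we remove row $i$, the entry deleted from this column pair is either an $e'$ (if $k' < i$) or an $e$ (if $k' > i$), so the new column product is $e^{k'-1} \cdot z \cdot (e')^{n-1-k'}$ in the first situation and $e^{k'-2} \cdot z \cdot (e')^{n-k'}$ in the second (for $z \in \{x_{k'}, y_{k'}\}$). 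Because $1 < i < n$, the surviving exponents of $e$ and $e'$ in both cases are still at least $1$ (here one checks: if $k' < i$ then $k' \leq n{-}2$; if $k' > i$ then $k' \geq 2$), so the idempotent identities $e\cdot e = e$ and $e'\cdot e' = e'$ collapse the new product back to exactly the old value. Hence the two new column values remain distinct, completing the verification that $N \in \palt{n-1}$.

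\smallskip
\noindent The main obstacle, as sketched above, is the value\=/distinctness step: one needs the operation to actually leave the column values unchanged. This is precisely why the hypothesis of idempotence (and the exclusion of the two boundary cases $i=1, n$, which require the separate treatment of Lemma~\ref{lem:safe-rules}) is needed — without it, removing an entry from the bulk $e$- or $e'$-region could genuinely modify the product, and distinctness would not follow from the hypothesis on $M$.
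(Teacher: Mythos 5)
Your proof is correct and takes essentially the same approach as the paper, whose proof is a one\-/line appeal to the fact that (by idempotence) each surviving column's value is determined by its diagonal entry alone, so deleting one bulk $e$ or $e'$ leaves the value unchanged. One small arithmetic slip: in the case $k'>i$ the surviving exponent of $e$ is $k'-2$, so you need $k'\geq 3$ rather than the stated $k'\geq 2$; this does follow from $k'>i>1$, so the argument stands as is.
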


\begin{proof}
If $i=1$ or $i=n$ then the above operation is safe. For $1<i<n$ we use the fact that the matrix is idempotent, so the values of columns $2i-1$ and $2i$ depend only on the unique entry in that column which is neither above nor below the diagonal. The following picture depicts an idempotent parity alternating matrix in $\palt{6}$ where the upper idempotent is $u$ and the lower idempotent is $w$. Removing row $4$ and columns $7$ and $8$ of that matrix preserves the fact that the matrix is idempotent parity alternating.

\[
\begin{tikzpicture}

\foreach \i in {1,...,6} {
	\fill[gray!40] (\i-1, -\i*0.5+0.5) rectangle ++(1,-0.5);
	\node[toC] at (\i-0.75, -\i*0.5+0.25) {$x_\i$};
	\node[toC] at (\i-0.25, -\i*0.5+0.25) {$y_\i$};
}

\draw (0,0) rectangle (6,-3);

\foreach \x in {2,...,6} {
	\tikzEvalInt{\maxX}{\x-1}
	\foreach \y in {1,...,\maxX} {
		\node[toC] at (\x-0.75, -\y*0.5+0.25) {$u$};
		\node[toC] at (\x-0.25, -\y*0.5+0.25) {$u$};

		\node[toC] at (6-\x+0.75, -3+\y*0.5-0.25) {$w$};
		\node[toC] at (6-\x+0.25, -3+\y*0.5-0.25) {$w$};
	}
}

\path[pattern=north east lines, pattern color=black!50] (0,-1.5) rectangle ++(6,-0.5);
\path[pattern=north east lines, pattern color=black!50] (3,+0.0) rectangle ++(1,-3.0);
\end{tikzpicture}
\qedhere
\]
\end{proof}

\begin{lem}%
\label{lem:very-regular-matrix}
If $\palt{n}$ is non\=/empty for arbitrarily big $n$ then there are $u,w,x,y \in V_L$ such that $u$ and $w$ are idempotents and $\palt 4$ contains the matrix
\begin{center}
\begin{tikzpicture}

\foreach \i in {1,...,4} {
	\fill[gray!40] (\i-1, -\i*0.5+0.5) rectangle ++(1,-0.5);
	\node[toC] at (\i-0.75, -\i*0.5+0.25) {$x$};
	\node[toC] at (\i-0.25, -\i*0.5+0.25) {$y$};
}

\draw (0,0) rectangle (4,-2);

\foreach \x in {2,...,4} {
	\tikzEvalInt{\maxX}{\x-1}
	\foreach \y in {1,...,\maxX} {
		\node[toC] at (\x-0.75, -\y*0.5+0.25) {$u$};
		\node[toC] at (\x-0.25, -\y*0.5+0.25) {$u$};

		\node[toC] at (4-\x+0.75, -2+\y*0.5-0.25) {$w$};
		\node[toC] at (4-\x+0.25, -2+\y*0.5-0.25) {$w$};
	}
}
\end{tikzpicture}
\end{center}
\end{lem}

\begin{proof}
By the hypothesis, we can apply Corollary~\ref{cor:matrix-lemma} for $m=4*|V_L|^2$. Let $N$ be the resulting
matrix. Each row of that matrix is of the form $w\cdots w\ \cdot x_i \cdot y_i \cdot\ u\cdots u$. Thus, there exists a~pair $(x,y)$ that appears as $(x_i,y_i)$ in at least $4$ rows. Using Fact~\ref{ft:idem-remove} we can remove all the remaining rows (and the corresponding pairs of columns) from $N$ and the result has the above form.
\end{proof}

We can conclude the proof of Proposition~\ref{pro:if-bounded-limit} by showing that under our assumptions Equation~\eqref{eq:cond-bool} is violated.

Let $M$ be the matrix described in Lemma~\ref{lem:very-regular-matrix}. Let $\{u_1,\ldots,u_{8}\}$ be the values
of all the columns in $M$. The matrix is in $\palt{4}$ so $u_{3} \neq u_{4}$.
Because $u$ and $w$ are idempotent,
\[u_3=uxww=uxw=uuxw=u_5\]
For the same reason, $u_4=u_6$. This is depicted in the picture below
\begin{center}
\begin{tikzpicture}

\foreach \i in {1,...,4} {
	\fill[gray!40] (\i-1, -\i*0.5+0.5) rectangle ++(1,-0.5);
	\node[toC] at (\i-0.75, -\i*0.5+0.25) {$x$};
	\node[toC] at (\i-0.25, -\i*0.5+0.25) {$y$};
}

\draw (0,0) rectangle (4,-2);

\foreach \x in {2,...,4} {
	\tikzEvalInt{\maxX}{\x-1}
	\foreach \y in {1,...,\maxX} {
		\node[toC] at (\x-0.75, -\y*0.5+0.25) {$u$};
		\node[toC] at (\x-0.25, -\y*0.5+0.25) {$u$};

		\node[toC] at (4-\x+0.75, -2+\y*0.5-0.25) {$w$};
		\node[toC] at (4-\x+0.25, -2+\y*0.5-0.25) {$w$};
	}
}

\foreach \x/\i in {3/3,4/4,5/3,6/4} {
	\draw[edge] (\x*0.5-0.25, -2.45) -- ++(0, +0.3);
	\node[toC] at (\x*0.5-0.25, -2.75) {$u_\i$};
}
\end{tikzpicture}
\end{center}
Since $u_3$ and $u_4$ are different, at least one of them is different than $uw$. Without loss of generality suppose that $u_3 \neq uw$.
Because each row of the matrix belongs to $\gameV_L$ and $\gameV_L$ is closed under removing letters, it follows that
\[(w,x,u) \in \gameV_L.\]

This means that we have a~violation of Equation~\eqref{eq:cond-bool}, which requires that
\[uw = u^\monoid \cdot u \cdot w^\monoid = u^\monoid \cdot x \cdot w^\monoid = u_3.\]

This concludes the proof of Proposition~\ref{pro:if-bounded-limit}.

\subsection{Case (C2) --- limit alternation is unbounded}%
\label{ssec:limit-unbounded}

In this section we assume that $\Sigma_L$ has unbounded root alternation and every subset $\Sigma'\subseteq\Sigma_L$ with unbounded root alternation has also unbounded limit alternation. Under that assumption we need to prove that Equation~\eqref{eq:cond-limit} is violated.

We start by constructing \new{a~finite directed graph $G_L$ (called the~\emph{strategy graph} of $L$)} that represents very specific strategies of \new{Alternator} in the game on types. We then show that the above assumptions imply that the strategy graph needs to be \emph{recursive}\footnote{\new{This notion, used previously in~\cite{bojanczyk_boolean} and~\cite{michalewski_delta}, has nothing to do with recursion theory, it speaks about the possibility to traverse certain edges in the graph and to go back to the beginning.}} --- roughly speaking it contains complicated connected components. Then we finish the proof of Proposition~\ref{pro:if-unbounded-limit} by showing that recursivity of the strategy graph gives rise to a~violation of Equation~\eqref{eq:cond-limit}.

\subsubsection{Strategy graph}%
\label{ssec:strat-graph}

Given a~language $L$, the \emph{strategy graph of $L$} (denoted $G_L$) is defined as follows. The set of nodes of $G_L$ is $\big(V_L\sqcup\{\one_L\}\big)\times H_L$. There is an edge from a~node $(v,h)$ to a~node $(v',h')$ if there exist:
\[(u_1,u_2), (w_1,w_2)\in\gameV_L,\ z\in V_L\sqcup\{\one_L\}\]
such that
\[h=vu_1w_1^\infty\quad\text{ and }\quad v'=vu_2w_2^\monoid z.\]
Notice that the above definition does not invoke $h'$ (the value of $h'$ matters for a~successive edge from $(v',h')$).

\new{
The following lemma provides a~less explicit definition of edges in $G_L$, via the notion of \emph{path\=/switching}.
}

\begin{defi}%
\label{def:path-switch}
\new{Consider a~tree type $h\in H_L$ and a~pair of context types $v,v'\in V_L$. We say that the pair $(v,h)$ is \emph{path\=/switching} into $v'$ if there exists a~tree $t\in\trees_A$ such that $v\cdot \alpha_L(t)=h$ and there exists an~infinite path $\pi$ of $t$ such that if $D$ is a~finite prefix of $t$ then~$D$ can be completed into a~context $C'$ with the port located on $\pi$ such that $v\cdot \alpha_L(C')\cdot z=v'$ for some $z\in V_L\sqcup\{\one_L\}$.}
\end{defi}

\begin{lem}%
\label{lem:def-of-edges}
There exists an edge from $(v,h)$ to $(v',h')$ in $G_L$ if and only if \new{$(v,h)$ is path\=/switching into $v'$.}
\end{lem}

\begin{proof}
First assume that there exists an edge from $(v,h)$ to $(v',h')$ in $G_L$ with $(u_1,u_2)$, $(w_1,w_2)$, $z$ witnessing that. Let $C_u$, $C_w$, $C_z$ be contexts of types $u_1$, $w_1$, $z$ respectively. Take $t\eqdef C_u\cdot C_w^\infty$. In that case $v\cdot\alpha_L(t)=h$ and let $\pi$ be the infinite path that contains all the ports of the contexts $C_u\cdot C_w^k$ for $k=0,1,\ldots$

Consider a~finite prefix $D$ of $t$. Let $D'$ be an extension of $D$ that is also finite, contains exactly one port, and $D'$ is a~prefix of $C_u\cdot C_w^{(\monoid\cdot k)}$ for some $k$.

By the definition, $D'$ can be written as
\[D=D_0\cdot D_1\cdot\ldots\cdot D_{\monoid\cdot k},\] 
where $D_0,\ldots,D_{\monoid\cdot k}$ are finite prefixes of contexts (i.e.~each of them has exactly one port); and moreover $D_0$ is a~prefix of $C_u$ and all $D_1,\ldots,D_{\monoid\cdot k}$ are prefixes of $C_w$. By the assumption that $(u_1,u_2),(w_1,w_2)\in\gameV_L$ we know that we can extend all $D_0,D_1,\ldots,D_{\monoid\cdot k}$ into contexts $C_0,C_1,\ldots,C_{\monoid\cdot k}$ such that $\alpha_L(C_0)=u_2$ and $\alpha_L(C_i)=w_2$ for $i=1,\ldots,{\monoid\cdot k}$. Take $C'=C_0\cdot C_1\cdot\ldots\cdot C_{\monoid\cdot k}$. Clearly the port of $C'$ is located on $\pi$. Notice that 
\[\alpha_L(C')=u_2\cdot w_2^{\monoid\cdot k}=u_2\cdot w_2^\monoid.\]
Therefore, $v\cdot \alpha_L(C')\cdot z = v'$ by the assumption that there is an edge from $(v,h)$ to $(v',h')$.

Now we assume that $(v,h)$ \new{is path\=/switching into $v'$} and we prove that there is an~edge between $(v,h)$ and $(v',h')$ in $G_L$. This will be achieved by the Ramsey theorem. Consider a~triple of nodes $x\prec y\prec \delta$ on $\pi$. Let $d=|\delta|$ be the depth of $\delta$ and let $D_d$ be the $(d{+}1)$\=/prefix of $t$ (i.e.~$D_d$ is $t$ restricted to the subset ${\{\dL,\dR\}}^{\leq d}$ of its domain). Let $C'_d$ be given from the assumption in Definition~\ref{def:path-switch} for $D=D_d$.

\new{
Let $u_1$, $u_2$, $w_1$, $w_2$ be the $\alpha_L$\=/types of the following context zones:
}
\begin{itemize}
\item $u_1$ (resp. $u_2$) the type of the context zone rooted in $\epsilon$ with the port in $x$ of $t$ (resp.\ of $C'_d$);
\item $w_1$ (resp. $w_2$) the type of the context zone rooted in $x$ with the port in $y$ of $t$ (resp.\ of $C'_d$).
\end{itemize}
Let $z$ be the type given by the assumption for $C'_d$.

\new{
Define $f(x,y,\delta)$ as the quintuple $(u_1,u_2,w_1,w_2,z)\in {\big(V_L\sqcup\{\one_L\}\big)}^5$. Apply the Ramsey theorem for triples to $f$ to get an infinite set $P$ of nodes on $\pi$. We know that for all the triples from~$P$ the function $f$ is constantly equal a~fixed quintuple $(u_1,u_2,w_1,w_2,z)$. It is easy to see that $\alpha_L(t)=u_1\cdot w_1^\infty$. Thus, $h=vu_1w_1^\infty$. Similarly, the Ramsey theorem guarantees that $w_2\cdot w_2=w_2$ and therefore $v'=vu_2w_2^\monoid z$. Thus, to know that there is an~edge from $(v,h)$ to $(v'h,')$ in $G_L$ it is enough to prove that $(u_1,u_2)$ and $(w_1,w_2)$ are both elements of~$\gameV_L$.
}

\new{
We will show that Alternator has a~winning strategy in the game $\gameV(u_1,u_2)$ (the case of $\gameV(w_1,w_2)$ is analogous). Fix $x\prec y \in P$ and let $C_1$ be the context zone rooted in $\epsilon$ with the port in $x$ of $t$. Assume that Alternator plays $C_1$ as the first context in the game $\gameV(u_1,u_2)$. Consider a~finite prefix $D$ that is played by Constrainer. Since $D$ is finite and $P$ is infinite, there exists $\delta\in P$ such that $d\eqdef |\delta|$ is greater than the depth of all the nodes in $D_1$. Let Alternator reply with $C'_d$. By the assumption on $d$, we know that $C'_d$ indeed extends $D$. By the choice of $P$ we know that $\alpha_L(C'_d)=u_2$. Thus, we have proved that Alternator wins $\gameV(u_1,u_2)$.
}
\end{proof}

\begin{lem}%
\label{lem:strat-graph-transitive}
The strategy graph $G_L$ is transitive.
\end{lem}

\begin{proof}
Consider an edge between $(v,h)$ and $(v',h')$ witnessed by $(u_1,u_2)$, $(w_1,w_2)$, and $z$; and an edge between $(v',h')$ and $(v'',h'')$ witnessed by $(u_1',u_2')$, $(w_1',w_2')$, $z'$. Then $h=vu_1w_1^\infty$ and $v''=vu_2w_2^\monoid\big(zu_2'{(w_2')}^\monoid z'\big)$. It means that there is an edge between $(v,h)$ and $(v'',h'')$ with $(u_1,u_2),(w_1,w_2)\in\gameV_L$, and $z''\eqdef zu_2'{(w_2')}^\monoid z'$.
\end{proof}

\new{Recall that a~strongly connected component of a~directed graph $G$ is a~set of nodes~$X$ such that for each $x\neq x'\in X$ there exists a~path in $G$ leading from $x$ to $x'$. If $G$ is transitive then this condition boils down to saying that there is a~single edge from $x$ to $x'$.}

\begin{defi}[\emph{Recursive strategy graphs}]
We say that a~strategy graph $G_L$ is \emph{recursive} if there exists a~strongly connected component of $G_L$ that contains two nodes $(v,h)$, $(v',h')$ with $h\neq h'$.
\end{defi}

\begin{lem}%
\label{lem:rec-to-violation}
If the strategy graph $G_L$ is recursive then Equation~\eqref{eq:cond-limit} is violated.
\end{lem}

\begin{proof}
Assume contrarily that $G_L$ is recursive but Equation~\eqref{eq:cond-limit} holds. Consider a~pair of nodes $(v,h)$ and $(v',h')$ with $h\neq h'$ that witness recursivity of $G_L$. By Lemma~\ref{lem:strat-graph-transitive} there must exist edges in $G_L$ from $(v,h)$ to $(v',h')$ and back. Let $(u_1,u_2)$, $(w_1,w_2)$, $z$; and $(u_1',u_2')$, $(w_1',w_2')$, $z'$ witness the existence of these edges respectively.

Apply Equation~\eqref{eq:cond-limit} to obtain that:
\begin{align*}
{\big(u_2{(w_2)}^\monoid\ z u_2' {(w_2')}^\monoid z'\big)}^\monoid\ u_1{(w_1)}^\infty&={\big(u_2{(w_2)}^\monoid\ z u_2' {(w_2')}^\monoid z'\big)}^\infty\\
{\big(u_2'{(w_2')}^\monoid\ z' u_2 {(w_2)}^\monoid z\big)}^\monoid\ u_1'{(w_1')}^\infty&={\big(u_2'{(w_2')}^\monoid\ z' u_2 {(w_2)}^\monoid z\big)}^\infty
\end{align*}
Let $W = u_2{(w_2)}^\monoid z$ and $W'=u_2' {(w_2')}^\monoid z'$. Then by the assumptions on the edges between $(v,h)$ and $(v',h')$ we get that $v W= v'$ and $v' W' = v$. Moreover, the above equations get the form
\begin{align*}
{\big(WW'\big)}^\monoid u_1{(w_1)}^\infty&={\big(WW'\big)}^\infty\\
{\big(W'W\big)}^\monoid u_2{(w_2)}^\infty&={\big(W'W\big)}^\infty
\end{align*}
And therefore, by using the values of $h$, $h'$ and multiplying these equations by $v$ and $v'$ respectively we get:
\begin{align*}
h=v\cdot u_1{(w_1)}^\infty=v\cdot{\big(W W'\big)}^\monoid\ u_1{(w_1)}^\infty&=v\cdot{\big(W W'\big)}^\infty\\
h'=v'\cdot u_1'{(w_1')}^\infty=v'\cdot{\big(W'W\big)}^\monoid\ u_1'{(w_1')}^\infty&=v'\cdot{\big(W'W\big)}^\infty
\end{align*}
Now since $h=v\cdot {\big(WW'\big)}^\infty=v W\cdot {\big(W'W\big)}^\infty=v'\cdot {\big(W'W\big)}^\infty=h'$ we obtain the contradiction as we assumed that $h\neq h'$.
\end{proof}

\subsubsection{Constructing a~path in \texorpdfstring{$G_L$}{GL}}%
\label{sssec:cons-path}

We will now use the assumption that Case~(C2) holds to construct an infinite path in $G_L$ such that every two consecutive nodes on that path $(v,h)$, $(v',h')$ satisfy $h\neq h'$. Since $G_L$ is finite, such an infinite path witnesses that some strongly connected component of $G_L$ contains such two nodes and therefore $G_L$ is recursive. By Lemma~\ref{lem:rec-to-violation} it means a~violation of Equation~\eqref{eq:cond-limit} and the proof of Proposition~\ref{pro:if-unbounded-limit} is finished.

The construction of the path will be inductive, preserving an invariant that the last node constructed on the path is \emph{alternating}: a~node $(v,h)$ in $G_L$ is \emph{alternating} if $v\cdot \gameH_L$ contains words that begin with $h$ and have arbitrarily high alternation.

\begin{lem}%
\label{lem:one-alternating}
$G_L$ contains at least one alternating node.
\end{lem}

\begin{proof}
This is because $\gameH_L$ has unbounded alternation. Therefore,
by Pigeon-hole Principle there exists some $h \in H_L$ such that
$\gameH_L$ contains words that begin with $h$ that have
arbitrarily high alternation. By the definition, this means that the node
$(\one_L,h)$ is alternating in $G_L$.
\end{proof}

The inductive step of the construction will be based on the following lemma.

\begin{lem}%
\label{lem:step-alternating}
If $(v,h)$ is an alternating node of $G_L$ and Case~(C2) holds then there exists an edge in $G_L$ from $(v,h)$ to $(v',h')$ such that $h\neq h'$ and $(v',h')$ is also alternating.
\end{lem}

The rest of the section is devoted to a~proof of Lemma~\ref{lem:step-alternating}. Let $\Sigma_{(v,h)}$ be the set of all strategy trees $\sigma$ (with root sequences of the form $(h_1,\ldots,h_n)$) such that:
\begin{itemize}
\item $\sigma$ is locally optimal for $v$ (see Definition~\ref{def_optimalstrategies}),
\item $v\cdot h_1=h$,
\item the sequence $(vh_1,\ldots, vh_n)$ is alternating (i.e.~every two consecutive values in the sequence are distinct, see Definition~\ref{def:alternation}).
\end{itemize}

\begin{fact}%
\label{ft:unbounded-root-for-v-h}
The set $\Sigma_{(v,h)}$ is a~subset of $\Sigma_L$ that has unbounded both root \new{and limit alternation}.
\end{fact}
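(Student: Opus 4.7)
The statement decomposes into three assertions: the inclusion $\Sigma_{(v,h)} \subseteq \Sigma_L$, unbounded root alternation, and unbounded limit alternation. The first and third follow directly from results already in hand, so the substantive work lies in the middle assertion.

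For the inclusion, any $\sigma \in \Sigma_{(v,h)}$ is by definition locally optimal for $v$. Writing $v = v \cdot \one_L$ and applying Fact~\ref{ft:locally-premultiply} yields that $\sigma$ is also locally optimal (for $\one_L$), hence $\sigma \in \Sigma_L$.

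For unbounded root alternation, fix an arbitrary $k$; the plan is to exhibit some $\sigma \in \Sigma_{(v,h)}$ of duration at least $k$. The hypothesis that $(v,h)$ is alternating supplies $(h_1, \ldots, h_n) \in \gameH_L$ with $v h_1 = h$ and the alternation of $(v h_1, \ldots, v h_n)$ at least $k$. The collapsing process underlying the definition of alternation preserves the first letter and extracts indices $1 = i_1 < \cdots < i_m$ with $m \geq k$ such that $(v h_{i_1}, \ldots, v h_{i_m})$ is strictly alternating. Fact~\ref{ft:removing-letters} ensures that the subsequence $(h_{i_1}, \ldots, h_{i_m})$ still lies in $\gameH_L$, so Corollary~\ref{cor:H-to-stra-trees} produces a~locally optimal strategy tree with this root sequence. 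Lemma~\ref{lem:locally-optimal} then converts it (on the same support) into a~strategy tree $\sigma$ that is locally optimal for $v$, with root sequence $(h'_1, \ldots, h'_m)$ satisfying $v h'_j = v h_{i_j}$ for each $j$. Consequently $v h'_1 = h$ and $(v h'_1, \ldots, v h'_m)$ is strictly alternating, placing $\sigma$ in $\Sigma_{(v,h)}$; furthermore strict alternation of the $v h'_j$'s forces strict alternation of the $h'_j$'s, so the root alternation of $\sigma$ equals $m \geq k$.

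Finally, unbounded limit alternation is immediate from the standing assumption of Case~(C2): every subset of $\Sigma_L$ with unbounded root alternation has unbounded limit alternation, and $\Sigma_{(v,h)}$ has just been shown to qualify. The only mildly delicate point in the argument is making sure the collapsing step preserves the first letter, which is what guarantees $v h'_1 = h$ for the extracted strategy tree.
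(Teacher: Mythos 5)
Your proof is correct and follows essentially the same route as the paper, which likewise derives the inclusion from Fact~\ref{ft:locally-premultiply}, the unbounded root alternation from the alternating hypothesis together with Lemma~\ref{lem:locally-optimal}, and the unbounded limit alternation from Case~(C2). You simply spell out in full the extraction of an alternating subsequence and the passage through Corollary~\ref{cor:H-to-stra-trees}, which the paper leaves implicit.
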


\begin{proof}
Fact~\ref{ft:locally-premultiply} implies that all strategy trees in $\Sigma_{(v,h)}$ belong also to $\Sigma_L$. By the fact that $(v,h)$ is alternating and by Lemma~\ref{lem:locally-optimal} we know that $\Sigma_{(v,h)}$ has unbounded root alternation. Therefore, Case~(C2) guarantees that $\Sigma_{(v,h)}$ has also unbounded limit alternation.
\end{proof}

Now observe that for each $g,g'\in H_L$ either ${(g,g')}^k\in\gameH_L$ for all $k\in\w$, or there exists a~unique $k_{g,g'}\in\w$ such that ${(g,g')}^{k_{g,g'}}\in\gameH_L$ but ${(g,g')}^{1{+}k_{g,g'}}\notin\gameH_L$. Since the set $H_L$ is finite, there exists a~number $K$ that is greater than all the defined numbers $k_{g,g'}$. Thus, the following fact holds.

\begin{fact}%
\label{ft:great-K}
If for some pair $g,g'\in H_L$ we have ${(g,g')}^K\in\gameH_L$ then ${(g,g')}^k\in\gameH_L$ for all $k\in\w$.
\end{fact}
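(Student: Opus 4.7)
The plan is to first verify the dichotomy that the paragraph preceding the fact takes for granted, namely that for each pair $g, g' \in H_L$ the set $S_{g,g'} \eqdef \{k \in \w : (g,g')^k \in \gameH_L\}$ is a downward-closed subset of $\w$. This is an immediate consequence of Fact~\ref{ft:removing-letters}: the word $(g,g')^{k}$ is a subword of $(g,g')^{k+1}$ (delete, say, the last two letters), so membership of the longer word in $\gameH_L$ forces membership of the shorter one. Thus $S_{g,g'}$ is either all of $\w$, or an initial segment $\{0,1,\ldots,k_{g,g'}\}$ with a well-defined maximum $k_{g,g'}$; in the latter case $(g,g')^{k_{g,g'}+1}\notin\gameH_L$.

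Next I would invoke finiteness: by Proposition~\ref{pro:alg-finite-for-reg} the set $H_L$ is finite, so only finitely many pairs $(g,g')\in H_L\times H_L$ occur, and hence only finitely many values $k_{g,g'}$ are defined. Set
\[
K \eqdef 1 + \max\{k_{g,g'} : g,g'\in H_L,\ k_{g,g'}\text{ defined}\},
\]
with the convention that $K = 1$ if no such $k_{g,g'}$ is defined.

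Finally I would read off the statement of the fact: if $(g,g')^K \in \gameH_L$, then the second case of the dichotomy for this particular pair is ruled out, because $K > k_{g,g'}$ whenever $k_{g,g'}$ is defined, and any such $k_{g,g'}$ would be the maximum of $S_{g,g'}$. Hence we must be in the first case, i.e.\ $(g,g')^k \in \gameH_L$ for every $k\in\w$, as required.

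There is no substantive obstacle here; the only subtlety is confirming the downward closure of $S_{g,g'}$, which is a direct application of Fact~\ref{ft:removing-letters} together with the observation that removing one $g$ and one $g'$ from $(g,g')^{k+1}$ returns $(g,g')^k$.
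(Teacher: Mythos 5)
Your proposal is correct and matches the paper's own argument: the paper likewise derives the dichotomy for each pair $(g,g')$ from the closure of $\gameH_L$ under removing letters (Fact~\ref{ft:removing-letters}), chooses $K$ larger than every defined $k_{g,g'}$ using finiteness of $H_L$, and reads off the fact. No differences worth noting.
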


Let $\ell\eqdef |H_L|^2\cdot K$ and let $\sigma=(t,\sigma_1,\ldots,\sigma_n)$ be a~strategy tree in $\Sigma_{(v,h)}$ that has limit alternation greater than $\ell$.

We will now construct a~path $\pi$ in $t$ on which the high limit alternation of $\sigma$ is located. Let $Z$ be the set of nodes $z$ of $t$ such that the root alternation of $\sigma.z$ is greater than $\ell$. By Corollary~\ref{cor:loc-opt-to-alt-prefix} we know that $Z$ is prefix closed and by the fact that the limit alternation of $\sigma$ is greater than $\ell$ we know that $Z$ is infinite. Therefore, by K\"onig lemma we know that $Z$ contains an infinite path~$\pi$.

\begin{lem}%
\label{lem:pigeonhole-on-pi}
There exists an infinite set \new{$P\subseteq \pi$} and two sequences $(h_1,\ldots,h_n)\in H_L^\ast$ and $(v_1,\ldots,v_n)\in V_L^\ast$ such that for all nodes \new{$x\in P$}:
\begin{itemize}
\item $\big(\sigma_1(x),\ldots,\sigma_n(x)\big)=(h_1,\ldots,h_n)$,
\item $\big(\val(\sigma,X,1),\ldots,\val(\sigma,X,n)\big)=(v_1,\ldots,v_n)$, where $X$ is the context zone with the root in $\epsilon$ and port in $x$.
\end{itemize}
Moreover, $(h_1,\ldots,h_n)\in\gameH_L$ and $(v_1,\ldots,v_n)\in\gameV_L$.
\end{lem}

\begin{proof}
The choice of \new{$P$} and the sequences $(h_1,\ldots,h_n)\in H_L^\ast$, $(v_1,\ldots,v_n)\in V_L^\ast$ is just by Pigeon\=/hole Principle. By the definition of a~strategy tree we know that $(h_1,\ldots,h_n)\in\gameH_L$. By Fact~\ref{fact:zone-strat} we know that also $(v_1,\ldots,v_n)\in\gameV_L$.
\end{proof}

Since the alternation of $(h_1,\ldots,h_n)$ is greater than $\ell=|H_L|^2\cdot K$, we know that there exists a~pair $g\neq g'\in H_L$ and a~set of indices $I\subseteq\{1,\ldots,n-1\}$ such that $|I|\geq K$ and for all $i\in I$ we have $h_i=g$ and $h_{i+1}=g'$. Fix as $i_0$ the minimal element of $I$. By closure of $\gameH_L$ under subwords, we know that ${(g,g')}^K\in\gameH_L$. By Fact~\ref{ft:great-K} it implies that
\begin{align}
&{(g,g')}^k\in\gameH_L&\text{for all $k\in\w$.}
\label{eq:unbounded-g-alt}
\end{align}

Let $u=v_{i_0}$ and $u'=v_{{i_0}{+}1}$ where $i_0$ is the minimal element of $I$. We will finish the proof by showing the following three lemmas.

\begin{lem}%
\label{lem:both-distinct}
The values $vu'g$ and $vu'g'$ are distinct.
\end{lem}

\begin{proof}
Recall that $g=h_{i_0}$ and $g'=h_{i_0+1}$. Assume that the value $vu' g$ is equal to $vu'g'$. Consider a~strategy tree $\sigma'$ that equals $\sigma$ except for the subtree $\sigma'_{i_0+1}.x$ where $x$ is the ${\preceq}$\=/minimal element of $X$.  Over that subtree, let $\sigma'_{i_0+1}$ be equal to $\sigma_{i_0}$. Let $(h_1',\ldots,h_n')$ be the root sequence of $\sigma'$. We know that $v\cdot(h_1',\ldots,h_n')=(h_1,\ldots,h_n)$ and $\dist(\sigma_{i_0}, \sigma'_{i_0+1})$ is strictly smaller than $\dist(\sigma_{i_0}, \sigma_{i_0+1})$, what contradicts local minimality of $\sigma$ for $v$.
\end{proof}

\begin{lem}%
\label{lem:both-alternating}
The nodes $(vu',vu'g)$ and $(vu',vu'g')$ are both alternating in~$G_L$.
\end{lem}

\begin{proof}
By~\eqref{eq:unbounded-g-alt} and Fact~\ref{lem:compose-winh-with-multicontext} we know that ${(vu'g,vu'g')}^k\in\gameH_L$ for all $k\in\w$. Moreover, Lemma~\ref{lem:both-distinct} says that $vu'g\neq vu'g'$.
\end{proof}

Now it remains to prove the following lemma.

\begin{lem}%
\label{lem:construct-an-edge}
There exist edges in $G_L$ from $(v,h)$ to both $(vu',vu'g)$ and $(vu',vu'g')$.
\end{lem}

\begin{proof}
\new{Using Lemma~\ref{lem:def-of-edges} it is enough to show that $(v,h)$ is path\=/switching into $vu'$.} Let $\sigma'$ be the strategy tree obtained from $\sigma$ by removing the first $i_0$ rounds: $\sigma'=(t,\sigma_{i_0+1},\sigma_{i_0+2},\ldots,\sigma_n)$ ($i_0$ is the minimal element of~$I$).

Consider the tree $t$ from the strategy tree $\sigma$ and the path~$\pi$. By the definition of $\sigma$ we know that $v\cdot\alpha_L(t)=h$. Let $D$ be a~finite prefix of $t$. The strategy tree $\sigma'$ witnesses that $D$ can be extended into a~context $C'$ with a~port located on $\pi$ such that that $\alpha_L(C')=v_{i_0+1}=u'$. Therefore $v\cdot\alpha_L(C')\cdot 1_{V_L}=vu'$. This concludes the proof that $(v,h)$ \new{is path\=/switching into~$vu'$}.
\end{proof}

Thus, Lemma~\ref{lem:both-distinct} implies that at least one of the nodes $(vu',vu'g)$ and $(vu',vu'g')$ has the second coordinate distinct than $h$, Lemma~\ref{lem:both-alternating} implies that this node is alternating, and Lemma~\ref{lem:construct-an-edge} implies that $G_L$ contains an~edge from $(v,h)$ to that node. Thus, the proof of Lemma~\ref{lem:step-alternating} is concluded.

\section{Effective characterisation of \texorpdfstring{$\bdelta{2}$}{Delta2}}%
\label{sec:delta-case}

This final section is devoted to the effective characterisation of the Borel class $\bdelta{2}$ (that corresponds to the union of the first $\omega_1$ Wadge degrees). Decidability of the class $\bdelta{2}$ can be actually obtained as a~direct corollary of~\cite{cavallari_gdelta}: in this work the authors proved that it is decidable if a~regular tree language is in the Borel class $\bpi{2}$; since regular languages are closed under complement and $\bdelta{2} =\bpi{2} \cap \bsigma{2}$, we automatically obtain that it is decidable if a~regular tree language is in $\bdelta{2}$. However, here we show that the algebraic approach developed in the previous sections covers the case of the Borel class $\bdelta{2}$ as well. What we will prove is the following:

\begin{thm}%
\label{thm:delta-char}
A~regular tree language $L$ belongs to the class $\bdelta{2}$ if and only if its syntactic algebra satisfies Equation~\eqref{eq:cond-limit} from Theorem~\ref{thm:mainforBC1}.
\end{thm}

This theorem is the main result of~\cite{michalewski_delta}; unfortunately the proof provided there is wrong:
\begin{itemize}
\item Proposition~4 in~\cite{michalewski_delta} cites Lemma~G.2 from~\cite{bojanczyk_boolean} in a~wrong way. The logical claim in Proposition~F.2 is of the form: if there is a~set $\Sigma$ of unbounded root alternation then there is a~set $\Sigma'$ of unbounded limit alternation\footnote{Unbounded limit alternation implies unbounded root alternation.}. Lemma~G.2 says that if Proposition~F.2 is violated then the strategy graph is recursive. Logically, it takes the form: if there is a~set $\Sigma$ of unbounded root alternation but no set $\Sigma'$ has unbounded limit alternation then the graph is recursive. The way Proposition~4 summarises that statement is: if there exists a~set $\Sigma$ of unbounded limit alternation then the strategy graph is recursive. This statement does not follow from~\cite{bojanczyk_boolean}.
\item The proof of Theorem~1 in~\cite{michalewski_delta} in the implication $(2)\Rightarrow (1)$, shows how to construct, given an infinite strategy tree $s^\infty$, a~family of strategy trees of unbounded limit alternation. The first step of the proof is to construct a~family $\Gg$ of strategy trees of finite duration but unbounded root alternation. Then, an invalid application of a~compactness argument (to find the set $X$) shows that $\Gg$ has in fact unbounded limit alternation. Such a~set $X$ doesn't need to exist, it can be the case that the limit alternation of $s$ is $2$ but for each pair $k<k'\leq j$ there are infinitely many nodes $n$ in $s$ such that $\sigma_k(n)\neq\sigma_{k'}(n)$ --- it is enough that the nodes for distinct values $k,k'$ lie on distinct infinite branches.

Additionally, if it was the case that every set of strategy trees of unbounded root alternation has unbounded limit alternation; Proposition~F.1 would hold always, nevertheless of the assumption of Identity~(2).
\end{itemize}

\noindent
The proof we provide here follows the logical structure of~\cite{michalewski_delta}, with the following differences:
\begin{itemize}
\item Instead of the wrong reference used in Proposition~4 of~\cite{michalewski_delta} we show recursivity of $G_L$ using a~new Lemma~\ref{lem:def-of-edges} that characterises existence of edges in $G_L$.
\item Instead of the statement about existence of the set $X$ from the proof of Theorem~1 in~\cite{michalewski_delta} we provide here a~direct construction (Lemma~\ref{lem:infinitely-essential}) showing that a~winning strategy of Alternator in $\gameHinfty(L)$ must take a~specific structure that fits the characterisation from Lemma~\ref{lem:def-of-edges}
\end{itemize}

As observed in~\cite{michalewski_delta}, the techniques used to characterise $\boolcomb{1}$ provide a~big part of a~proof of Theorem~\ref{thm:delta-char}. First, Proposition~\ref{pro:inf-game-char} says that $L$ is $\bdelta{2}$ if and only if Constrainer wins $\gameHinfty(L)$. Corollary~\ref{cor:eq-limit-to-game} says that if Equation~\eqref{eq:cond-limit} is violated then Alternator wins $\gameHinfty(L)$. Thus, the ``only if'' part of Theorem~\ref{thm:delta-char} follows.

On the other hand, Lemma~\ref{lem:rec-to-violation} says that if the strategy graph $G_L$ is recursive then Equation~\eqref{eq:cond-limit} is violated. It means that the only remaining statement to prove Theorem~\ref{thm:delta-char} is expressed by the following proposition.

\begin{prop}%
\label{pro:not-delta-to-rec}
If Alternator wins $\gameHinfty(L)$ then the strategy graph $G_L$ is recursive.
\end{prop}

The rest of this section is devoted to a~proof of the above proposition. During the proof we will inductively construct an infinite path ${\big(v_n,h_n\big)}_{n\in\w}$ in the strategy graph such that the sequence $h_n$ alternates between $L$ and $L\complemento$. The invariant of our construction is expressed by the following definition, using the notions of quotients from Subsection~\ref{ssec:quotients}.

\begin{defi}
Consider a~node $(v,h)$ of the strategy graph $G_L$. We say that $(v,h)$ is \emph{prolongable} if there exists a~winning strategy $\sigma$ of Alternator in $\gameHinfty\big(v^{-1}(L)\big)$ or in $\gameHinfty\big(v^{-1}(L)\complemento\big)$ such that if $t$ is the tree played by $\sigma$ in the first round then $v\cdot \alpha_L(t)=h$.
\end{defi}

\new{
We begin with an introduction of an~object called \emph{essential node} and a~study of its properties.
}

\subsection{Essential nodes}%
\label{ssec:essential}

\new{
Let $(v,h)$ be a~prolongable node, as witnessed by a~strategy $\sigma$ and a~tree $t$. By the symmetry we can assume that $\sigma$ is winning in $\gameHinfty\big(v^{-1}(L)\complemento\big)$.
}

\begin{defi}%
\label{def:essential}
\new{Consider a~node $u\in{\{\dL,\dR\}}^d$ for $d>0$. Let $p$ be the $d$\=/prefix of $t$ (i.e.~$p\eqdef t\restr{\{\dL,\dR\}}^{< d}$). We say that $u$ is \emph{essential} if there exists a~context $C$ of a~type $w\in V_L$, such that the port of $C$ is in $u$, $C$ extends $p$, and we have ${\big(vw\big)}^{-1}(L)\notin\bdelta{2}$.
}

The last condition implies that Alternator has a~winning strategy in\footnote{We have assumed that $\sigma$ is winning in $\gameHinfty\big(v^{-1}(L)\complemento\big)$ and therefore we have no complement here; in the dual case $\sigma$ is winning in $\gameHinfty\big(v^{-1}(L)\big)$ and here we consider a~winning strategy in $\gameHinfty\big({\big(vw\big)}^{-1}(L)\complemento\big)$.} $\gameHinfty\big({\big(vw\big)}^{-1}(L)\big)$. Let $t'$ be a~tree played by one of such winning strategies of Alternator and let $g=\alpha_L(t')$. Using the above notions we say that $u$ is \emph{$(w,g)$\=/essential}.
\end{defi}

\begin{fact}%
\label{ft:essential-prefix}
If $C$ is a~context then the function $t\mapsto C[t]$ is continuous. Therefore, if $w^{-1}(K)\notin\bdelta{2}$ then also $K\notin\bdelta{2}$. In particular, using~\eqref{eq:quot-comp} we obtain that for $v,w\in V_L\sqcup\{\one_L\}$ if ${(vw)}^{-1}(L)\notin\bdelta{2}$ then also $v^{-1}(L)\notin\bdelta{2}$. It means that if $\epsilon\prec u\preceq u'$ and $u'$ is essential then also $u$ is essential.
\end{fact}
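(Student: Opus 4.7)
I would establish the four assertions of the fact in sequence, each using its predecessor.

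For the first claim, continuity of $t \mapsto C[t]$ is immediate from the definition of the prefix topology: the preimage of a basic open set $N_p$ is either empty (when $p$ disagrees with $C$ somewhere in $\dom(C)$), the whole space (when $p$ lies entirely inside $C$ and places no restriction below the port), or itself a basic open set $N_{p'}$ for the portion $p'$ of $p$ lying below the port of $C$, shifted to the root. For the second claim, I would argue by contraposition: if $K \in \bdelta{2}$ and $D$ is any context representing $w$ in the syntactic algebra of $K$, then $w^{-1}(K) = D^{-1}(K)$ is the preimage of $K$ under the continuous map $t \mapsto D[t]$, hence $\bdelta{2}$ by closure of Borel classes under continuous preimages (the proposition on preimages proved in the preliminaries). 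The third claim is then an instantiation of the second with $K = v^{-1}(L)$, using~\eqref{eq:quot-comp} to rewrite $(vw)^{-1}(L) = w^{-1}(v^{-1}(L))$.

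For the last sentence, suppose $u'$ is essential, witnessed by a context $C'$ of type $w' \in V_L$ with port at $u'$, which extends the $|u'|$-prefix $p'$ of $t$ and satisfies $(vw')^{-1}(L) \notin \bdelta{2}$. I would decompose $C' = C \cdot C''$, where $C$ is the context with port at $u$ obtained by truncating $C'$ at $u$ (keeping all nodes of $C'$ that are not strictly below $u$ and relabelling $u$ by $\hole$), and $C''$ is the context from $u$ to $u'$ filling in the truncated piece. Because $\epsilon \prec u$, the context $C$ is non-trivial, so $w = \alpha_L(C) \in V_L$; setting $w'' = \alpha_L(C'')$, we have $w' = w \cdot w''$ and therefore $vw' = (vw) \cdot w''$. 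Applying the third claim to the pair $(vw, w'')$ converts the hypothesis $(vw')^{-1}(L) \notin \bdelta{2}$ into $(vw)^{-1}(L) \notin \bdelta{2}$. Since $|u| \leq |u'|$ gives $p \preceq p'$, the context $C$ still extends $p$, so the triple $(C, w, u)$ witnesses that $u$ is essential.

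The one verification requiring attention is that the truncated context $C$ genuinely extends the shallower prefix $p$; this reduces to the observation that truncation at $u$ modifies $C'$ only at nodes of depth $\geq |u|$, whereas $p$ is contained entirely at depths $< |u|$ and is therefore already inside $p'$. All other steps are direct combinations of the definitions with the continuous-preimage property and the quotient composition rule~\eqref{eq:quot-comp}.
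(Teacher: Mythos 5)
Your proof is correct and follows exactly the chain of reasoning that the paper embeds in the statement itself (the Fact is left without a separate proof): continuity of $t\mapsto C[t]$, closure of $\bdelta{2}$ under continuous preimages, the quotient composition rule~\eqref{eq:quot-comp}, and the decomposition $C'=C\cdot C''$ at the prefix node $u$. The details you supply — including the check that the truncated context still extends the shallower $d$\=/prefix and that $C$ is non\=/trivial because $\epsilon\prec u$ — are all accurate.
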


\new{
The crucial part in the proof of Proposition~\ref{pro:not-delta-to-rec} is expressed by the following lemma.
}

\begin{lem}%
\label{lem:infinitely-essential}
If $(v,h)$ be a~prolongable node, witnessed by a~strategy $\sigma$ and a~tree $t$, then for every $d> 0$ there exists an~essential node $u$ in $t$ such that $|u|=d$.
\end{lem}

The rest of this subsection is devoted to a~proof of this lemma. Consider a~number $d>0$ and let $p$ be the $d$\=/prefix of $t$. We will find an essential node $u\in{\{\dL,\dR\}}^d$. Let $D$ be the multicontext obtained from $p$ by making all the nodes in ${\{\dL,\dR\}}^d$ ports; let $u_1,\ldots,u_N$ be the list of these ports ($N=2^d$); and $U\eqdef D[\ast]$ be the basic open set defined by $p$. Finally, put
\[M\eqdef D^{-1}\big(v^{-1}(L)\big)\subseteq \trees_A^N.\]

\begin{rem}%
\label{rem:M-finitely-gen}
The fact whether $(t_1,\ldots,t_N)\in M$ depends only on the tuple of types $\big(\alpha_L(t_1),\ldots,\alpha_L(t_N)\big)$.
\end{rem}

Since $(v,h)$ is prolongable and $U$ is a~valid response of Constrainer to Alternator playing $t$ in $\gameHinfty\big(v^{-1}(L)\complemento\big)$, we know that Alternator has a~winning strategy in $\gameHinfty_U\big(v^{-1}(L)\big)$. This game is equivalent to the game $\gameHinfty\big(M\big)$ played in the topological space $\trees_A^N$ --- the $N$\=/fold product of the space $\trees_A$, with the induced product topology. Therefore, by Proposition~\ref{pro:inf-game-char} we know that $M\notin\bdelta{2}$.

\begin{defi}
A~\emph{section} of $M\subseteq\trees_A^N$ is any set of the form
\[\big\{t\in\trees_A\mid (t_1,\ldots,t_{i-1},t,t_{i+1},\ldots,t_N)\in M\big\}.\]
for $i\in\{1,\ldots,N\}$.
\end{defi}

Due to Remark~\ref{rem:M-finitely-gen}, taking $h_j=\alpha_L(t_j)$ for $j=1,\ldots,N$, $j\neq i$, we can equivalently define the above section as
\begin{align}
&\big(h_1,\ldots,h_{i-1},\hole,h_{i+1},\ldots,h_N\big)\eqdef \nonumber\\
&\qquad\big\{t\in\trees_A\mid h_1\times \cdots \times h _{i-1}\times \{t\}\times h_{i+1}\times \cdots\times h_N\subseteq M\big\}.
\label{eq:M-section}
\end{align}

Notice that the notation from~\eqref{eq:M-section} naturally extends to formulae with more than one~$\hole$. Such sets are called \emph{multisections}. The number of holes in a~multisection is called its \emph{dimension}, i.e.~a~multisection of dimension $n$ is a~subset $\trees_A^n$. There is one multisection of the maximal dimension $N$: $\big(\hole,\ldots,\hole\big)=M$.

\begin{lem}%
\label{lem:multisections}
Each multisection can be obtained from sections by taking finite unions, finite intersections, and products.
\end{lem}

\begin{proof}
The proof will be inductive on the dimension of the given multisection. For the sake of simplicity of notation we even consider multisections of dimension zero, i.e.~expressions of the form $(h_1,\ldots,h_N)$ with no holes. Formally such an expression is either the empty set, or a~set containing the empty tuple $()$ (depending on whether $h_1\times\cdots\times h_N\subseteq M$ or not).

A~multisection of dimension $1$ is just a~section, so the claim follows. Consider a~multisection of the form (this form is generic up to rearranging the coordinates):
\[N=\big(\hole,\hole,\ldots,\hole, h_{i+1},h_{i+2},\ldots,h_N\big).\]
We will prove that $N$ can be represented as in Lemma~\ref{lem:multisections}.

Consider a~tuple $h_1,\ldots,h_{i-1}$. If $\big(h_1,\ldots,h_{i-1},\hole,h_{i+1},\ldots,h_N\big)$ is empty then let us put $S(h_1,\ldots,h_{i-1})=\emptyset$ and otherwise let $S(h_1,\ldots,h_{i-1})$ be
\[\bigcap_{h_i\subseteq\big(h_1,\ldots,h_{i-1},\hole,h_{i+1},\ldots,h_N\big)} \big(\hole,\ldots,\hole,h_i,\ldots,h_N\big)\times \big(h_1,\ldots,h_{i-1},\hole,h_{i+1},\ldots,h_N\big).\]

Now consider the set $K$ defined as
\[\bigcup_{h_1,\ldots,h_{i-1}} S(h_1,\ldots,h_{i-1}).\]
By the inductive assumption the set $K$ can be obtained from sections by finite unions, finite intersections, and products. It remains to prove that $K=N$.

First consider a~tuple of types $(g_1,g_2,\ldots,g_i)$ such that
\begin{equation}
\label{eq:contained-in-N}
g_1\times\cdots\times g_i\subseteq N.
\end{equation}
We will show that this product is contained in $S(g_1,\ldots,g_{i-1})$. First, by~\eqref{eq:contained-in-N} we know that $g_i\subseteq \big(g_1,\ldots,g_{i-1},\hole,h_{i+1},\ldots,h_N\big)$ and therefore the later set is not empty. Consider any value $h_i\subseteq \big(g_1,\ldots,g_{i-1},\hole,h_{i+1},\ldots,h_N\big)$. By the assumption about the considered values $h_i$ we know that $g_1\times\cdots\times g_{i-1}\subseteq \big(\hole,\ldots,\hole,h_i,\ldots,h_N\big)$. Therefore, $g_1\times\cdots\times g_i$ is contained in the product $\big(\hole,\ldots,\hole,h_i,\ldots,h_N\big)\times \big(h_1,\ldots,h_{i-1},\hole,h_{i+1},\ldots,h_N\big)$.

Now consider the other implication: take a~tuple of types $(g_1,g_2,\ldots,g_i)$ such that
\begin{equation}
\label{eq:contained-in-Nprime}
g_1\times\cdots\times g_i\subseteq S(h_1,\ldots,h_{i-1}),
\end{equation}
for some choice of types $h_1,\ldots,h_{i-1}$. It means that there must exist a~type $h_i$ contained in $\big(h_1,\ldots,h_{i-1},\hole,h_{i+1},\ldots,h_N\big)$, because the later set cannot be empty. Therefore, $g_i\subseteq \big(h_1,\ldots,h_{i-1},\hole,h_{i+1},\ldots,h_N\big)$ and it means that $g_i$ is among the values $h_i$ over which we take the intersection. Thus
\[g_1\times\cdots \times g_i\subseteq \big(\hole,\ldots,\hole,g_i,\ldots,h_N\big)\times \big(h_1,\ldots,h_{i-1},\hole,h_{i+1},\ldots,h_N\big).\]
In particular $g_1\times\cdots\times g_{i-1}\subseteq \big(\hole,\ldots,\hole,g_i,\ldots,h_N\big)$ what implies that
\[g_1\times\cdots\times g_i\subseteq N.\]

This concludes the proof that $N=K$ and thus the claim is proved.
\end{proof}

\begin{cor}
There exists a~section of $M$ that is not $\bdelta{2}$.
\end{cor}

\begin{proof}
\new{Assume that all the sections of $M$ are $\bdelta{2}$. Since $\bdelta{2}$ sets are closed under finite unions, finite intersections, and products, Lemma~\ref{lem:multisections} implies that the multisection $(\hole,\ldots,\hole)=M$ is also $\bdelta{2}$. A~contradiction.}
\end{proof}

Let $\big\{t\in\trees_A\mid (t_1,\ldots,t_{i-1},t,t_{i+1},\ldots,t_N)\in M\big\}$ be a~section of $M$ that is not $\bdelta{2}$. Put
\[C\eqdef D[t_1,\ldots,t_{i-1},\hole,t_{i+1},\ldots,t_N].\]
By the assumption about the chosen section, we have $C^{-1}\big(v^{-1}(L)\big)\notin\bdelta{2}$. Thus, by putting $w\eqdef\alpha_L(C)$ we know that ${\big(vw\big)}^{-1}(L)\notin\bdelta{2}$ and therefore the context $C$ witnesses that $u_i$ is essential. This concludes the proof of Lemma~\ref{lem:infinitely-essential}.

\subsection{Constructing a~path in \texorpdfstring{$G_L$}{GL}}%
\label{ssec:cons-path-in-gl}

Lets fix a~strategy $\sigma$ of Alternator in $\gameHinfty(L)$ and let $t_0$ be the tree played by $\sigma$ in the first round. Let $v_0=\one_L$ and $h_0=\alpha_L(t_0)$. Directly from the definition we know that $(v_0,h_0)$ is prolongable. Therefore, to prove Proposition~\ref{pro:not-delta-to-rec} it is enough to prove the following inductive lemma.

\begin{lem}%
\label{lem:prolongable-induction}
If $(v,h)$ is prolongable then there exists a~node $(v',h')$ in the strategy graph $G_L$ such that $h'\neq h$, $(v',h')$ is prolongable, and there is an edge from $(v,h)$ to $(v',h')$.
\end{lem}

Lets fix a~node $(v,h)$ that is prolongable, as witnessed by a~strategy $\sigma$ and a~tree $t$. By the symmetry we can assume that $\sigma$ is winning in $\gameHinfty\big(v^{-1}(L)\complemento\big)$.

Since there are infinitely many essential nodes in $t$, and a~prefix of an essential node is also essential (see Fact~\ref{ft:essential-prefix}), there exists a~path $\pi$ such that for all $\epsilon\prec u\prec \pi$ the node $u$ is essential. Notice that each of these nodes $u$ comes with at least one pair $(w_u,g_u)\in V_L\times H_L$ such that $u$ is $(w_u,g_u)$\=/essential, see Definition~\ref{def:essential}. Let $(w,g)$ be a~pair that equals $(w_u,g_u)$ for infinitely many $u$. Let $v'=vw$ and $h'=vw g$. By the choice of $h$ and $h'$ we know that $h\subseteq L\complemento$ and $h'\subseteq L$ and therefore $h'\neq h$. Also, directly from the definition of a~$(w,g)$\=/essential node we know that the node $(v',h')$ is prolongable.

Therefore, to conclude Lemma~\ref{lem:prolongable-induction} it is enough check that $(v,h)$ is path\=/switching into $v'$ (see Lemma~\ref{lem:def-of-edges}). Consider a~finite prefix $D$ of $t$. There must exist an essential node $u\prec \pi$ such that $u\in{\{\dL,\dR\}}^d$ for some $d>0$; $(w_u,g_u)=(w,g)$; and $D$ is a~prefix of the $d$\=/prefix of $t$. Since $u$ is essential, $D$ can be completed into a~context $C$ with the port located in $u$, such that $\alpha_L(C)=w$. Thus, $v\cdot\alpha_L(C)\cdot \one_L=v\cdot w=v'$. This fulfils the requirements of Definition~\ref{def:path-switch} and therefore $G_L$ contains an edge from $(v,h)$ to $(v',h')$.

\section{Conclusions}%
\label{sec:conclusions}

This paper utilises the syntactic algebra of a~given regular language $L$ to understand the topological complexity of $L$. The main result (Theorem~\ref{thm:mainforBC1}) says that the language is $\boolcomb{1}$ if and only if its syntactic algebra satisfies Identities~\eqref{eq:cond-bool} and~\eqref{eq:cond-limit}.

The first equation speaks about the \emph{branching} structure of alternations between the language and the complement. The combinatorial background of this equation is represented by the three claims from Subsection~\ref{ssec:constructing-matrices}, which show that one can alternate using distinct infinite branches of a~given tree.

The second equation is focused on alternations that continue along a~single branch, see Lemma~\ref{lem:def-of-edges} for the explicit form of such an alternation.

As observed by Facchini and Michalewski~\cite{michalewski_delta}, languages outside the class $\bdelta{2}$ should admit the second kind of alternation. Although the original paper contained certain mistakes, the claim is correct (Theorem~\ref{thm:delta-char}): a~regular language is $\bdelta{2}$ if and only if its syntactic algebra satisfies Identity~\eqref{eq:cond-limit}.

\subsection{Limitations}%
\label{ssec:limitations}

In general there is no proper algebraic framework for analysing all regular languages of infinite trees. The available algebras are either too weak~\cite{bojanczyk_ef_trees,bilkowski_unambiguity}, too strong~\cite{bojanczyk_tree_algebra}, or not closed under homomorphic images (i.e.~contain a~hidden existential quantifier)~\cite{blumensath_recognisability}. Therefore, effective characterisations based on algebraic approach are limited either to certain subclasses of languages as the input; or to simple classes that are being characterised.

In this paper the input is not restricted (i.e.~the characterisation works for all regular languages as the input); but the characterised classes are low in the Borel hierarchy. It is not surprising, as the structure of the considered algebras is quite weak, as expressed by the following remark.

\begin{rem}
For every non\=/trivial language $L_0\subseteq\trees_A$ the syntactic algebra of the language
\[\{t\in\trees_A\mid \forall u\ \exists v\succeq u.\ t\restr v \in L_0\}\]
is the same, with both $H$ and $V$ being two\=/element sets.
\end{rem}

Notice that if $L_0=\{t\in\trees_{\{a,b\}}\mid t(\epsilon)=a\}$ then the above language belongs to $\bpi{2}$, while for $L_0=\big\{t\in\trees_{\{a,b,c\}}\mid \text{if $t(\epsilon)=c$ then $t$ contains a~branch with infinitely many $a$}\big\}$ the above language is non\=/Borel ($\asigma{1}$\=/complete). This shows that the structure of the considered algebras is not strong enough to distinguish the complexity of languages right above the class~$\bdelta{2}$. To climb higher in the Borel hierarchy one should consider some class of algebras with richer structure; unfortunately there is no natural candidate for such a~class at the moment. The known characterisations of classes higher in the hierarchy\footnote{There are other characterisations where the input is restricted to languages with limited use of non\=/determinism, see~\cite{murlak_wadge,murlak_final_game}.}~\cite{walukiewicz_buchi,cavallari_gdelta} are based directly on games instead of using algebras.

\subsection{Comparison with original papers}

Comparing to the original work~\cite{bojanczyk_boolean}, this work provides more detailed and polished proofs, more pictures, and certain minor glitches corrected. Moreover, the newly added Lemma~\ref{lem:def-of-edges} provides an explicit characterisation of the edges in the strategy graph $G_L$.

As discussed in Section~\ref{sec:delta-case}, the logical structure of~\cite{michalewski_delta} is not sound. In this work we repair the gap by providing a~direct proof of existence of certain edges in the strategy graph $G_L$. For that, we introduce a~new concept of an \emph{essential node} (see Definition~\ref{def:essential}) and a~combinatorial lemma about sections of matrices, see Lemma~\ref{lem:multisections}.

\subsection{Further work}

There are still natural classes of languages within $\bdelta{2}$ that await characterisations.

\paragraph*{\bf Equations for finite levels.} In this work we remark that the finite levels of the difference hierarchy can be characterised using sentences of \mso. It remains open whether these finite levels correspond to equations (or their ordered variants) in the syntactic algebra of the language.

\paragraph*{\bf Bounds for infinite levels.} In~\cite{duparc_murlak_operations} it has been proved that if $L_1$ and $L_2$ are regular tree languages with Wadge rank $\alpha_1$ and $\alpha_2$ respectively, then we can build regular languages $L_1 \oplus L_2$ and $L_1 \otimes \omega$ with Wadge rank $\alpha_1 + \alpha_2$ and $\alpha_1 \times \omega$ respectively. Hence, every Wadge degree with Wadge rank below $\omega^\omega$ is inhabited by regular languages. In particular, there are examples of regular tree languages in $\bdelta{2}\setminus\boolcomb{1}$. However, as the Wadge hierarchy over $\bsigma{1}$ has length $\omega_1$ and there are only countably many regular languages, there must exist a~bound on the levels of the Wadge hierarchy occupied by regular languages. The ordinal $\omega^\omega$ is a~natural candidate for the upper bound, as all the examples of the regular languages inside $\bdelta{2}$ exhaust exactly the first $\omega^\omega$ levels of the Wadge hierarchy.


\paragraph*{\bf Characterisations of transfinite levels.} Not only we don't know how many transfinite levels of the difference hierarchy are occupied, but there is no effective characterisation of any transfinite level known. Thus, one can ask for instance how to verify if the Wadge degree $\eta$ of a~regular tree language verifies $\omega\leq \eta< \omega\cdot 2$.

\paragraph*{\bf Higher in the hierarchy.} The operations of the algebras used in this paper seem to be suited exactly to the classes up to $\bdelta{2}$. However, one might imagine enriching their algebraic structure just a~bit in such a~way to cover one more level of the Borel hierarchy. This motivates the following problem.

\begin{prob}
Is there a~natural algebraic structure extending the algebras given in this paper which allow an~equational characterisation of the Borel class $\bdelta{3}$.
\end{prob}

\subsection{Acknowledgements} The authors would like to express their gratitude to Henryk Michalewski for a~number of insightful discussions on the subject. Additionally, the authors than anonymous referees for their helpful comments and suggestions.

\bibliographystyle{alpha}
\bibliography{biblioBP}

\end{document}